\newcommand{\cmark}{$\ding{51}$}%
\newcommand{\xmark}{$\ding{55}$}%
\newtheorem{theorem}{Theorem}[section]
\newtheorem{lemma}[theorem]{Lemma}
\theoremstyle{definition}
\newtheorem{example}[theorem]{Example}
\newtheorem{definition}[theorem]{Definition}
\theoremstyle{remark}
\newtheorem{remark}[theorem]{Remark}
\newcommand{\pr}{{\mathbb P}}
\newcommand{\ep}{{\mathbb E}}
\DeclareMathOperator{\bin}{Bin}
\newcommand{\KK}{{\mathcal K}}
\newcommand{\Dset}{{\mathcal D}}
\newcommand{\vc}[1]{{\boldsymbol #1}}
\newcommand{\ol}[1]{\overline{ #1}}
\newcommand{\Perr}{\pr({\rm err})}
\newcommand{\PERR}[2]{\pr({\rm err}; \mbox{{\rm #1}}, #2)}
\newcommand{\Psuc}{\pr({\rm suc})}
\DeclareMathOperator{\PD}{PD}
\newcommand{\wh}{\widehat}
\newcommand{\Tmag}[1]{T^*_{\rm #1}}
\newcommand{\defvec}{{\mathcal U}}
\newcommand{\AAA}[1]{{\mathcal A}_{\vc{#1}}}
\newcommand{\AAS}[1]{|\AAA{#1}|}
\newcommand{\suc}{{\rm suc}}
\newcommand{\II}{{\mathbb I}}
\newcommand{\pc}{\psi}
\newcommand{\new}{\nu}  
\newcommand{\aequi}{\simeq}  
\newcommand{\Tinf}{T_\infty}
\newcommand{\Tfin}{T_{\mathrm{fin}}}
\newcommand{\ee}{\mathrm{e}}
\title{Small error algorithms for tropical group testing}
\author{Vivekanand Paligadu\thanks{School of Mathematics, University of Bristol, Bristol, UK.} \and Oliver Johnson\thanks{Corresponding author: School of Mathematics, University of Bristol, Fry Building, Woodland Road, Bristol, BS8 1UG, UK.
Email: {\tt O.Johnson@bristol.ac.uk}} \and Matthew Aldridge\thanks{School of Mathematics, University of Leeds, Leeds, UK}
}
\begin{document}
\maketitle

\begin{abstract}
We consider a version of the classical group testing problem motivated by PCR testing for COVID-19. In the so-called tropical group testing model, the outcome of a test is the lowest cycle threshold (Ct) level of the individuals pooled within it, rather than a simple binary indicator variable. We introduce the tropical counterparts of three classical non-adaptive algorithms (COMP, DD and SCOMP), and analyse their behaviour through both simulations and bounds on error probabilities. By comparing the results of the tropical and classical algorithms, we gain insight into the extra information provided by learning the outcomes (Ct levels) of the tests. We show that in a limiting regime the tropical COMP algorithm requires as many tests as its classical counterpart, but that for sufficiently dense problems tropical DD can recover more information with fewer tests, and can be viewed as essentially optimal in certain regimes.
\end{abstract}

\section{Introduction}

\subsection{Group testing and tropical group testing}

Group testing is the problem of reliably recovering a subset $\KK$ of `defective' items from a population of $N$ items, using a relatively small number $T$ of so-called pooled tests to test multiple items at the same time. In the classical noiseless setting, the outcome of such tests is binary, indicating whether or not there is at least one defective item in the pool.

Group testing was initially introduced by Dorfman \cite{dorfman} in the context of testing for syphilis. It has now developed into a combinatorial and algorithmic problem with a rich history \cite{du,aldridge2019}, where one seeks to understand the trade-offs between $N$, $|\KK|$ and $T$, and to understand how $\KK$ can be efficiently recovered using computationally feasible algorithms. 

Group testing has applications to many fields such as biology, manufacturing, communications and information technology, as described in \cite[Section 1.7]{aldridge2019}. This framework also recently gained considerable attention during the COVID-19 pandemic -- see \cite{aldridge2022} for a review of its use in this context, and \cite{yelin2020} for an early proof of concept that pooled testing can detect the presence of COVID-positive individuals. In general, the efficiency of group testing makes it useful when using PCR machines to test large populations for rare conditions.

In this paper, however, we consider not the standard binary group testing model, but rather the `tropical group testing' model of Wang \emph{et al.}\ \cite{wang}. While the binary model considers only `positive or negative' outcomes, the tropical group testing model considers different levels of strength of infection (or `seriousness of defectiveness'). In the tropical model, the output is given by the strongest infected (or `most defective') input.

We have three reasons for wanting to study further the tropical model of Wang \emph{et al.}\ \cite{wang}. First, the tropical model is a fascinating mathematical puzzle in its own right, where the generalisation of the binary group testing model creates interesting problems requiring creative solutions. We are pleased to be able to build on the work of \cite{wang} to propose and rigorously analyse new algorithms for the tropical model.

Second, tropical group testing is an excellent model for situations where the performance of a set of items is determined by the most extreme performance of a single item in the set. Consider, for example, the capacity of a route through a network, which is typically limited by the lowest bandwidth link on that route. If $u_i$ is the bandwidth of link $i$, and $x_{ti} = 1$ denotes that link $i$ is on route $t$, then the capacity of route $t$ would be given by $y_t = \min\{u_i : x_{ti} = 1\}$. This is precisely the tropical group testing model defined below in Definition \ref{def:outcome}.



Third, tropical group testing models how information regarding strengths of infections is combined in PCR testing, for example when testing for COVID. On one side, traditional group testing only considers binary positive-or-negative outcomes, so it does not represent how samples with different strengths of infection combine. On the other side, more complicated models (like that we describe in Subsection \ref{subsec:approx} below) do not seem amenable to theoretical analysis, only to exploration via simulation. In Subsection \ref{subsec:approx} we explain how tropical group testing can be formally seen as an approximation of a more detailed, but less tractable, model of PCR testing. In this sense, we feel that the tropical model of \cite{wang} hits the `sweet spot' between keeping the analytical tractability of binary group testing, while still adopting a model that reflects (even if it does not perfectly capture) how PCR testing works.

Other COVID-inspired pooled testing schemes that are designed to take account of quantitative information through numerical values of test outcomes include Tapestry \cite{ghosh2020,ghosh2021}, the work of Bharadwaja and Murthy \cite{bharadwaja}, and the two-stage adaptive scheme of Heidarzadeh and Narayanan \cite{heidarzadeh}.

\subsection{The tropical model as an approximation of PCR testing} \label{subsec:approx}

To motivate the tropical model, we will consider a detailed model the PCR test, and show how tropical group testing can be viewed as an approximation of this model. Readers already convinced by the utility of the tropical model who want to read about our new contributions can skip to Subsection \ref{subsec:contributions}.

In the PCR test (see, for example, \cite{white1989polymerase}), the viral genetic material is extracted from a sample and amplified in `cycles' of a process using polymerase enzymes. In each cycle, the quantity of this material is multiplied by some factor (usually doubled, or slightly less). The presence of the viral genetic material is detected by fluorescence, indicating a positive result if the quantity present exceeds a certain amount. The \emph{cycle threshold} (Ct) value is the number of cycles after which fluorescence is observed -- this represents the number of amplifications (or doublings) required to achieve detection. Hence when using PCR to test for COVID \cite{rao2020narrative}, a lower Ct value indicates a higher concentration of viral genetic material in the sample. 

The following mathematical description of the PCR protocol follows closely that by Bharadwaja and Murthy \cite{bharadwaja}. Suppose a single sample has initial viral concentration $z$. In each cycle it will be amplified by a factor of $1 + q$, say. (Bharadwaja and Murthy \cite{bharadwaja} suggest values of $q \in [1/2,1]$ are typical.) So the Ct number, the number of cycles $u$ required for the amplified viral concentration to cross a given threshold $\tau$, will satisfy $(1+q)^u z = \tau$ (see \cite[eq. (1)]{bharadwaja}).  Equivalently, $z = \tau(1+q)^{-u}$. 

This extends to a model where multiple items with different viral concentrations $z_i = \tau(1+q)^{-u_i}$ are tested according to a binary pooling matrix $\vc{x} = (x_{ti})$, where $x_{ti} = 1$ means item $i$ is in pool $t$. (See Section \ref{sec:notation} for a formal description of the test matrix.) The total initial viral concentration for pool $t$ will be
\[ \sum_{i : x_{ti} = 1} z_i = \tau \!\sum_{i : x_{ti} = 1} (1+q)^{-u_i} . \]
This will cross the threshold $\tau$ after $y_t$ cycles, where
\[ (1+q)^{y_t} \, \tau\! \sum_{i : x_{ti} = 1} (1+q)^{-u_i} = \tau . \]
Rearranged to make $y_t$ the subject, we have
\begin{equation}  y_t = -\log_{1+q} \bigg(\sum_{i: x_{ti} = 1} (1+q)^{-u_i}\bigg) . \label{eq:bhar2}
\end{equation}
This is the model of \cite[eq. (4)]{bharadwaja}, except for that also including an additive noise term $\epsilon_t \sim N(0,\sigma^2)$.

It is possible to directly study the model \eqref{eq:bhar2} empirically. For example, Bharadwaja and Murthy \cite{bharadwaja} describe several algorithms based on gradient descent, for both known and unknown $q$, and validate them empirically for prevalence levels found during the Covid-19 pandemic.  

However, an alternative perspective comes through the way that Ct values are explicitly combined together in the so-called \emph{tropical group testing} model of Wang \emph{et al.}\ \cite{wang}, which takes the outcome of each test to be equal to the minimum value of $u_i$ among items appearing in it (see Definition \ref{def:outcome} below). This construction can be motivated using the `tropicality approximation' to \eqref{eq:bhar2}:
\begin{equation}  y_t = -\log_{1+q} \bigg(\sum_{i: x_{ti} = 1} (1+q)^{-u_i}\bigg)  \approx \min_{i:x_{ti} = 1} u_i. \label{eq:trop}
\end{equation}
This approximation appears to render the model more amenable to mathematical analysis. It seems to be considerably harder to formally prove performance guarantees in the setting of \eqref{eq:bhar2} and \cite{bharadwaja} than the tropical model of \cite{wang} and this paper.

The tropicality approximation \eqref{eq:trop} is exact when a pool contains $0$ or $1$ infected items (those with $u_i < \infty$), and is very accurate when the viral concentrations of $2$ or more infected items differ by a reasonable amount, but can be slightly inaccurate if $2$ or more items have very similar viral concentrations. We believe that the tropical model strikes a good balance between maintaining the analytical tractability of standard binary group testing while still modelling the effect of differing viral concentrations as in \eqref{eq:bhar2}. In that sense, the contribution of our paper is to prove rigorous bounds in a model which, when it comes to modelling PCR tests, is more realistic than the standard group testing model studied in the vast majority of the group testing literature (see \cite{aldridge2019} and elsewhere).

Since the outcome of a tropical group test is  the minimum Ct value of the infected individuals contained within it, strongly infected items with low Ct values tend to dominate the test outcomes, potentially concealing weakly infected individuals with high Ct values. In an attempt to address this limitation, Wang \emph{et al.}\ introduce the concept of a `tropical code' involving `delays' (controlled by a `delay matrix'), which correspond to adding some samples partway through a PCR run, so they get amplified fewer times. With this approach, \cite{wang} describes adaptive and non-adaptive constructions.  

\subsection{Our contributions} \label{subsec:contributions}

The key contribution of this paper is the development and analysis of non-adaptive algorithms in the tropical setting to recover the Ct values of defective items under a small-error criterion, and to demonstrate gains in performance relative to the classical group testing setting. These algorithms are tropical generalisations of the classical COMP, DD and SCOMP algorithms \cite{aldridge2014,aldridge2019}.

A major strength of our algorithms and results is that they do not require the use of a delay matrix, meaning that all samples are added to the machine simultaneously at the start of the process. This means the tests can be run in parallel simultaneously without needing to pause the machine to load further samples, making the resulting schemes easy to implement in practice on a PCR machine. (Strictly speaking, one could say adding a sample at the beginning of the process is `delay $0$' and not adding a sample at all is `delay $\infty$'; so one could say our methods involve no \emph{nontrivial} delays.)

In particular, we identify a sharp threshold for the performance of the tropical DD algorithm (see Section \ref{sec:tropdd}) in certain parameter regimes. In Theorem \ref{thm:DDachievability} we give an achievability result by showing  that in a limiting regime where the number of tests 
\begin{equation} T > \max\{\Tinf, T_d, T_{d-1}, \dots, T_1 \} \label{eq:headline} \end{equation}
then the error probability of this algorithm tends to zero. Here the $T_r$ are explicit expressions in terms of the total number of items with particular defectivity levels. Roughly speaking, $\Tinf$ tests are required to find most (but not necessarily all) of the non-defective items, while $T_r$ tests are required to find all the defective items with Ct value $r$. Further in Remark \ref{rem:improve}, we argue that in a certain `uniform' case, this result represents an explicit (albeit second-order) improvement over the performance of classical DD.

In contrast, in Theorems \ref{thm:DDconverse} and \ref{thm:genconv} we show that in the regime where
\[ T < \max\{T_d, T_{d-1}, \dots, T_1 \} \]
then the error probability of tropical DD and even of an optimal algorithm tends to $1$. Since apart from the absence of $T_\infty$ this is identical to the expression \eqref{eq:headline}, we can conclude that our tropical DD algorithm is asymptotically optimal in parameter regimes where $T_\infty$ does not give the maximum in \eqref{eq:headline}.

The structure of the rest of the paper is as follows. In Section \ref{sec:notation} we introduce the notation used in the paper, and formalise the tropical group testing model. In Section \ref{sec:algo} we describe the three tropical algorithms we will study, and briefly mention some of their basic properties. Section \ref{sec:simulation} gives simulation results indicating the performance of these algorithms. We analyse the theoretical performance of the tropical COMP algorithm in Section \ref{sec:analysis_COMP} and of the tropical DD algorithm in Sections \ref{sec:analysis_DD} and \ref{sec:DDconverse}.

\section{Notation and tropical model} \label{sec:notation}

We adapt the classical group testing notation and algorithms of \cite{aldridge2014,aldridge2019} to the tropical group testing model of \cite{wang}. The tropical model replaces the `OR' operation of standard group testing with a `$\min$' operation, in a way motivated by the use of PCR testing for COVID-19.

In more detail, for a fixed positive integer value $d$ we define the set $\Dset = \{ 1, 2, \ldots, d, \infty\}$ of possible defectivity (or infection) levels. Here level $\infty$ represents the state of being not defective, and levels $1, 2, \ldots, d$ represent different levels of defectivity. As with Ct values in PCR testing, the lower the numerical value of the defectivity level, the stronger the infection; and the higher the numerical value of the defectivity level, the weaker the infection. The exact values represented in the set do not matter from a mathematical point of view -- while it may be closer to medical practice to use Ct values such as $\{ 20, 21,\ldots, 40, \infty \}$, the choice $\{1, 2, \dots, d, \infty\}$ provides notational convenience.

Given $N$ items, we represent the defectivity level $U_i \in \Dset$ of each item $i$ as a vector $\vc{U} = (U_1, \ldots, U_N) \in \Dset^N$. We write $\KK_r = \{ j: U_j = r \}$ for the set of items at each level $r \in \Dset$, and write $\KK = \bigcup_{r=1}^d \KK_r$ for the total set of defective items, with finite $U_i$. We write $K_r = | \KK_r |$ for the size of each set, $K = \sum_{r=1}^d K_r = | \KK|$ for the total number of defective items, and adopt the notation $\vc{K} = (K_1, \ldots, K_d)$. For $1 \leq r \leq d$ and $1 \leq s \leq K_r$, we will write $i(r,s)$ for the $s$th item in set $\KK_r$ (labelled arbitrarily within $\KK_r$).

We assume a combinatorial model: that is, we fix set sizes $K_1, \ldots, K_d$ in advance and assume that the sets $\KK_r$ are disjoint and chosen uniformly at random among sets which satisfy these constraints. We will sometimes consider a limiting sequence of problems where $N \rightarrow \infty$ with $K \aequi N^\alpha$ for some fixed $\alpha \in (0,1)$ and $K_i \aequi \theta_i K$ for some $\theta_i$ with $\sum_{i=1}^d \theta_i = 1$.

We use a non-adaptive testing strategy, where we fix the whole test design in advance. We represent the test design in a binary $T \times N$ test matrix $\vc{x}$, with the standard convention that $x_{ti} = 1$ means that item $i$ appears in test $t$ and $x_{ti}=0$ means that item $i$ does not appear in test $t$. Our use of non-adaptive strategies in this context is motivated by the fact  that PCR tests can be performed in parallel using plates with large numbers of wells (such as $96$ or $384$) -- see for example \cite{erlich2} -- meaning that the test strategy needs to be designed in advance.

We now describe the outcome of a so-called tropical group test.

\begin{definition} \label{def:outcome}
\emph{Tropical group testing} is defined by the outcome $Y_t$ of test $t$ being given by the lowest defectivity level $U_i$ among items $i$ that appear in the test:
\begin{equation} \label{eq:outcome}
Y_t = \min_i \{ U_i: x_{ti} = 1 \}.
\end{equation}
\end{definition}

For $d=1$, there are only two defectivity levels possible for an item $i$, namely $U_i = 1$ (defective) and $U_i = \infty$ (non-defective). In this case, Definition \ref{def:outcome} reduces to Dorfman's standard binary group testing model \cite{dorfman}, with the outcome of a negative test $t$ denoted by $Y_t = \infty$ (rather than the usual $Y_t = 0$). We refer to this as `classical group testing'.

For any value of $d$, if a test contains no defective items (that is, if $U_i = \infty$ for all items $i$ in the test) then the outcome is $Y_t = \infty$, which we regard as a negative test, just as in classical group testing. However, unlike classical group testing, we also receive information about the defectivity levels of the items through the outcomes of positive tests being a number from $1$ to $d$.

In order to analyse tropical group testing, we make some definitions that will be useful, and which extend the definitions and terminology of \cite{aldridge2014}.

\begin{definition} \label{def:mudef}
Write $\mu_i$ for the highest outcome of any test that item $i$ appears in:
\begin{equation} \label{eq:mudef}
\mu_i \coloneqq \max_t \{ Y_t : x_{ti} = 1 \}. \end{equation} 
If item $i$ is not tested, so that $\{ Y_t: x_{ti} = 1 \} = \emptyset$, we use the convention $\mu_i \coloneqq 1$.
\end{definition}

A key deduction is that $\mu_i$ is the lowest possible defectivity level for item $i$.

\begin{lemma} \label{lem:deduct}
For each item $i$, we have $U_i \geq \mu_i$.

In particular, if $\mu_i = \infty$ (that is, if the item appears in a negative test) then we can recover with certainty that $U_i = \infty$.
\end{lemma}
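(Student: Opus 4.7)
The plan is to argue directly from the two relevant definitions. Fix any item $i$. First I would consider the case where $i$ appears in at least one test, i.e.\ $\{t : x_{ti} = 1\} \neq \emptyset$. For each such test $t$, Definition \ref{def:outcome} gives
\[ Y_t = \min_j \{ U_j : x_{tj} = 1 \} \leq U_i, \]
simply because $i$ is one of the indices over which the minimum is taken. Taking the maximum over all such tests preserves the inequality, so by Definition \ref{def:mudef} we obtain $\mu_i = \max_t \{ Y_t : x_{ti} = 1 \} \leq U_i$, which is the claim.

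Next I would dispose of the degenerate case where item $i$ appears in no tests at all. Here Definition \ref{def:mudef} stipulates $\mu_i = 1$, and since $U_i \in \Dset = \{1, 2, \ldots, d, \infty\}$ we automatically have $U_i \geq 1 = \mu_i$. Combining the two cases gives $U_i \geq \mu_i$ in general.

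For the final statement, I would observe that $\infty$ is the maximum element of the ordered set $\Dset$, so the inequality $U_i \geq \mu_i = \infty$ forces $U_i = \infty$, meaning item $i$ is non-defective. The only subtlety worth flagging is that this mild asymmetry in the convention $\mu_i = 1$ for untested items is natural: without any observations, the weakest possible lower bound on $U_i$ is the smallest element of $\Dset$. I do not anticipate any real obstacle here; the lemma is essentially a direct unpacking of the definitions, and the proof is a couple of lines.
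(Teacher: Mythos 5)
Your proof is correct and takes essentially the same approach as the paper's: handle the untested case via the convention $\mu_i = 1$, and otherwise note that each test outcome $Y_t$ with $x_{ti}=1$ is a minimum over a set containing $U_i$, so taking the maximum over such tests gives $\mu_i \leq U_i$. No gaps.
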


\begin{proof}
If an item $i$ is not tested at all, then by Definition \ref{def:mudef} we know that $\mu_i = 1$, and so the result trivially holds.

Otherwise, if an item $i$ is tested, then for each $t$ such that $x_{ti} = 1$, by Definition \ref{def:outcome}, we know that $U_i \geq Y_t$. So $U_i \geq 
\max_t \{ Y_t: x_{ti} = 1 \} = \mu_i$. 
\end{proof}

\begin{definition} \label{def:keydefs} We define the following:
\begin{enumerate}
\item \label{it:dndr} For each $1 \leq r \leq d$, we refer to an item $i$ that has $\mu_i = r$ as $\PD(r)$ (`Possibly Defective at levels $\{r, \ldots, d, \infty \}$').
\item \label{it:intmas} For $r \in \Dset$, we say that an item $i$ of defectivity level $U_i = r$ is {\em intruding} if it never appears in a test of outcome $r$ (in which case strict inequality $U_i > \mu_i$ holds in Lemma \ref{lem:deduct}).
\item \label{it:hdef}
For $r \in \Dset$, write $H_r$ for the number of tested non-defective items in $\PD(r)$ (those that have $\mu_i = r$). For convenience, also define $H_0$ to be the number of untested non-defective items.
\end{enumerate}
\end{definition}

 If $d=1$, then, in the notation of \cite{aldridge2014}, the number $G$ of intruding non-defectives (i.e.\ non-defectives that don't appear in any negative tests) corresponds here to those items $i$ with $\mu_i=1$, tested or untested; so $G$ in \cite{aldridge2014} corresponds to $H_0 + H_1$ here.

To aid understanding, it can be helpful to sort the rows and columns of the test matrix as illustrated in Figure \ref{fig:sorted}. 

\begin{figure}
\begin{center}
$$ \begin{pNiceArray}{cccc|cccc}[first-row]
K_1 & K_2 & \cdots & K_d & \cdots & H_{d-1} & H_d & H_\infty  \\
+1 & ? & \cdots & ? & \cdots & ? & ?  & ? \\
0  & +1 & \cdots & ? & \cdots & ? & ?  & ? \\
 &  &  &  & \vdots &  &  &    \\
0  & 0 & \cdots & ? & \cdots & +1 & ? & ?  \\
0  & 0 & \cdots & +1 & \cdots & 0 & +1 & ?  \\
0  & 0 & \cdots & 0 & \cdots &  0 & 0 & +1 
\end{pNiceArray} 
\;\; \Longrightarrow \;\; \begin{pNiceMatrix}[first-row]
\vc{Y} \\ 1 \\ 2 \\ \vdots \\ d-1 \\ d \\ \infty
\end{pNiceMatrix} .$$
\caption{Schematic illustration of test matrix and outcomes sorted into block form. Here a $0$ represents a submatrix of all zeroes, a $+1$ represents a submatrix which has at least one entry equal to $1$ in each column, and $?$ represents a submatrix which could be of any form. The defective items are sorted by level to the left of the vertical line. The column labels above the matrix represents the number of elements of each type; the vector represents the outcomes of the test. \label{fig:sorted}}
\end{center}
\end{figure}

The algorithms we describe in Section \ref{sec:algo} will be effective for a variety of matrix designs. However, as in \cite{aldridge2014}, in the theoretical analysis in Sections \ref{sec:analysis_COMP}--\ref{sec:DDconverse} we assume that the matrix $\vc{x}$ is sampled according to a Bernoulli design with parameter $p$; that is, that the elements $x_{ti}$ are equal to $1$ independently of one another with a fixed probability $p$.  As in \cite[Section 2.1]{aldridge2019}, we consider a probability $p = \new/K$ for some fixed $\new$. 

In fact, as justified in \cite{aldridge2014} and in Section \ref{sec:simulation} it is often reasonable to take $\new=1$. It remains possible that some other choice may be better in some situations, although simulation evidence in Figure \ref{fig:diffp} shows that the performance of our algorithms is relatively robust to choices of $\new$ close to 1. This means that while in theory we need to know the number of defective items in the sample to design the matrix, for practical purposes it is enough to have a good estimate of this number.

The paper \cite{johnson2019} proves that performance is improved in the classical case when using matrix designs with near-constant column weights $L = \lfloor \nu T/K \rfloor $, and simulation evidence in Figure \ref{fig:diffdesigns} suggests that the same might well be true in the tropical case. However the analysis involved in \cite{johnson2019} is significantly more complicated than that in \cite{aldridge2014}, so here we restrict ourselves to the Bernoulli case for the sake of simplicity of exposition, and leave alternate matrix designs for future work.

\section{Description of tropical algorithms} \label{sec:algo}

\subsection{General remarks concerning algorithms}
In this section, we describe three algorithms which estimate the true vector of defectivity levels $\vc{U}$, given the test design matrix $\vc{x}$ and the vector of test outcomes $\vc{Y}$. These are the tropical COMP, tropical DD and tropical SCOMP algorithms, adapted from the classical algorithms of the same names in \cite{chan2011,aldridge2014} (see also \cite[Chapter 2]{aldridge2019} for a more detailed description).

We first define what is meant by an algorithm in this setting.

\begin{definition} A decoding (or detection) algorithm is a function $\wh{\vc{U}}: \{0,1\}^{T \times N} \times \Dset^T \to \Dset^N$ which estimates the defectivity level of each of the items, based only on knowledge of the test design $\vc{x}$ and outcomes $\vc{Y}$.
\end{definition}

We write $\Perr$ for the error probability of an algorithm, and $\Psuc = 1- \Perr$ for the success probability. We define 
\begin{equation} \label{eq:success}
\Perr = \pr( \wh{\vc{U}} \neq \vc{U})
\end{equation}
to be the probability that the algorithm fails to recover all the defectivity levels exactly, where the randomness comes through the design of the matrix and the value of $\vc{U}$ itself. Sometimes for emphasis we will include the name of the algorithm and the number of tests, for example by writing $\PERR{DD}{T}$.

Recovering $\vc{U}$ exactly represents a strong success criterion for this problem. For example, in some cases, we might be happy to simply recover the defective set $\KK = \{ i: U_i < \infty \}$. We later show that recovering $\vc{U}$ and recovering $\KK$ represent equivalent success criteria for tropical DD and tropical SCOMP, but not for tropical COMP. From a clinical point of view, since lower Ct levels are generally associated with higher infectiousness \cite{rao2020narrative}, it might be sufficient to recover all the items with defectivity level below a certain threshold $t$, that is to find $\bigcup_{r < t} \KK_r = \{ i: U_i < t \}$.

In this setting, we say that a \emph{false positive error} is an error of underestimating the defectivity level $U_i$ of an item $i$ -- that is, of setting $\wh{U}_i < U_i$ -- and a \emph{false negative error} is an error of overestimating the defectivity level $U_i$ of an item $i$ -- that is, of setting $\wh{U}_i > U_i$.

In the remainder of this section, we define the tropical COMP (Subsection \ref{sec:tropCOMP}), tropical DD (Subsection \ref{sec:tropdd}) and tropical SCOMP (Subsection \ref{sec:tropSCOMP}) algorithms as tropical equivalents of their established classical counterparts. All of these algorithms are relatively simple: they do not require exhaustive search over possible values of $\vc{U}$ (in contrast to the classical SSS algorithm \cite{aldridge2014}, for example), can be implemented with a small number of passes through the data, and require an amount of storage which is proportional to the number of items and tests. 

Despite this simplicity, in the classical case, the DD algorithm has performance close to optimal for certain parameter ranges. This can be seen by comparing \cite[Eq.~(1.1), (1.2)]{coja-oghlan20}, which show that DD under a constant column weight design achieves an asymptotic performance which matches that achievable by any algorithm and any test design in the case where $K \aequi N^\alpha$ and $1/2 \leq \alpha < 1$.

Also, note that while simulations show that classical SCOMP outperforms classical DD for a range of finite size problems, Coja-Oghlan {\em et al.} \cite{coja-oghlan20a}\ prove that it requires the same number of tests in an asymptotic sense, with SCOMP having the same rate (in the sense of \cite{aldridge2014}) as classical DD.


\subsection{Counting bounds}

For classical group testing, a lower bound on the number of tests required is given by the  so-called `magic number' $\Tmag{class} := \log_2 \binom{N}{K}$, which can be justified on information-theoretic grounds. In fact below this number of tests there is exponential decay in performance of any algorithm, adaptive or non-adaptive, and for any test strategy. Specifically, \cite[Theorem 3.1]{baldassini2013} shows that if $T$ tests are used then in any scenario the success probability for classical group testing satisfies
\begin{equation} \label{eq:countingbound}
\Psuc \leq 2^{-(\Tmag{class}-T)} = \frac{2^T}{\binom{N}{K}}, \end{equation}
sometimes referred to as the counting bound. 

It may not be \emph{a priori} obvious how the difficulty of the tropical decoding problem with success criterion \eqref{eq:success} compares with the corresponding classical problem. In the tropical setting, we receive more information from each test through the more diverse test outcomes, which suggests the problem could be easier; but we also need to recover more information (to find the levels $\vc{U}$),  which suggests the problem could be harder. Nonetheless, if for given parameters any tropical algorithm can demonstrate performance exceeding the classical counting bound \eqref{eq:countingbound} then we can be sure that the corresponding tropical problem is easier than its classical counterpart.

By closely mimicking the proof of the classical counting bound \eqref{eq:countingbound} given in \cite{baldassini2013} we can prove its tropical counterpart.

\begin{theorem} \label{thm:counting}
Write $\Tmag{trop} := \log_{d+1} \binom{N}{\vc{K}}$, where 
\[ \binom{N}{\vc{K}} = \binom{N}{K_1, K_2, \dots, K_d, N-K} = \frac{N!}{K_1! K_2! \cdots K_d! (N-K)!} \]
is the multinomial coefficient. Then
\begin{equation} \label{eq:countingbound2}
\Psuc \leq (d+1)^{-(\Tmag{trop}-T)} = \frac{(d+1)^T}{\binom{N}{\vc{K}}}. \end{equation}
\end{theorem}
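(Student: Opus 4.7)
The plan is to adapt the classical counting-bound argument directly, with the key substitution being that each test outcome now takes values in $\Dset = \{1, 2, \ldots, d, \infty\}$, a set of size $d+1$, rather than a binary set of size $2$. So everywhere $2^T$ appears in the classical argument, we replace it with $(d+1)^T$, and everywhere $\binom{N}{K}$ appears, we replace it with $\binom{N}{\vc{K}}$ since there are now $\binom{N}{\vc{K}}$ possible configurations consistent with the prescribed level sizes under the combinatorial model.

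More concretely, I would first condition on a fixed test matrix $\vc{x}$ and consider the decoder as a deterministic function $\wh{\vc{U}}(\vc{x}, \cdot) : \Dset^T \to \Dset^N$. Since the domain has cardinality at most $(d+1)^T$, the image of this function (the set of outputs the algorithm can ever produce given $\vc{x}$) has cardinality at most $(d+1)^T$. Hence the set of $\vc{U}$ values on which the algorithm can succeed has size at most $(d+1)^T$.

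Next, since $\vc{U}$ is drawn uniformly from the $\binom{N}{\vc{K}}$ vectors in $\Dset^N$ that have the prescribed numbers $K_1, \ldots, K_d$ of items at each defectivity level, the conditional success probability given $\vc{x}$ is bounded by
\[
\pr(\wh{\vc{U}} = \vc{U} \mid \vc{x}) \;\leq\; \frac{(d+1)^T}{\binom{N}{\vc{K}}}.
\]
Taking expectation over $\vc{x}$ preserves the bound, giving \eqref{eq:countingbound2}.

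I do not expect any serious obstacle: the argument is entirely combinatorial and does not depend on the specific form of the tropical outcome rule \eqref{eq:outcome}, only on the size of the output alphabet $\Dset$ and on the uniform distribution of $\vc{U}$ over the combinatorial model. The only minor subtlety is to note that the bound applies to any algorithm (adaptive or non-adaptive) and any test design, since the counting step treats $\vc{x}$ as fixed and merely bounds the image size of the decoding map; but for our purposes the non-adaptive version stated in the theorem suffices.
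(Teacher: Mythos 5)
Your proposal is correct and is essentially the same counting argument the paper uses (adapted from Baldassini \emph{et al.}): both rest on the fact that there are at most $(d+1)^T$ possible outcome vectors, while $\vc{U}$ is uniform over the $\binom{N}{\vc{K}}$ admissible configurations. The paper phrases this via the inverse images $\AAA{y}=\theta^{-1}(\vc{y})$ and the optimal decoder's success probability $1/\AAS{y}$, whereas you bound the image of the decoding map directly; this is a cosmetic difference, and your version is if anything slightly cleaner for the deterministic decoders the paper defines.
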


\begin{proof} See Appendix \ref{sec:counting}. \end{proof}
Writing $\binom{K}{\vc{K}} = K!/(K_1! K_2! \ldots K_d!)$ and $H(\vc{\theta}) = -\sum_{i=1}^d \theta_i \log_2 (\theta_i)$, we expand
\begin{equation}
\Tmag{trop} =  \log_{d+1} \binom{N}{K} + \log_{d+1} \binom{K}{\vc{K}} \aequi \frac{\Tmag{class}}{\log_2(d+1)} + K \frac{H(\vc{\theta})}{\log_2(d+1)}. \label{eq:tropmag}
\end{equation}
Compared with the classical case, the scaling factor $1/\log_2(d+1) < 1$ on the first term of \eqref{eq:tropmag} represents the fact that we potentially gain more information through each test, while the second additive term represents the extra information we are required to recover.

\subsection{Tropical COMP} \label{sec:tropCOMP}

We now describe the tropical COMP algorithm, which extends the classical COMP algorithm described in \cite{chan2011} (see also \cite{chan2014}) -- although the idea of the algorithm dates back at least to the work of Kautz and Singleton \cite{kautz}.

We first describe the classical COMP algorithm, which simply declares any item that appears in a negative test as non-defective. All other items are declared defective. In the notation of this paper, classical COMP can be described in the following way. For each item $i$ with $\mu_i = \infty$, we set $\wh U_i = \infty$; otherwise, $\mu_i = 1$ and we set $\wh U_i = 1$. In other words, we set $\wh U_i = \mu_i$ for each item $i$. 

The same rule $\wh U_i = \mu_i$ can also be used in tropical group testing. This is what we call the tropical COMP algorithm.

\begin{algorithm}[H]
\SetAlgoLined
\KwIn{Test design matrix $\vc{x}$ and vector of test outcomes $\vc{Y}$}
\KwOut{Estimated vector of defectivity levels $\wh{\vc{U}}$}
\lFor{\textnormal{each item} $i$}{set $\wh{U}_i = \mu_i$}
\caption{Tropical COMP algorithm}
\end{algorithm}

While both classical and tropical COMP mark items appearing in negative tests as non-defective, the tropical COMP algorithm further classifies items into estimated defectivity levels. Note that the two algorithms operate identically when $d=1$, and have some analogous properties in general. To aid terminology, we first define the notion of unexplained tests in this setting. 

\begin{definition} \label{def:explain} 
Fix a test matrix $\vc{x}$ and an estimate $\wh{\vc{U}}$ of $\vc U$. Write
  \[ \wh{Y}_t = \min_i \{ \wh U_i: x_{ti} = 1 \} \]
to be the outcome of test $t$ using matrix $\vc{x}$ if the true defectivity vector were equal to $\wh{\vc{U}}$. We say that test $t$ is \emph{unexplained} by $\wh{\vc{U}}$ if $\wh{Y}_t \neq Y_t$, where $Y_t$ is the actual test outcome, or \emph{explained} if $\wh{Y}_t = Y_t$.

We call an estimate vector $\wh{\vc{U}}$ a \emph{satisfying vector} if it explains all $T$ tests.
\end{definition}

The terminology `satisfying vector' here is the tropical group testing equivalent of the classical group testing notion of a satisfying set \cite{aldridge2014, aldridge2019}. For classical COMP, the estimate given is a satisfying set \cite[Lemma 2.3]{aldridge2019}) -- indeed, the largest satisfying set. 
We have a similar result for tropical COMP.

\begin{lemma} \label{rmk:COMPsatisfying} 
The estimate $\wh{\vc{U}}^{\mathrm{COMP}}$ given by tropical COMP is a satisfying vector.

Further, $\wh{\vc{U}}^{\mathrm{COMP}}$ is the least satisfying vector, in that if $\vc{V} \in \Dset^N$ is also a satisfying vector, then $U^{\text{COMP}}_i  \leq V_i$ for all items $i$.
\end{lemma}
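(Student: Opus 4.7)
The plan is to prove the two assertions separately: first that $\wh{\vc{U}}^{\mathrm{COMP}}$ explains every test, and then that any other satisfying vector dominates it pointwise. Both parts will follow more or less directly from unwinding the definitions of $\mu_i$, of $\wh{Y}_t$, and of what it means for a vector to be satisfying.

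For the satisfying part, I would fix a test $t$ and show the two inequalities $\wh{Y}_t \ge Y_t$ and $\wh{Y}_t \le Y_t$, where $\wh{Y}_t = \min\{\mu_i : x_{ti}=1\}$. The first inequality is immediate: for every item $i$ with $x_{ti}=1$, the quantity $Y_t$ appears in the set over which $\mu_i$ is a maximum, so $\mu_i \ge Y_t$, whence the minimum over such items is also at least $Y_t$. For the reverse inequality, pick an item $i^\star$ attaining $U_{i^\star} = Y_t = \min\{U_j : x_{tj}=1\}$. Then for every other test $s$ containing $i^\star$, the outcome $Y_s = \min\{U_j : x_{sj}=1\}$ is at most $U_{i^\star} = Y_t$, so the maximum defining $\mu_{i^\star}$ is exactly $Y_t$; hence $\wh{Y}_t \le \mu_{i^\star} = Y_t$. (Note that $i^\star$ is automatically tested, so the convention for untested items plays no role here.)

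For the minimality part, suppose $\vc{V} \in \Dset^N$ is any satisfying vector and fix an item $i$. If $i$ is tested, then for each $t$ with $x_{ti}=1$ we have
\[
Y_t = \min_j\{V_j : x_{tj} = 1\} \le V_i,
\]
so taking the maximum over such $t$ gives $\mu_i \le V_i$, i.e.\ $\wh U_i^{\mathrm{COMP}} \le V_i$. If $i$ is untested, then $\wh U_i^{\mathrm{COMP}} = \mu_i = 1$ by convention, and since $1$ is the minimum element of $\Dset$ the inequality is trivial.

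I do not anticipate a real obstacle here; the only small care required is the bookkeeping around untested items and the identification of the witness $i^\star$ for the second inequality in the first part. Everything else is essentially a tautology once the definitions of $\mu_i$, $Y_t$ and $\wh{Y}_t$ are written out side by side.
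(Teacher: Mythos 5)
Your proof is correct and follows essentially the same route as the paper: the satisfying part via the two inequalities $\wh{Y}_t \ge Y_t$ (every $\mu_i$ dominates $Y_t$ for items in the test) and $\wh{Y}_t \le Y_t$ (via a witness item attaining the minimum), and the minimality part by observing that a satisfying $\vc{V}$ forces $V_i \ge Y_t$ for every test containing $i$. Your minimality argument is phrased directly rather than by contradiction as in the paper, and your handling of untested items is equally careful, but the content is the same.
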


\begin{proof}
For the first part, take an arbitrary test $t$ with outcome $Y_t$. All items $i$ included in this test have $U_i \geq \mu_i \geq Y_t$. Further, there must be an item $j$ with $U_j = Y_t$, otherwise the test outcome would be greater than $Y_t$. For that item, $\mu_j = Y_t$. Hence, 
\[ \wh{Y}_t = \min_i \{\mu_i : x_{ti} = 1\} = \mu_j =  U_j = \min_i \{U_i : x_{ti} = 1\} = Y_t , \]
and the test is explained. Since the test $t$ was arbitrary, we have $\wh{\vc{Y}} = \vc{Y}$, and hence $\wh{\vc{U}}^{\mathrm{COMP}}$ explains all the tests. 

For the second part, note that any satisfying vector $\vc{V}$ must have $V_i \geq \mu_i = U^{\text{COMP}}_i$ for all $i$. To see this, consider a vector $\vc{V}$ and item $j$ with $V_j < \mu_j$. Then let $t$ be a test containing item $j$ for which $Y_t = \mu_j$. There must be at least one such test, by the definition of $\mu_j$, unless $j$ is never tested. If $j$ is never tested, then by assumption $V_j \in \Dset$ has $V_j \geq 1 = \mu_j$. For this test $t$,
\[ \min_i \{ V_i : x_{ti} = 1 \} \geq \mu_j > V_j , \]
so $\vc{V}$ is not satisfying.
\end{proof}

We know that classical COMP never makes false negative errors. The same is true for tropical COMP -- recall that we use this terminology to refer to an error of the form $\wh{U}_i > U_i$.

\begin{lemma} \label{rmk:COMPfalse-}
Tropical COMP never makes false negative errors.
\end{lemma}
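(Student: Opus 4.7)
The plan is to reduce the claim directly to Lemma \ref{lem:deduct}. Recall that a false negative error for item $i$ means $\wh{U}_i > U_i$, and that the tropical COMP algorithm is defined by the rule $\wh{U}_i^{\mathrm{COMP}} = \mu_i$ for every item $i$, where $\mu_i$ is the quantity from Definition \ref{def:mudef}.

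First I would invoke Lemma \ref{lem:deduct}, which states precisely that $U_i \geq \mu_i$ for every item $i$ (both in the tested case, where the inequality follows from the min defining $Y_t$ in Definition \ref{def:outcome}, and in the untested case, where $\mu_i = 1$ by convention and $U_i \in \Dset$ trivially satisfies $U_i \geq 1$). Combining this with the definition of tropical COMP gives
\[ \wh{U}_i^{\mathrm{COMP}} = \mu_i \leq U_i \]
for every $i$, so the event $\wh{U}_i > U_i$ never occurs. This completes the argument in essentially one line.

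There is no genuine obstacle here: the lemma is a direct corollary of Lemma \ref{lem:deduct} together with the definition of the algorithm, and the only `content' is recognising that the definition of false negative error is exactly the complement of the inequality established in Lemma \ref{lem:deduct}. I would keep the proof to two or three sentences and not introduce any additional machinery.
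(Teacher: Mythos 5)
Your proposal is correct and takes exactly the same route as the paper: both reduce the claim to Lemma \ref{lem:deduct}, observing that $\wh{U}_i^{\mathrm{COMP}} = \mu_i \leq U_i$ rules out any false negative. Nothing further is needed.
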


\begin{proof}
This follows directly from Lemma \ref{lem:deduct}, which tells us that $U_i \geq \mu_i$, where $\mu_i$ is the tropical COMP estimate.
\end{proof}

For tropical COMP, the success criterion given by \eqref{eq:success} to recover the whole vector $\vc{U}$ is not equivalent to the success criterion of merely recovering the defective set $\KK$. It is true that if any algorithm correctly recovers $\vc{U}$, so that $\wh{\vc{U}}=\vc{U}$, then it also recovers $\KK$ as $\wh{\KK} = \{ i : \wh{U}_i < \infty\} = \KK$. But the following example shows that for tropical COMP the converse does not hold; that is, just because COMP fails to recover $\vc{U}$, that does not necessarily mean it also fails to recover $\KK$:

\begin{example}
    Suppose we have two items, with true defectivity levels $\vc{U} = (1, 2)$. Suppose further that we run just one test, which contains both items, so $\vc{x} = (1, 1)$ and $\vc{Y} = (1).$ Then both items are in just one test with outcome $Y_1 = 1$, so have $\mu_1 = \mu_2 = 1$. Tropical COMP therefore incorrectly estimates $\wh{\vc{U}} = (1, 1) \neq \vc{U}$. However, it does succeed in recovering the defective set $\wh{\KK} = \KK = \{1,2\}$. 
\end{example}

Despite this, we show in Section \ref{sec:analysis_COMP} that tropical COMP asymptotically requires the same number of tests to recover $\vc{U}$ as classical COMP does to recover $\KK$.

\subsection{Tropical DD} \label{sec:tropdd}
We now describe the tropical DD algorithm. This extends the classical DD algorithm introduced in \cite{aldridge2014}, which works in three steps:
\begin{enumerate}
    \item Every item appearing in a negative test is non-defective. (All other items are `possibly defective'.)
    \item If a positive test contains a single possibly defective item, that item is defective.
    \item All remaining items are assumed non-defective.
\end{enumerate}
Tropical DD works the same way, except in step 2, it takes account of the different levels of defectivity in the tropical testing. Recalling that a $\PD(r)$ item is one with $\mu_i = r$, the tropical DD algorithm is as follows:

\begin{algorithm}[H]
\SetAlgoLined
\KwIn{Test design matrix $\vc{x}$ and vector of test outcomes $\vc{Y}$}
\KwOut{Estimated vector of defectivity levels $\wh{\vc{U}}$}
\lFor{\textnormal{each item} $i$ \textnormal{with} $\mu_i = \infty$}{set $\wh{U}_i = \infty$}
\For{\textnormal{each test} $t$ \textnormal{with} $Y_t = r < \infty$}{
  \lIf{\textnormal{there exists only one} $\PD(r)$ \textnormal{item} $i$ \textnormal{in test} $t$}{\textnormal{set} $\wh{U}_i = r$}}
Declare all remaining unclassified items to have $\wh U_i = \infty$\;
\caption{Tropical DD algorithm} \label{alg:tropiDD}
\end{algorithm}

To understand why this algorithm works, consider a test $t$ with outcome $Y_t = r$. Observe that (by Definitions \ref{def:outcome} and \ref{def:mudef} respectively): 
\begin{enumerate}
\item [(a)] test $t$ cannot contain any items $i$ with $U_i < r$, and must contain at least one `special item' $j$ with $U_j = r$;
\item [(b)] every item $i$ appearing in test $t$ has $\mu_i \geq r$.
\end{enumerate}
Suppose that (apart from $j$) all the other items $i$ in test $t$ have $\mu_i > r$, so none of them are in $\PD(r)$. Then we know (by Lemma \ref{lem:deduct}) that each such item has $U_i \geq \mu_i > r$, and cannot be a special item. Hence the remaining item $j$ must be the special item that we seek. In other words, the sole $\PD(r)$ item in the test is marked as definitely defective at level $r$. This mirrors the classical case where if there is a single $\PD$ item in a test, it is marked as definitely defective.

We can think of the problem of finding $\KK$ in the classical case as now being split into sub-problems of finding $\KK_1, \ldots, \KK_d$ in the tropical case. 
It is helpful to think of moving upwards through the rows in the block formulation of Figure \ref{fig:sorted}: 
\begin{enumerate}
\item By examining the tests with outcome $\infty$, we can identify $H_\infty$ non-defective items and remove them from consideration for tests of outcome $r < \infty$. 
\item In general, for $r=d, d-1, \ldots, 1$, by examining all the tests with outcome $r$, we hope to find all the defective items $i(r,1), \ldots, i(r,K_r)$ and to find the $H_r$ non-defective items that are in $\PD(r)$ and remove them from consideration for tests of outcome lower than $r$.
\end{enumerate}

We note that the operation of classical DD is the same as the operation of tropical DD when $d=1$. We know that classical DD never makes false positive errors \cite[Lemma 2.2]{aldridge2019}. The same is true for tropical DD:

\begin{lemma} \label{rmk:DDnofalse+} 
Tropical DD never makes false positive errors. Indeed the only errors it can make is wrongly declaring a defective items of some finite level $U_i = r$ to be non-defective $\wh U_i = \infty$.
\end{lemma}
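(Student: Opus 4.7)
The plan is to examine separately each of the three rules by which Algorithm \ref{alg:tropiDD} assigns a value $\wh{U}_i$, and to show that none of them can produce $\wh{U}_i < U_i$. Two of the three rules only ever output $\wh{U}_i = \infty$, so the main content lies in analysing the middle step, where an item is actively declared defective at some finite level.

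First I would dispose of the two easy cases. The opening rule sets $\wh{U}_i = \infty$ whenever $\mu_i = \infty$, and in that situation Lemma \ref{lem:deduct} already forces $U_i = \infty$, so the assignment is exact. The closing rule sets $\wh{U}_i = \infty$ for every item not yet classified; this satisfies $\wh{U}_i \geq U_i$ trivially, so is never a false positive, and if $U_i < \infty$ then it is precisely the kind of error described in the second sentence of the lemma.

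The main step is the declaration $\wh{U}_i = r$ that occurs when $i$ is the unique $\PD(r)$ item in some test $t$ with outcome $Y_t = r$. I would argue that in this case $U_i = r$ exactly, so again no error is made. By Definition \ref{def:outcome}, the outcome $Y_t = r$ forces test $t$ to contain at least one ``special'' item $j$ with $U_j = r$. Because $j$ appears in $t$, Definition \ref{def:mudef} gives $\mu_j \geq Y_t = r$, while Lemma \ref{lem:deduct} gives $\mu_j \leq U_j = r$; hence $\mu_j = r$, that is, $j$ is itself a $\PD(r)$ item lying in test $t$. By the uniqueness assumption we must have $j = i$, and therefore $U_i = r = \wh{U}_i$. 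I do not foresee a serious obstacle here; the only mild subtlety is combining the upper bound $\mu_j \leq U_j$ from Lemma \ref{lem:deduct} with the obvious lower bound $\mu_j \geq Y_t$ coming straight from the definition of $\mu_j$.

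Putting the three cases together, every assignment made by tropical DD is either exactly correct or of the form $\wh{U}_i = \infty$ with $U_i \leq \infty$. This immediately yields both the absence of false positives and the stated characterisation of the only errors that can occur, namely mistaking a truly defective item for a non-defective one.
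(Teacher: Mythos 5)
Your proposal is correct and follows essentially the same route as the paper: a case analysis of the three steps of Algorithm \ref{alg:tropiDD}, with the only substantive work being the verification that the unique $\PD(r)$ item in a test of outcome $r$ really has $U_i = r$ (which the paper justifies in the discussion preceding the lemma, via the same combination of $\mu_j \geq Y_t$ and $U_j \geq \mu_j$). Your write-up is just a more explicit version of the paper's argument.
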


\begin{proof} 
The first step finds some non-defective items from negative tests, and so is error-free. The second step identifies the sole defective item that can explain the outcome of the test it is in. It is thus also error-free. The final step is the only point at which errors can be made; specifically, false negative errors where a defective item is marked non-defective can occur. 
\end{proof}

For tropical DD, the success criteria of recovering the vector $\vc{U}$ and recovering the defective set $\KK$ are equivalent. We know that if an algorithm recovers $\vc{U}$, then it recovers $\KK$. To prove equivalence of the success criteria, it suffices to show that if tropical DD fails to recover $\vc{U}$, then it fails to recovers $\KK$. This is done in the following paragraph.

Suppose that tropical DD fails to recover $\vc{U}$. Then by Lemma \ref{rmk:DDnofalse+}, the only errors that could have been made are false negative errors where a defective item is wrongly marked non-defective. Hence, tropical DD also fails to recover $\KK$.

\subsection{Tropical SCOMP} \label{sec:tropSCOMP}

We now describe the tropical SCOMP algorithm, which extends the classical SCOMP algorithm introduced in \cite{aldridge2014}. 

Classical SCOMP starts with the estimate given by the DD algorithm, but considers the fact that DD can leave some positive tests unexplained, since they can contain some  defective items whose status we are unsure about. Since each such unexplained test must contain at least one such defective, we might consider possible defective items which appear in many unexplained tests as most likely to be defective. Hence the classical SCOMP algorithm greedily adds such items to the estimated defective set $\hat{\KK}$ until all tests are explained. 

Similarly, the tropical SCOMP algorithm starts with the estimate given by tropical DD. It then greedily adds items to the sets $\hat{\KK}_r$, for each $r$ such that there are unexplained tests of outcome $r$. This is done until all tests are explained.

\begin{algorithm}[H]
\SetAlgoLined
\KwIn{Test design matrix, $\vc{x}$, and vector of test outcomes $\vc{Y}$}
\KwOut{Estimated vector of defectivity levels, $\wh{\vc{U}}$}
Initialize $\wh{\vc{U}}$ as the estimate $\wh{\vc{U}}_{\text{DD}}$ of $\vc{U}$ produced by the DD algorithm\;
\While{\textnormal{unexplained tests exist}}{
    Choose a test outcome $r$ from the unexplained tests\;
    Retrieve all the tests with outcome $r$\;
    Find the $\PD(r)$ item $i$ that occurs the most times in those tests (ties can be broken arbitrarily)\;
    Set $\wh{U}_i = r$ and update the list of unexplained tests\;
}
\caption{Tropical SCOMP algorithm}
\end{algorithm}

Note that tropical SCOMP attempts to solve the sub-problems of finding $\KK_1, \ldots, \KK_d$ that are not solved by tropical DD. The action of classical SCOMP is the same as that of tropical SCOMP when $d=1$.

\begin{remark}
    If tropical DD succeeds, then so does tropical SCOMP. This is because tropical SCOMP starts with the estimate produced by tropical DD. If tropical DD succeeds, then no tests are unexplained and tropical SCOMP also succeeds (cf.\ \cite[Theorem 2.5]{aldridge2019}). 
\end{remark}

We show that the success criteria of recovering $\vc{U}$ and recovering $\KK$ are equivalent for tropical SCOMP. Similar to the case of tropical DD, it suffices to show that if tropical SCOMP fails to recovers $\vc{U}$, then it fails to recovers $\KK$. This is done in the following paragraph.

Suppose that tropical SCOMP fails to recover $\vc{U}$. Then necessarily, tropical DD also fails to recover $\vc{U}$. Then there exists an item $i \in \KK$ such that there are no tests in which it is the only $\PD(\mu_i)$ item. Since tropical SCOMP fails to recover $\vc{U}$, at least one such item $i$ was not chosen to explain the test outcomes of any of the tests that it is in and is marked as non-defective. Hence, $\wh{\KK} \neq \KK$.

\subsection{Comparison of tropical algorithms}

\begin{table}
\centering
\begin{tabular}{lccc}
\hline
\textbf{} & \textbf{satisfying} & \textbf{no false $\boldsymbol +$} & \textbf{no false $\boldsymbol -$} \\ \hline
COMP & $\cmark $ & $\xmark$ & $\cmark$ \\
DD & $\xmark$ & $\cmark$ & $\xmark$ \\
SCOMP & $\cmark$ & $\xmark$ & $\xmark$ \\ \hline
tropical COMP & $\cmark$ & $\xmark$ & $\cmark$ \\
tropical DD & $\xmark$ & $\cmark$ & $\xmark$ \\
tropical SCOMP & $\cmark$ & $\xmark$ & $\xmark$ \\ \hline
\end{tabular}
\caption{Summary of features of algorithms in the classical and tropical case: (i)~whether the output estimate $\wh{\vc{U}}$ is guaranteed to explain all test outcomes; (ii)--(iii)~guarantees on false positives or false negatives.}
\label{tab:algofeatures}
\end{table}

Table \ref{tab:algofeatures} summarises the features of the tropical algorithms, while comparing them to the classical algorithms (cf.\ \cite[Table $2.1$]{aldridge2019}). We now present a worked example which illustrates the operation of the various tropical algorithms:

\begin{example} \label{example:algorithms}
Suppose we use the test design $\vc{x}$ and receive the outcomes $\vc{Y}$ as follows:
\[
\vc{x} = \begin{bmatrix}
1 & 0 & 0 & 0 & 0 & 0 & 0 \\
1 & 0 & 1 & 0 & 0 & 0 & 1 \\
0 & 1 & 0 & 1 & 1 & 0 & 0 \\
0 & 1 & 0 & 0 & 1 & 1 & 0 \\
1 & 0 & 0 & 0 & 1 & 0 & 0 \\
\end{bmatrix}
\qquad
\vc{Y} = \begin{bmatrix}\infty \\ 37 \\ \infty \\ 29 \\ \infty\end{bmatrix}.
\]

It is convenient to first calculate $\vc{\mu}$. For example, item $1$ occurs in tests $1, 2, 5$. We then deduce 
\[ \mu_1 = \max_{t \in \{1,2,5\}} Y_t = \max\{\infty,37,\infty\} = \infty . \]
Proceeding similarly for the other items, we obtain
\[ \vc{\mu} = \big(\infty, \infty, 37, \infty, \infty, 29, 37 \big) . \]

\begin{description}
\item[Tropical COMP:] We set $\wh{\vc{U}} = \vc{\mu}$, obtaining the following:
\[ \wh{\vc{U}}^{\text{COMP}} = \big(\infty, \infty, 37, \infty, \infty, 29, 37 \big) . \]

\item[Tropical DD:]
 In the first step, we find the items with $\mu_i = \infty$. These are items $1, 2, 4$ and $5$. We declare these to be non-defective, so $\wh U^{\text{DD}}_1 = \wh U^{\text{DD}}_2 = \wh U^{\text{DD}}_4 = \wh U^{\text{DD}}_5 = \infty$.

In the second step, we check each positive test $t$ and look to see if they contain a single $\PD(Y_t)$ item. For test $2$, there are two $\PD(Y_2) = \PD(37)$ items in the test, items $3$ and $7$, so DD does nothing. For test $4$, items $2, 5$ and $6$ appear, but only item $6$ is a $\PD(Y_4) = \PD(29)$ item. Hence, the tropical DD algorithm sets $\wh{U}^{\text{DD}}_6 = 29$. 

Finally, in the third step, items 3 and 7, which have not yet been classified, get assigned a defectivity level of $\wh{U}^{\text{DD}}_3 = \wh{U}^{\text{DD}}_7 = \infty$.

Hence, the output of the tropical DD algorithm is:
\[ \wh{\vc{U}}_{\text{DD}} = \big(\infty, \infty, \infty, \infty, \infty, 29, \infty\big) . \]

\item[Tropical SCOMP:]
The algorithm initializes with the tropical DD estimate $\wh{\vc{U}} = \wh{\vc{U}}_{\text{DD}}.$ The corresponding outcome would be (written as the transpose, a row vector)
\[ \wh{\vc{Y}} = (\infty, \infty, \infty, 29, \infty)^\top, \]
where $\wh{Y}_2 = \infty \neq 37 = Y_2$. Hence, test $2$ is the only unexplained test. We retrieve the $\PD(37)$ items in test $2$. These are items $3$ and $7$. Because these items both appear in the same number of tests with outcome $37$, namely one, the tropical SCOMP algorithm chooses between them arbitrarily -- let's say it chooses item $7$ -- and assigns the defectivity level of $\wh{U}^{\text{SCOMP}}_7 = 37$ to it. Now no tests remain unexplained, and the algorithm terminates.

Hence the algorithm returns
\[ \wh{\vc{U}}^{\text{SCOMP}} = 
\big(\infty, \infty, \infty, \infty, \infty, 29, 37\big) . \]
\end{description}
\end{example}

\section{Simulation results} \label{sec:simulation}
In this section, we present some simulation results. We empirically compare the performance of the tropical and classical algorithms, and investigate how changing the probability $p$ and the sequence $\vc{K} = (K_1, \ldots, K_d)$ affects their performance. We also investigate the effect of using a combinatorial model with random defectivity levels for defective items, as opposed to the model with fixed $K_r$ introduced in Section \ref{sec:notation}. Finally, we compare the Bernoulli design and the near-constant column weight design, described in \cite[Section 2.7]{aldridge2019}.

\begin{figure}[ht!]
    \centering
    \includegraphics[scale=0.55]{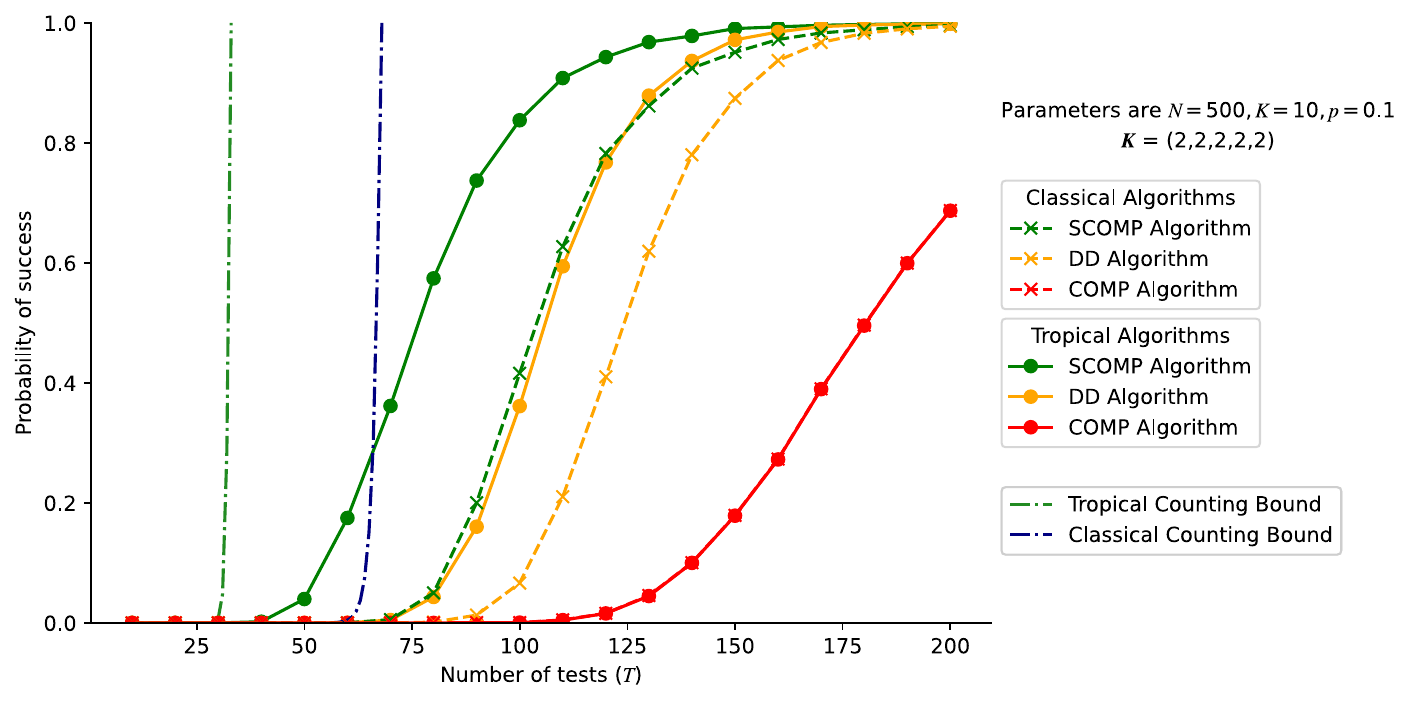}
    \caption{Empirical performance of the classical COMP, DD and SCOMP algorithms together with their tropical counterparts, through simulation with a Bernoulli design. For comparison, we plot the classical and tropical counting bounds of \eqref{eq:countingbound} and \eqref{eq:countingbound2}. The parameters chosen are $N = 500, K = 10, p = 0.1, \vc{K} = (2, 2, 2, 2, 2).$ Each point is obtained through $10^4$ simulations.}
    \label{fig:algoplots}
\end{figure}

Figure \ref{fig:algoplots} shows the performance of the tropical algorithms, relative to the performance of the classical algorithms and to the counting bounds \eqref{eq:countingbound} and \eqref{eq:countingbound2}. Figure \ref{fig:algoplots} shows for the chosen set of parameters that the tropical COMP algorithm performs almost identically to its classical counterpart (the lines are so close that they are difficult to distinguish), and the tropical DD and SCOMP algorithms perform better than their classical counterparts. We also notice that in this case tropical SCOMP beats the classical counting bound \eqref{eq:countingbound} for small values of $T$, showing that the tropical model can allow genuine performance gains over even adaptive classical group testing algorithms. 

\begin{figure}[ht!]
    \centering
    \includegraphics[scale=0.55]{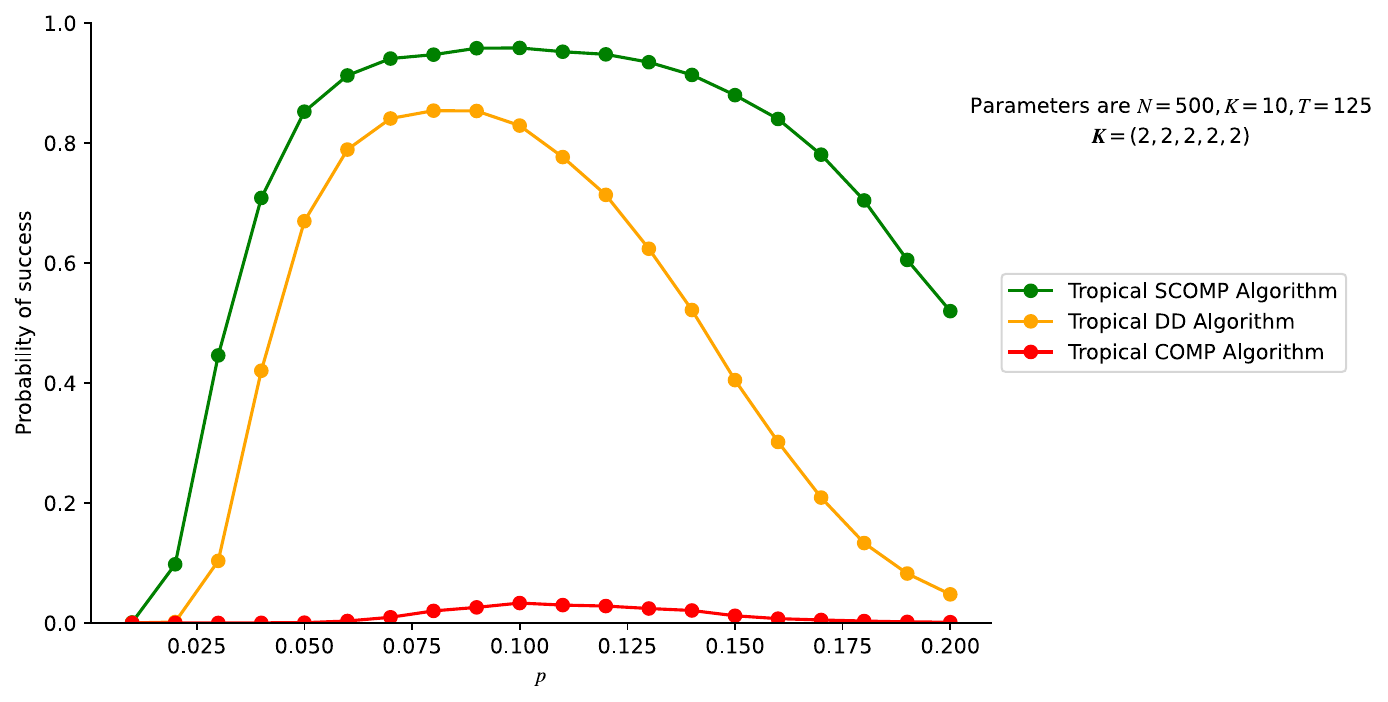}
    \caption{Simulation of the tropical COMP, DD and SCOMP algorithms with a Bernoulli design to investigate the effect of changing the parameter $p$ for the Bernoulli design. The parameters are $N = 500, K = 10, T = 125$ and $\vc{K} = (2, 2, 2, 2, 2)$. Each point is obtained through $10^4$ simulations.}
    \label{fig:diffp}
\end{figure}

Figure \ref{fig:diffp} shows how the performance of the tropical algorithms vary with $p$, for fixed $N$, $T$ and $\vc{K}$. Figure \ref{fig:diffp} 
shows that the performance of tropical DD and tropical SCOMP have a relatively wide plateau near their peak, indicating some robustness to misspecification of $p$, and showing that in general the choice $p=1/K$ is close to optimal for each algorithm.

\begin{figure}[ht!]
    \centering
    \includegraphics[scale=0.54]{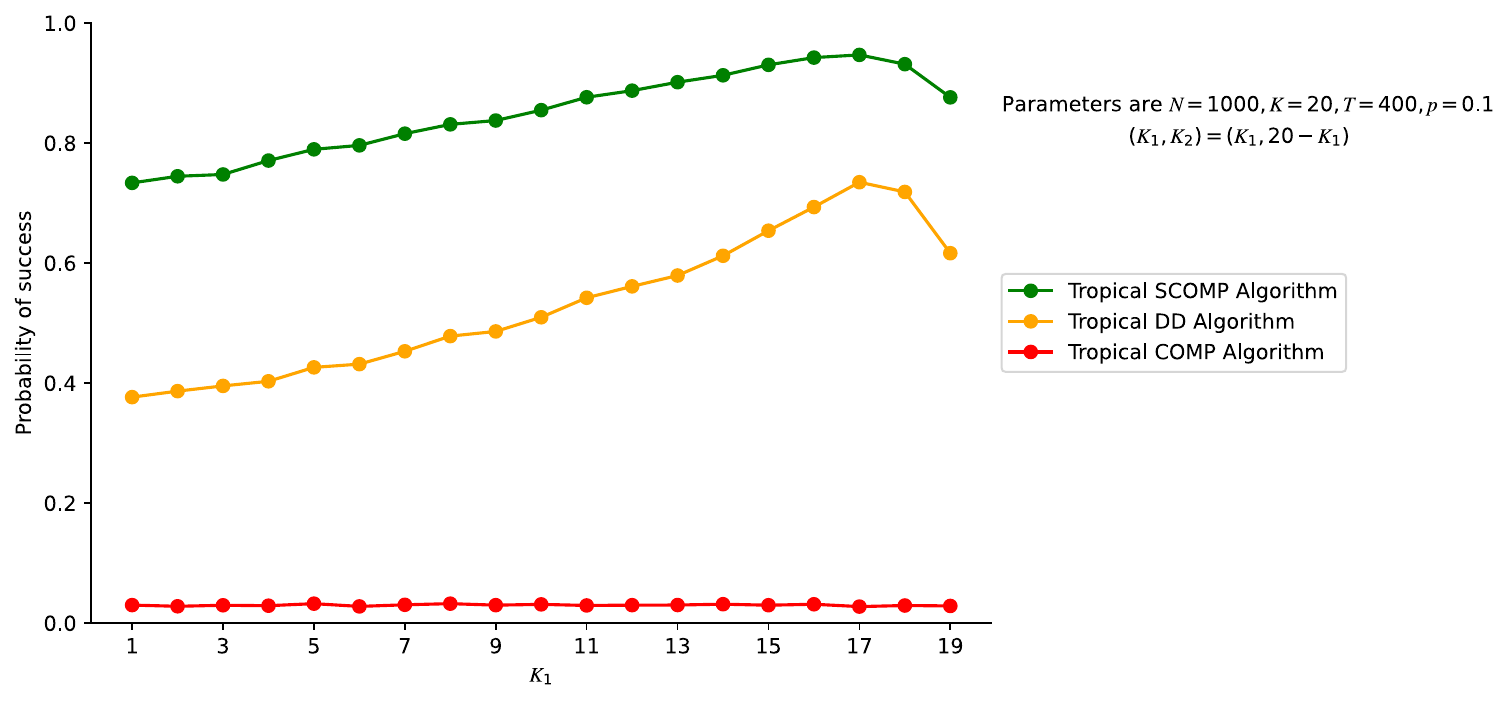}
    \caption{Simulation of the tropical COMP, DD and SCOMP algorithms with a Bernoulli design to investigate the effect of changing $\vc{K}$ on the performance. The parameters are $N = 1000, K = 20, T = 400, p = 0.1$ and $\vc{K} = (K_1, 20-K_1)$. Each point is obtained through $10^4$ simulations.} 
    \label{fig:diffx}
\end{figure}

Figure \ref{fig:diffx} shows how the performance of the tropical algorithms vary as $\vc{K}$ is varied, for fixed $N, K, T$ and $d$. We note that there are $\binom{K-1}{d-1}$ possible vectors $\vc{K}$ that sum to $K$ while having each $K_i > 0$. Also, a $d$-dimensional plot is required to visualize the performance of the algorithms for all the $\vc{K}$ simultaneously. Hence, for simplicity of exposition, we only present the case $d=2$.  Figure \ref{fig:diffx} shows that changing $\vc{K}$ has very little effect on the performance of tropical COMP. This is quantified later in Section \ref{sec:analysis_COMP}, where we find that, for tropical COMP, the error contribution of the $K$ defective items is small compared to that of the $N-K$ non-defective items. Figure \ref{fig:diffx} also shows that the performance of tropical DD and tropical SCOMP improves as $K_1$ increases, until reaching a peak. 

\begin{figure}[ht!]
    \centering
    \includegraphics[scale=0.54]{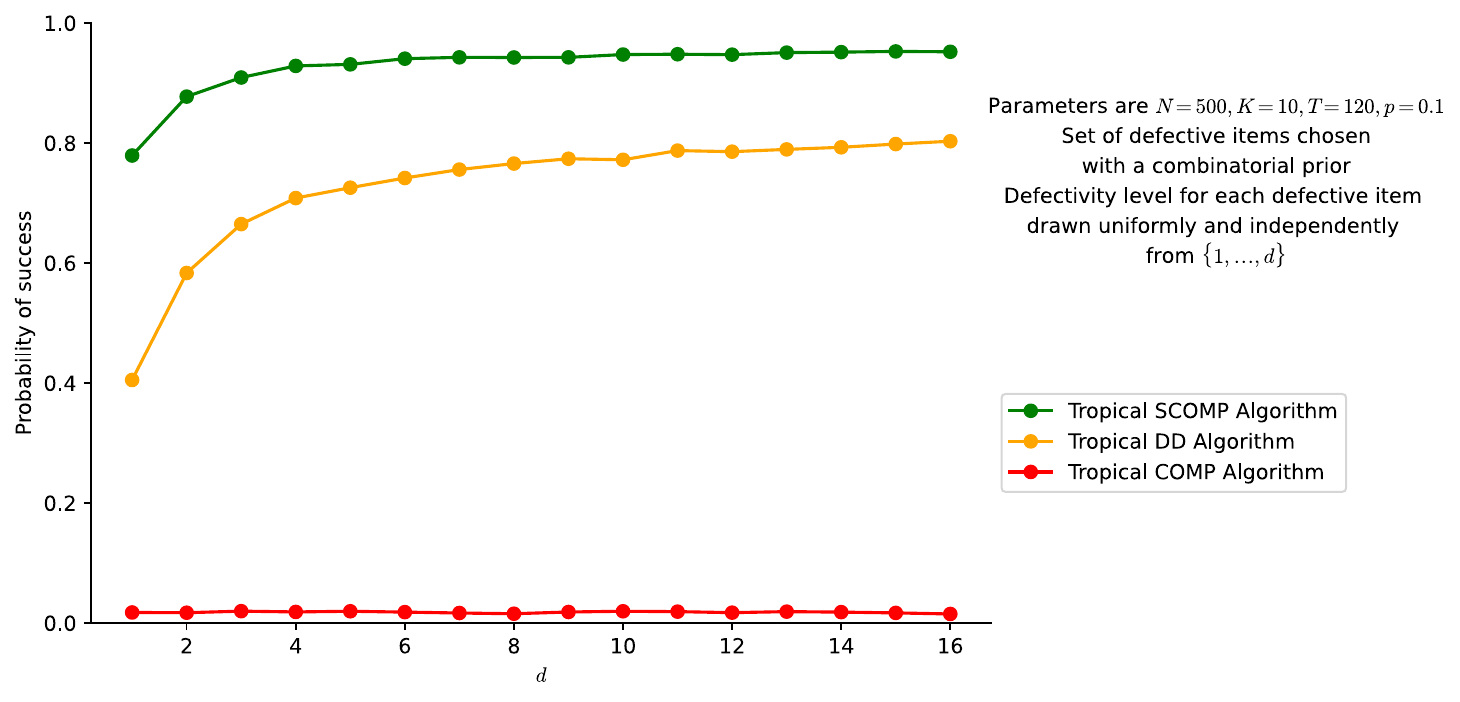}
    \caption{Simulation of the tropical COMP, DD and SCOMP algorithms to investigate their performance with a Bernoulli design when the set of defective items, $\KK$, is chosen with a combinatorial prior, and the defectivity level for each defective item is drawn uniformly and independently from $\{1, \ldots, d\}$. The parameters are $N = 500, K = 10, T = 120, p = 0.1$. Each point is obtained through $10^4$ simulations.} 
    \label{fig:diffd}
\end{figure}

Figure \ref{fig:diffd} shows the effect on the performance of the tropical algorithms when the defective set $\KK$ is chosen with a combinatorial prior, and the defectivity level for each defective item is drawn uniformly and independently from $\{1, \ldots, d\}$. We note that that the performance of tropical DD and tropical SCOMP improves as $d$ increases, until reaching a peak, while the performance of tropical COMP does not change.

\begin{figure}[ht!]
    \centering
    \includegraphics[scale=0.52]{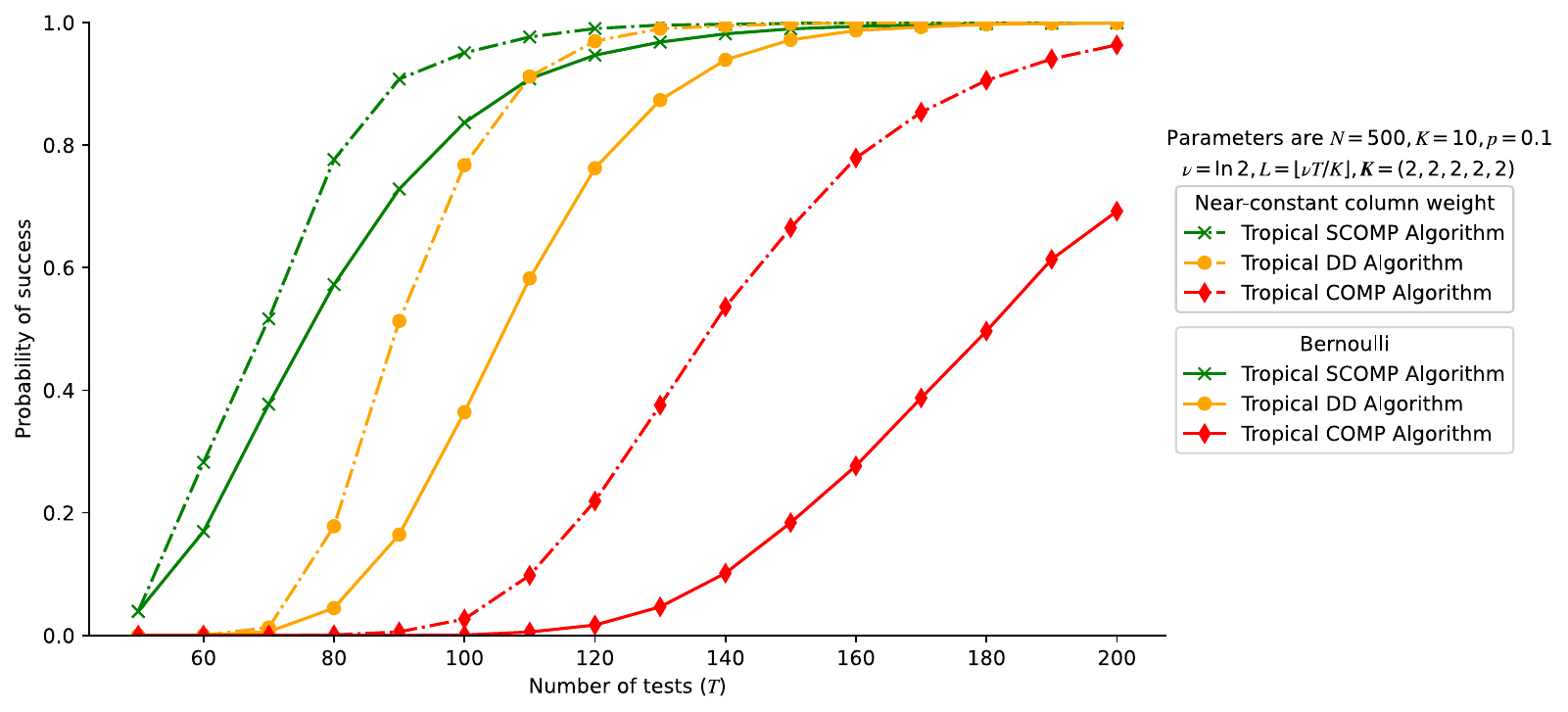}
    \caption{Simulation of the tropical COMP, DD and SCOMP algorithms to investigate their performance with a Bernoulli design as well as with a near-constant column weight design. The parameters are $N = 500, K = 10, p = 0.1, \nu = \ln 2,  L = \lfloor \nu T/K \rfloor, \vc{K} = (2,2,2,2,2)$. Each point is obtained through $10^4$ simulations.} 
    \label{fig:diffdesigns}
\end{figure}
Finally, Figure \ref{fig:diffdesigns} compares the performance of the tropical algorithms with the Bernoulli design and with a near-constant column weight design. We see that the performance mirrors the classical case (cf. \cite[Figure 2.3]{aldridge2019}).

\section{Analysis of tropical COMP algorithm} \label{sec:analysis_COMP}

In this section, we give an analysis of the performance of tropical COMP. The main result of this section (Theorem \ref{thm:COMPtests}) shows that the number of tests needed to ensure a vanishing error probability using a Bernoulli design is asymptotically identical to that needed in the classical case. 

\subsection{Achievability result}

Our main result for tropical COMP is the following (cf.\ \cite[Eq.~(2.9)]{aldridge2019}):

\begin{theorem}
    Let $\delta >0$. Let $p = \new/K$, for $0 < \new < K$. Taking
    \[ T \geq (1+\delta)\,\frac{\ee^{\new}}{\new} K \ln N \]
    ensures that the tropical COMP error probability $\Perr$ is asymptotically at most $N^{-\delta}$. \label{thm:COMPtests}
\end{theorem}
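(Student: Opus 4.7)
The plan is to adapt the classical COMP error analysis (see \cite[Section 2.3]{aldridge2019}) to the tropical setting, exploiting Lemma~\ref{rmk:COMPfalse-}, which tells us that the only errors tropical COMP can make are false positives $\mu_i < U_i$. First I would apply a union bound over items, writing $\Perr \leq \sum_i \pr(\mu_i < U_i)$.

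Next, for each item I would compute the per-item error probability under the Bernoulli design. Setting $m_r \coloneqq \sum_{s<r} K_s$ for the number of items of defectivity level strictly lower than $r$ (so $m_\infty = K$, $m_1 = 0$), a given test $t$ is a ``witness'' for item $i$ with $U_i = r$ -- meaning that $i$ appears in $t$ but no item with $U_j < r$ does, so that $Y_t = r$ and $\mu_i$ already attains the value $r$ from this test -- with probability exactly $p(1-p)^{m_r}$. Since tests are independent in the Bernoulli design,
\begin{equation*}
\pr(\mu_i < U_i) = \bigl(1-p(1-p)^{m_r}\bigr)^T
\end{equation*}
whenever $r > 1$; while $r=1$ is trivial since $\mu_i \geq 1 = U_i$ forces no error. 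Summing over items of each level, and using the monotonicity $m_r \leq K$ (so that $(1-p(1-p)^{m_r})^T \leq (1-p(1-p)^K)^T$ and the non-defective term dominates), I obtain the single bound
\begin{equation*}
\Perr \leq N \bigl(1-p(1-p)^K\bigr)^T.
\end{equation*}

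Finally, I would substitute $p = \new/K$ and apply $1-x \leq \ee^{-x}$ together with the standard estimate $(1-\new/K)^K = \ee^{-\new}(1+o(1))$ to get $\Perr \leq N \exp\!\bigl(-T\new\ee^{-\new}(1+o(1))/K\bigr)$. Plugging in the hypothesis $T \geq (1+\delta)(\ee^\new/\new) K \ln N$ turns the exponent into at most $-(1+\delta)(1+o(1)) \ln N$, and the prefactor $N$ subtracts $1$ from the exponent of $N$, yielding $\Perr \leq N^{-\delta + o(1)}$, as claimed.

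I do not anticipate any major obstacle: the structure of the computation parallels the classical COMP analysis almost exactly. The only mildly new points are (i) checking that the defective items, which in the classical case do not contribute to the false-positive count, now do contribute but are dominated by the non-defective term via $m_r \leq K$; and (ii) correctly handling the convention $\mu_i = 1$ for untested items, so that level-$1$ defectives are not spuriously counted as errors. The main \emph{conceptual} message of the argument is that the additional information carried by tropical outcomes does not help tropical COMP asymptotically: its rate is determined entirely by the non-defective false-positive probability, exactly as in the classical case.
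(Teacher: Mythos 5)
Your proposal is correct and follows essentially the same route as the paper: a union bound over items, the per-item probability $\bigl(1-p(1-p)^{\sum_{s<r}K_s}\bigr)^T$ of never being witnessed, monotonicity in the exponent to reduce to the single term $N\bigl(1-p(1-p)^K\bigr)^T$, and the standard asymptotics $(1-\nu/K)^K \to \ee^{-\nu}$. The only (harmless) refinements beyond the paper's argument are your observation that the per-item probability is an exact equality and your explicit handling of the $r=1$ and untested-item cases.
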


\begin{remark}
    Note that $T = (1+\delta) \frac{\ee^{\new}}{\new} K \ln N$ is minimised over $\new$ when $\new = 1$. This corresponds to choosing the same optimal $p$ as in the classical case. We note that tropical COMP, similar to classical COMP, is reasonably robust to misspecification of $p$ (cf.\ \cite[Remark 2.3]{aldridge2019}).
\end{remark}

To reach the result of Theorem \ref{thm:COMPtests}, we find a bound on the error probability of tropical COMP using a Bernoulli design. This bound, given below, extends the corresponding bound by Chan {\em et al.}\ for classical COMP, given in \cite[Eq.~(8)]{chan2011}, to the tropical setting.

\begin{lemma} \label{lem:COMP bound}
For a Bernoulli test design with parameter $p$, we can bound the error probability of Tropical COMP from above by
\begin{equation} \label{eq:COMPbd}
\PERR{COMP}{T} \leq \sum_{r \in \Dset} K_r (1- p(1-p)^{\sum_{i < r}K_i})^T.
\end{equation}
\end{lemma}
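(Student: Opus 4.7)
The plan is to use a union bound over all items, grouped by their true defectivity level. Tropical COMP outputs $\hat U_i = \mu_i$, and Lemma \ref{lem:deduct} (equivalently Lemma \ref{rmk:COMPfalse-}) guarantees $\mu_i \leq U_i$, so item $i$ is in error exactly when $\mu_i < U_i$. Hence
\[ \PERR{COMP}{T} \leq \sum_{i=1}^N \Pr(\mu_i < U_i) = \sum_{r \in \Dset} \,\sum_{i : U_i = r} \Pr(\mu_i < r), \]
and it suffices to bound $\Pr(\mu_i < r)$ for a single item $i$ with $U_i = r$.

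For the single-item probability I would fix $r \in \Dset$ and an item $i$ with $U_i = r$, and analyse test by test. Every test containing $i$ has outcome $\leq U_i = r$, so on $\{\mu_i < r\}$ no test containing $i$ can achieve outcome $r$ (and conversely for $r \geq 2$ or $r = \infty$ the two events coincide; for $r = 1$ one only gets an inclusion, which still suffices for an upper bound). A single test $t$ contains $i$ and achieves outcome exactly $r$ iff $x_{ti} = 1$ and no item $j$ with $U_j < r$ lies in $t$ (for $r = \infty$ this is the event that $t$ is a negative test containing $i$). Since $i$ itself has $U_i = r$, the $\sum_{s<r} K_s$ items constrained to be absent from $t$ are all distinct from $i$, so under the Bernoulli design $x_{ti}$ is independent of $\{x_{tj} : U_j < r\}$, and the event has probability
\[ p\,(1-p)^{\sum_{s<r} K_s}, \]
where the exponent equals $K$ when $r = \infty$ and $0$ when $r = 1$.

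Finally, because the rows of the Bernoulli matrix are independent, the events ``test $t$ contains $i$ with outcome $r$'' are independent across $t$, so
\[ \Pr(\mu_i < r) \leq \bigl(1 - p(1-p)^{\sum_{s < r} K_s}\bigr)^T, \]
and summing $K_r$ copies of this bound at each level $r \in \Dset$ (with $K_\infty = N - K$) yields exactly \eqref{eq:COMPbd}. I do not anticipate any serious obstacle: the argument is essentially a union bound combined with the product structure of the Bernoulli design, the only substantive observation being the independence of $x_{ti}$ and $\{x_{tj} : U_j < r\}$. A minor slack arises at $r = 1$, where $\mu_i \geq 1$ by convention makes the error impossible while the bound still contributes a $K_1(1-p)^T$ term; this is harmless for an upper bound and in any case contributes a lower-order amount in the regime of Theorem \ref{thm:COMPtests}.
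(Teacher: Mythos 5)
Your proof is correct and follows essentially the same route as the paper: a union bound over items, the observation that an item $i$ with $U_i=r$ is misclassified exactly when it is intruding (no test containing it attains outcome $r$), the per-test probability $p(1-p)^{\sum_{s<r}K_s}$ of such a witnessing test, and independence across the $T$ tests. The extra care you take at $r=1$ and $r=\infty$ is a slight refinement but does not change the argument.
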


\begin{proof}
To obtain an upper bound on the error probability of tropical COMP, we consider each item in turn, using the fact that the union bound
\begin{equation} \label{eq:COMPunion}
 \Perr = \pr \left( \bigcup_i \{ \wh{U}_i \neq U_i \} \right) \leq \sum_{i} \pr( \wh{U}_i \neq U_i),\end{equation}
tells us that we only need to control the individual probabilities that an item is misclassified.

Any given item $i$ with $U_i = r$ is misclassified only if it is intruding. This happens if every test which it appears in contains at least one of the $\sum_{i <r} K_i$ items of lower level.  For a given test, the chance that it contains $i$ but doesn't contain such an item is $p (1-p)^{\sum_{i < r} K_i}$. Hence using independence between tests, we have that
\begin{equation} \label{eq:COMPind}
\pr( \wh{U}_i \neq U_i) \leq (1- p(1-p)^{\sum_{i < r} K_i})^T.
\end{equation}
The result follows on substituting \eqref{eq:COMPind} in \eqref{eq:COMPunion}.
\end{proof}

We can now prove the main result.

\begin{proof}[Proof of Theorem \ref{thm:COMPtests}]
    This proof is adapted from the one given in the classical case by Chan {\em et al.}\ in \cite{chan2011}. Let $T = \beta K \ln N$. The key is to observe for a given $T$ and $p$ that the function $f(\ell) = (1- p(1-p)^{\ell})^T$ is increasing in $\ell$. 

    Hence we can write \eqref{eq:COMPbd} as 
    \begin{equation} \label{eq:sumrep}
     \Perr \leq \sum_{r \in \Dset} K_r f \left( \sum_{i <r} K_i \right) \leq \sum_{r \in \Dset} K_r f \left(  K \right)    = N f(K).\end{equation}   
    Then, setting $p = \new/K$ in Lemma \ref{lem:COMP bound}, we have 
\begin{align*} 
\Perr &\leq N \exp(-Tp(1-p)^{K}) \\
&= N \exp(-\beta \new (1 - \new/K)^K \ln N) \\
&\aequi N \exp(-\beta \new \ee^{-\new} \ln N) \qquad \text{as } K \to \infty \\
&= N^{1- \beta \new \ee^{-\new}}.
\end{align*}
Hence, taking $\beta = (1+\delta) \frac{\ee^{\new}}{\new}$ ensures that $\Perr$ is asymptotically at most $N^{-\delta}$. 
\end{proof}

\subsection{Contributions to the error probability}

\begin{figure} 
    \centering
    \includegraphics[scale=0.64]{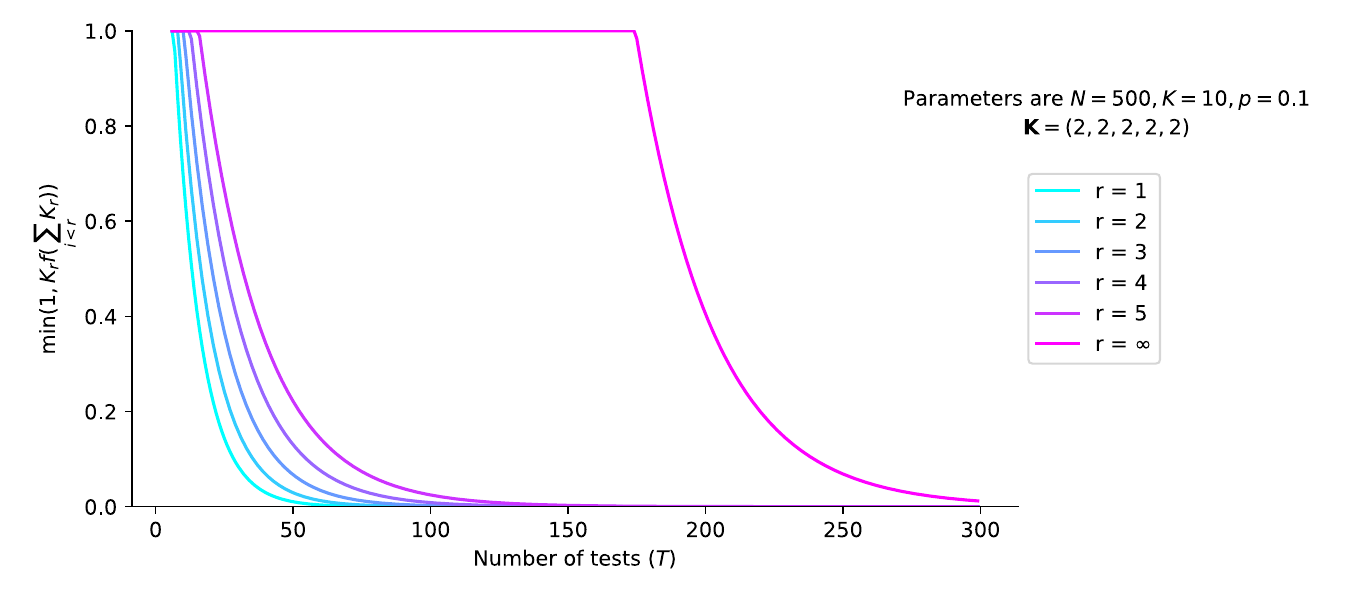}
    \setlength{\abovecaptionskip}{-10pt}
    \caption{Plot illustrating the variation of $\min\{1,K_r f(\sum_{i<r}K_r)\}$ with $T$, for each $r \in \Dset$. The parameters chosen are $N = 500, K= 10, p = 0.1$ and $\vc{K} = (2, 2, 2, 2, 2)$.} \label{fig:COMPsummands}
\end{figure}

Figure \ref{fig:COMPsummands} illustrates the contribution of each summand to the bound \eqref{eq:COMPbd} for a range of values of $T$. It is clear that the dominant term contributing to the error bound is $r = \infty$ (that the dominant error event is for a non-defective item to be wrongly classified as defective, and that the defective items are more typically correctly classified).

Indeed, \eqref{eq:sumrep} implies that the proportion of the bound \eqref{eq:COMPbd} arising from the $r=\infty$ term is at least $1-K/N$, since this result gives 
$$ \frac{(N-K) f(K)}{\sum_{r \in \Dset} K_r f \left( \sum_{i < r} K_i \right)} \geq \frac{(N-K) f(K)}{N f(K)} = 1 - \frac{K}{N} .$$ 

In fact in the `uniform' case where $\vc{K} = (K/d, \ldots, K/d)$, the contributions to the bound
\begin{equation} \label{eq:doubleexp}
K_r f \left(\sum_{i < r} K_i \right) \aequi K_r \exp \left( - T p(1-p)^{(r-1) K/d} \right) \aequi \frac{K}{d} \exp \left( - T p \ee^{ - p(r-1) K/d} \right) \end{equation} 
decay doubly-exponentially as $r$ gets smaller, meaning that the errors are overwhelmingly likely to arise from wrongly classifying items with high $r$.

\subsection{Error probability for different defectivity sequences}

For a fixed number of defective items $K$, it would be interesting to know what defectivity sequences $\vc{K}$ make the tropical group testing problem hardest or easiest. This is explored via simulation in Figure \ref{fig:diffx}, but we would also like to understand which sequences $\vc{K}$ lead to the largest and smallest error probability for tropical COMP. Unfortunately, we cannot directly control the error probability in this way. However, we can use the COMP error bound \eqref{eq:COMPbd} to induce a partial order on sequences $\vc{K}$ with the same sum. This will show that the error bound is smallest in the classical case where all items have the same level and largest in the case where each item has distinct levels.

Given a sequence $\vc{K} = (K_1, \ldots, K_d)$, we can sort the items in increasing order of level, and for each item $k$ write $L_k$ for the number of items with a strictly lower level. For example, with $K = 8$, the sequence $\vc{K}^{(1)} = (2,2,2,2)$ induces the sequence $\vc{L}^{(1)} = (0,0,2,2,4,4,6,6)$, while the sequence $\vc{K}^{(2)} = (1,1,1, \ldots, 1)$ induces the sequence $\vc{L}^{(2)} = (0,1,2,3,4,5,6,7)$.

We can compare two sequences $\vc{K}^{(i)} = \big( K_1^{(i)}, \ldots, K_d^{(i)}\big)$, for $i=1,2$, and define a partial order $\vc{K}^{(1)} \preceq \vc{K}^{(2)}$ if the corresponding sequences $L_k^{(1)} \leq L_k^{(2)}$ for all $k$.  Hence in the example above, $\vc{K}^{(1)} \preceq \vc{K}^{(2)}$. In this partial ordering, for a given $K$ the minimal sequence will be $\vc{K} = (K)$, giving $\vc{L} = (0,0, \ldots, 0)$, and the sequence $\vc{K} = (1,1, \ldots, 1)$ as seen above will be maximal.

Now, note that the bound on the RHS of \eqref{eq:COMPbd} respects this partial order. That is, since the $r = \infty$ term will be the same for all such sequences, we can regard the variable part of the bound \eqref{eq:sumrep} as a sum over defective items
\begin{equation} \label{eq:COMPbd2}
 \sum_{k \in \KK} f(L_k),\end{equation}
and use the fact that the function $f(\ell)$ is increasing in $\ell$ to deduce that: 

\begin{lemma}
If $\vc{K}^{(1)} \preceq \vc{K}^{(2)}$ then the corresponding error bound \eqref{eq:COMPbd2} is lower for $\vc{K}^{(1)}$ than $\vc{K}^{(2)}$.
\end{lemma}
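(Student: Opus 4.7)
The plan is to apply a purely termwise comparison to the sum appearing in \eqref{eq:COMPbd2}. The only ingredient needed is monotonicity of the function $f(\ell) = (1-p(1-p)^\ell)^T$, which is already recorded in the proof of Theorem \ref{thm:COMPtests}: since $(1-p)^\ell$ is decreasing in $\ell \geq 0$ for $p \in (0,1)$, the quantity $1-p(1-p)^\ell \in (0,1)$ is increasing in $\ell$, and raising to the non-negative integer power $T$ preserves this monotonicity.

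Given this, the argument is immediate. The hypothesis $\vc{K}^{(1)} \preceq \vc{K}^{(2)}$ is by definition the assertion that, after sorting the $K$ defective items of each configuration in increasing order of level, one has $L_k^{(1)} \leq L_k^{(2)}$ for every $k$. Applying the monotonicity of $f$ termwise gives $f(L_k^{(1)}) \leq f(L_k^{(2)})$ for each $k \in \KK$, and summing over the $K$ defective items yields
\[ \sum_{k \in \KK} f\bigl(L_k^{(1)}\bigr) \leq \sum_{k \in \KK} f\bigl(L_k^{(2)}\bigr), \]
which is exactly the claim of the lemma for the contributions with $r < \infty$.

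The main (and only) subtlety is notational rather than mathematical: one must be careful that the indexing $k \mapsto L_k$ underlying \eqref{eq:COMPbd2} is defined consistently between the two configurations. Since both $\vc{K}^{(1)}$ and $\vc{K}^{(2)}$ have the same total number $K$ of defective items, and since \eqref{eq:COMPbd2} depends only on the multiset $\{L_k : k \in \KK\}$, any labelling that sorts the items by increasing level makes the comparison pointwise. I do not expect any further obstacle.
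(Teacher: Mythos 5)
Your proof is correct and follows exactly the paper's own argument: the paper deduces the lemma directly from the monotonicity of $f(\ell) = (1-p(1-p)^\ell)^T$ in $\ell$, applied termwise to the sorted sequences $L_k^{(1)} \leq L_k^{(2)}$ and summed over the $K$ defective items. Your additional verification of the monotonicity of $f$ and the remark on consistent indexing are sound but add nothing beyond what the paper already asserts.
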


Hence, for fixed $K$ the error bound \eqref{eq:COMPbd2} is smallest for the minimal sequence $\vc{K} = (K)$ corresponding to the classical case and largest for the sequence $\vc{K} = (1,1, \ldots, 1)$ where each defective item has its own unique level.

\section{Analysis of tropical DD algorithm: achievability} \label{sec:analysis_DD}

In this section we give an analysis of the performance of tropical DD, which extends that given in \cite{aldridge2014} for the classical $d=1$ case by taking advantage of the information provided by the more varied test outcomes of the tests. 

Our main achievability result ensures success with high probability when we have a number of tests above a certain threshold.

\begin{theorem} \label{thm:DDachievability}
 For $\new > 0$, write
 \[ \pc_r := \pc_r(\new) = \left(1-\frac{\new}{K} \right)^{\sum_{t \leq r} K_t} . \]
 Also define
 \[ \Tinf(\new) := \frac{1}{\new \pc_d} K \ln\frac{N}{K} \]
 and
 \[ T_r(\new) := \frac{1}{{\new\pc_r}} K \ln K_r . \]
 If we take
 \[ T \geq (1+\delta) \max\big\{\Tinf(\new), T_d(\new), T_{d-1}(\new), \dots, T_1(\new) \big\} , \]
 tests, then the error probability of the tropical DD algorithm for a Bernoulli design with $p=\new/K$ tends to $0$.
\end{theorem}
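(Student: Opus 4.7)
The plan is to bound $\Perr$ by a union bound over missed defective items, since Lemma \ref{rmk:DDnofalse+} tells us that tropical DD fails only by marking a true defective as non-defective. Write $B_r^{(i)}$ for the event that $i \in \KK_r$ is missed, so that
\[ \Perr \leq \sum_{r=1}^d \sum_{i \in \KK_r} \pr\bigl(B_r^{(i)}\bigr). \]
The item $i$ is missed precisely when every test containing $i$ contains some other item $j$ with $\mu_j \leq r$: any test with $x_{ti}=1$ has outcome $\leq U_i = r$ and so cannot push $\mu_j$ above $r$, while Lemma \ref{lem:deduct} rules out items of level $<r$ from such a test. The crucial structural observation is that ``$\mu_j \leq r$'' for each $j \neq i$ is determined only by rows $t$ with $x_{ti}=0$, whereas the test-compositions for $t \in T_i := \{t : x_{ti}=1\}$ involve the fresh i.i.d.\ Bernoullis $(x_{tj})_{t\in T_i,\, j\neq i}$. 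Writing $M_r = |\{j : \mu_j \leq r\}|$ and noting $\mu_i \leq U_i = r$ (so there are $M_r - 1$ candidates $j \neq i$), this independence together with the moment generating function of $|T_i|\sim\bin(T,p)$ yields
\[ \pr\bigl(B_r^{(i)} \mid M_r\bigr) \leq \bigl(1 - p(1-p)^{M_r-1}\bigr)^T. \]

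Next I would control $M_r$ through its expectation. Write $K_{\leq r} = \sum_{t \leq r} K_t$ and $W_r = M_r - K_{\leq r}$. Each $j$ with $U_j > r$ satisfies $\pr(\mu_j \leq r) = (1 - p\pc_r)^T$, giving
\[ \ep[W_r] \leq N(1 - p\pc_r)^T \leq N \exp(-Tp\pc_d) \leq K(K/N)^\delta, \]
via $\pc_r \geq \pc_d$ and $T \geq (1+\delta)\Tinf$. Let $E = \bigcap_{r=1}^d \{W_r \leq \epsilon K\}$ for a small constant $\epsilon > 0$; Markov's inequality and a union bound over the $d$ levels then give $\pr(E^c) \leq d(K/N)^\delta/\epsilon \to 0$. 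On $E$ we have $(1-p)^{M_r - 1} \gtrsim \pc_r \ee^{-\nu\epsilon}$, and combining with $T \geq (1+\delta) T_r$ (equivalently $Tp\pc_r \geq (1+\delta)\ln K_r$) yields
\[ \pr\bigl(B_r^{(i)} \mid E\bigr) \leq \exp\bigl(-Tp\pc_r \ee^{-\nu\epsilon}\bigr) \leq K_r^{-(1+\delta) \ee^{-\nu\epsilon}} \leq K_r^{-(1+\delta/2)}, \]
provided $\epsilon$ is chosen small enough that $(1+\delta)\ee^{-\nu\epsilon} \geq 1+\delta/2$. Summing,
\[ \Perr \leq \pr(E^c) + \sum_r K_r \cdot K_r^{-(1+\delta/2)} = \pr(E^c) + \sum_r K_r^{-\delta/2} \to 0 \]
in the asymptotic regime where each $K_r \to \infty$.

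The main obstacle I foresee is making rigorous the independence claim underpinning the conditional bound on $B_r^{(i)}$: specifically, justifying that $M_r$ and the tests in $T_i$ depend on disjoint coordinates of the Bernoulli matrix $\vc{x}$, so that the probability a fixed test in $T_i$ avoids all $M_r - 1$ ``problematic'' items is exactly $(1-p)^{M_r - 1}$. Once this is in place the remainder is a routine Markov plus exponential-tail calculation in which $T \geq (1+\delta)\Tinf$ controls the intruder count $W_r$, the individual thresholds $T \geq (1+\delta) T_r$ control the per-item masking probability at level $r$, and the two roles decouple cleanly via the event $E$.
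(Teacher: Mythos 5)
Your overall architecture is sound and runs parallel to the paper's: a union bound over missed defectives, a Markov-type control of the number of intruding items (this is where $T \geq (1+\delta)\Tinf$ enters, exactly as in the paper's bound on $G_r$), and a per-item masking bound driven by $T \geq (1+\delta)T_r$. Your characterisation of the miss event (every test containing $i$ contains some other $j$ with $\mu_j \leq r$) is correct, and lumping defective and non-defective intruders into a single count $M_r$ is a legitimate simplification of the paper's split into the multinomial vector $\vc{M}$ and the non-defective intruder count $G_r$.

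The step that fails as written is the displayed inequality $\pr\bigl(B_r^{(i)} \mid M_r\bigr) \leq \bigl(1 - p(1-p)^{M_r-1}\bigr)^T$. Your independence observation is a \emph{conditional} one: given $T_i = \{t : x_{ti}=1\}$, the set $\{j : \mu_j \leq r\}$ is determined by the rows outside $T_i$, and the entries $(x_{tj})_{t \in T_i, j \neq i}$ are fresh, so $\pr\bigl(B_r^{(i)} \mid T_i, (x_{tk})_{t \notin T_i, k\neq i}\bigr) = \bigl(1-(1-p)^{M_r-1}\bigr)^{|T_i|}$. But to then integrate out $|T_i|$ via the binomial generating function \emph{for fixed $M_r$} you would need $M_r$ and $|T_i|$ to be independent, and they are not: a larger $|T_i|$ leaves fewer rows outside $T_i$ able to certify $\mu_j > r$, so it pushes $M_r$ up (in the extreme, $|T_i| = T$ forces $M_r = N$). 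The fix is to reorder the argument: work with the level-$r$ event $E_r = \{W_r \leq \epsilon K\}$, which \emph{is} measurable with respect to $\sigma\bigl(T_i, (x_{tk})_{t\notin T_i, k\neq i}\bigr)$; on $E_r$ replace $(1-p)^{M_r-1}$ by the deterministic lower bound $\rho_r = (1-p)^{K_1+\cdots+K_r+\epsilon K}\geq \pc_r\ee^{-\new\epsilon}(1+o(1))$, giving $\pr\bigl(B_r^{(i)}\cap E_r\bigr) \leq \ep\bigl[(1-\rho_r)^{|T_i|}\bigr] = (1-p\rho_r)^T$; only now is the generating-function step valid, because $\rho_r$ is a constant. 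Your subsequent estimates then go through unchanged (note $\pr(E_r^c)$ should be summed over $r$ separately, since for an item of level $r$ only $E_r$, not the full intersection $E$, is measurable with respect to the relevant conditioning). This reordering is precisely the role played in the paper's proof by conditioning on $G_r \leq g_r^*$ \emph{before} applying the binomial computation in Lemma \ref{lem:PerrDDubfull}, so the repaired argument lands essentially on the paper's proof.
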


\begin{remark} \label{rem:improve}
Note that in the uniform case where $K_r = K/d$ for all $r$, 
in regimes where $K = N^\alpha$ and $\alpha > 1/2$, the dominant term in Theorem \ref{thm:DDachievability} is 
\[ T_d(\new) = \frac{1}{\new (1-\new/K)^K} \, K \ln\frac{K}{d} ; \]
that is, the maximum over $r$ is achieved at $r=d$, since $\pc_r$ is decreasing in $r$ and $\ln K_r$ is constant.

Further, in this case, since we are able to choose $\new$ to minimise $\Tfin(\new)$, we can maximise $\new(1-\new/K)^K$ by taking $\new=K/(K+1)$ or $p=1/(K+1)$. Note that this does not necessarily mean that we minimise the error probability, since we are minimising the number of tests to control a bound on $\Perr$, not $\Perr$ itself. In other words, asymptotically we require $\ee K \ln(K/d)$ tests.

This means that the tropical performance bound of Theorem \ref{thm:DDachievability} represents an asymptotic reduction of $\ee K \ln d$ tests over the classical bound of $\ee K \ln K$ tests (see \cite[Theorem 1]{aldridge2017}).  While this is a second-order asymptotic term compared with the leading order $\ee K \ln K$ term, it may still represent a valuable improvement in problems of finite size.
\end{remark}

In the next section, we will see a converse result Theorem \ref{thm:DDconverse}, showing that success probability can't be high with a number of tests below a threshold. Further, we show that for certain parameter regimes these two thresholds coincide, showing that we have sharp performance bounds on tropical DD.

\subsection{Proof outline} \label{sec:outline}

To prove Theorem \ref{thm:DDachievability}, we need a general upper bound on the error probability of DD. The key idea is that DD succeeds if and only if each defective item is proven to be such in the second stage of the algorithm.

\begin{definition} \label{def:Lrs}
For any $1 \leq s \leq K_r$, we write $L_{r,s}$ for the number of tests that contain item $i(r,s)$, no other defective item $i(t,u)$ with $t \leq r$, and also no non-defective $\PD(r)$ item. 
\end{definition}

A test that counts towards $L_{r,s}$ is precisely one that discovers $i(r,s)$ to be defective at level $r$. So with this definition, we can say that the tropical DD algorithm succeeds if and only if $L_{r,s} \geq 1$ for all $(r,s)$. Hence we have
\begin{equation} \Perr = \pr \left( \bigcup_{r,s} \{ L_{r,s} = 0 \} \right) \leq
\sum_{r=1}^d \pr \left( \bigcup_{s=1}^{K_r} \{ L_{r,s} = 0 \}  \right),
\label{eq:DDPerrbd2}
\end{equation}

One way we could get $L_{r,s} = 0$ is if there is a large number of potentially intruding non-defectives at this outcome level. We define $G_r = \sum_{j=0}^{r} H_j$ for the number of intruding non-defectives (where $H_j$ is introduced in Definition \ref{def:keydefs}.\ref{it:hdef}) and note that there are $(N-K) - G_r = H_{r+1} + \ldots + H_d + H_\infty$ non-defective items with $\mu_i > r$. We observe that, provided we use sufficiently many tests, $G_r$ is unlikely to be large. Hence, we will condition on $G_r$ being no larger than some threshold level $g_r^*$, to be chosen later. So for each level $r$, the summand in \eqref{eq:DDPerrbd2} can be bound as
\begin{align}
\pr \left( \bigcup_{s=1}^{K_r} \{ L_{r,s} = 0 \} \right)  
&= \pr \left( \left. \bigcup_{s=1}^{K_r} \{ L_{r,s} = 0 \} \,\right|\, G_r \leq g_r^* \right)  \pr(G_r \leq g_r^*) \notag \\
&\qquad {}+ \pr \left( \left. \bigcup_{s=1}^{K_r} \{ L_{r,s} = 0 \} \,\right|\, G_r > g_r^* \right)  \pr(G_r > g_r^*) \notag \\
&\leq \pr \left( \left. \bigcup_{s=1}^{K_r} \{ L_{r,s} = 0 \} \,\right|\, G_r \leq g_r^* \right) + \pr(G_r > g_r^*) \notag \\ 
&\leq K_r \, \pr \left(  L_{r,s} = 0 \mid G_r \leq g_r^* \right) + \pr(G_r > g_r^*)  , \label{eq:firstpart}
\end{align}
where we used the union bound in the last line.

We need to show that both terms in \eqref{eq:firstpart}  are small. The first term being small tells us we're likely to find the level-$r$ defectives provided $G_r$ is not too big; we will show this happens when $T \geq (1+\delta)\max\{T_\infty, T_r \}$. The second term being small tells us that $G_r$ is unlikely to be too big; we will show this happens when $T$ is big; for example, $T \geq (1 + \delta) T_r$ will be plenty.

In Subsection \ref{sec:finding} we will analyse the first term $\pr \left(  L_{r,s} = 0 \mid G_r \leq g_r^* \right)$. In Subsection \ref{sec:intruding} we will bound the second term $\pr(G_r > g_r^*)$. Then in Subsection \ref{sec:complete} we will put the pieces together to prove Theorem \ref{thm:DDachievability}.

\subsection{Finding defectives} \label{sec:finding}

We first describe the joint distribution of certain random variables arising in the analysis of DD. We provide additional notation to that used in Section \ref{sec:notation}.

\begin{definition} \label{def:mrs}
We define the following random variables: 
\begin{enumerate}
\item Write $M_\infty$ for the number of tests which contain no defectives -- and so are negative tests with outcome $\infty$.
\item For $1 \leq r \leq d$, write $M_r$ for the total number of positive tests with outcome $r$.
\item Further, decompose $M_r = \sum_{s = 1}^{K_r} M_{r,s}  + M_{r,+}$ as follows:
\begin{enumerate}
\item For $1 \leq s \leq K_r$, write $M_{r,s}$ for the number of tests that contain  a single item $i(r,s)$ at level $r$ and no other defective item $i(t,u)$ with $t \leq r$;  note that each such test has outcome $r$.
\item Write $M_{r,+}$ for the number of tests which have outcome $r$ but contain multiple defective items at level $r$.
\end{enumerate}
\end{enumerate}
Write $\vc{M}$ for the collection of random variables
$$\vc{M} = \left( M_{1,1}, M_{1,2}, \ldots M_{1,K_1}, M_{1,+}, \ldots,
M_{d,1}, M_{d,2}, \ldots, M_{d,K_d}, M_{d,+}, M_{\infty} \right) $$
(noting this includes the terms in the decompositions of the $M_r$ variables, but not the $M_r$ themselves).
\end{definition}

Note that $M_{r,s} = 0$ means necessarily that $L_{r,s} = 0$, and this is the event we wish to avoid. But first, let us note the joint distribution of $\vc{M}$. 

\begin{lemma} \label{lem:mdist}
The random vector $\vc{M}$ is multinomial with parameters $T$ and $\vc{q}$, where
$$\vc{q} = \left( q_{1,1}, q_{1,2}, \ldots q_{1,K_1}, q_{1,+}, \ldots,
q_{d,1}, q_{d,2}, \ldots, q_{d,K_d}, q_{d,+}, q_{\infty} \right).$$
Here  for each $r$ and $s$:
\begin{align*}
    q_\infty & :=  (1-p)^K , \\
    q_{r,s} &:= \prod_{t < r} (1-p)^{K_t} \left( (1-p)^{K_r-1} p \right), \\
    q_{r} & := \prod_{t < r} (1-p)^{K_t} \left( 1 - (1-p)^{K_r} \right), \\
    q_{r,+} &:= q_r - K_r q_{r,s}. 
\end{align*}
\end{lemma}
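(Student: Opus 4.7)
The plan is to reduce the multi-test claim to a single-test analysis. Under a Bernoulli design, the rows of the test matrix are independent, so the joint classification of the $T$ tests into the categories tracked by $\vc{M}$ forms an i.i.d.\ sequence. It then suffices to show that for a single test the events corresponding to $M_\infty$, to each $M_{r,s}$, and to each $M_{r,+}$ partition the sample space with the claimed probabilities, at which point the multinomial structure on $T$ tests is immediate from the standard construction of the multinomial as a sum of i.i.d.\ indicator vectors.

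For a single test $t$, I would partition via the outcome $Y_t$. Either the test contains no defective item (so $Y_t=\infty$), or $Y_t=r$ for some unique $r\in\{1,\dots,d\}$. In the latter case, by Definition \ref{def:outcome}, no defective at level strictly less than $r$ appears in the test and at least one defective at level $r$ does. I would then sub-partition this event by the number of level-$r$ defectives present: either exactly one, say $i(r,s)$ for a unique $s$ (event contributing to $M_{r,s}$), or two or more (event contributing to $M_{r,+}$). The crucial observation is that the presence or absence of defectives at levels strictly above $r$ is immaterial to the outcome $Y_t=r$, so these items remain unconstrained in the event definitions.

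Next I would compute the per-category probabilities using only the independence of the column entries $x_{t,i}$. The probability $q_\infty=(1-p)^K$ is direct, and $q_r=\prod_{t<r}(1-p)^{K_t}\bigl(1-(1-p)^{K_r}\bigr)$ follows by combining the event that no defective at level $<r$ is present with the complement of the event that no level-$r$ defective is present. For $q_{r,s}$, I need all $K_t$ defectives at each level $t<r$ absent, all $K_r-1$ level-$r$ defectives other than $i(r,s)$ absent, and $i(r,s)$ itself present; independence gives $\prod_{t<r}(1-p)^{K_t}\cdot p(1-p)^{K_r-1}$, with no factor involving $K_t$ for $t>r$. Finally, since the $K_r$ single-item events at level $r$ are pairwise disjoint subevents of $\{Y_t=r\}$, the identity $q_{r,+}=q_r-K_r q_{r,s}$ follows by inclusion--exclusion. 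A brief telescoping check $q_\infty+\sum_r q_r=1$ then confirms that we really have a partition.

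There is no serious obstacle here; the argument is essentially bookkeeping. The only subtlety worth guarding against is the temptation to insert a $(1-p)^{K_t}$ factor for $t>r$ into $q_{r,s}$ or $q_{r,+}$, which would be wrong because the event definitions leave the higher-level defectives free. Once the per-test partition and probabilities are verified, the multinomial distribution of $\vc{M}$ over $T$ independent tests follows immediately.
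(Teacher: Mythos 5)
Your proposal is correct and follows essentially the same route as the paper: the paper's proof simply states the per-test probabilities for each category (negative, single level-$r$ item $i(r,s)$ with nothing at levels $\le r$, outcome $r$, and the remainder $q_{r,+}=q_r-K_rq_{r,s}$), leaving the reduction to a single test and the partition structure implicit. Your version spells out those implicit steps — independence of the rows, the partition of the single-test sample space, the fact that higher-level defectives are unconstrained, and the telescoping check — but adds no genuinely different idea.
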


\begin{proof}
First, a test is negative if all defective items are absent, with happens with probability $q_\infty =(1-p)^K$. Second, $q_{r,s}$ is the probability all items at levels $t < r$ are absent, that item $i(r,s)$ is present, and that the $K_r - 1$ other items at level $r$ are absent. Third, $q_r$ is the probability of outcome $r$, which happens all items at level $t < r$ are absent, and also it's not the case that all items at level $r$ are absent. Fourth, $q_{r,+}$ is the probability $q_r$ of an outcome $r$ minus the probabilities of a single level-$r$ item being the cause.
\end{proof}

Although the distribution of the crucial variable $L_{r,s}$ seems tricky to derive from first principles, it is much easier once we know $M_{r,s}$ and the number of potentially intruding non-defectives. This is because a test counting towards $M_{r,s}$ will count also towards $L_{r,s}$ provided that no non-defectives intrude on the test too.

\begin{lemma} \label{lem:Lrscond}
The conditional distribution of $L_{r,s}$, given $M_{r,s}$ and $G_r$, is
\begin{equation} \label{eq:lrsdisn}
 L_{r,s} \mid \left\{M_{r,s} = m_{r,s}, G_r =  g_r \right\} \sim \bin(m_{r,s}, (1-p)^{g_r}).\end{equation}
\end{lemma}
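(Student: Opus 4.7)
The plan is to exploit the product structure of the Bernoulli design by first conditioning on the defective-column entries and then using independence of the non-defective columns. Since non-defective items have $U_i=\infty$ they never influence any $Y_t$, so the defective columns $\{x_{t,i} : i \in \KK\}$ alone determine $\vc{Y}$, and hence the disjoint sets $\mathcal{T}$ of $M_{r,s}$-tests (those counted by $M_{r,s}$, each of outcome $r$) and $\mathcal{T}_{>r}$ of tests of outcome $> r$. Conditional on these columns, $|\mathcal{T}|=M_{r,s}$ is fixed and the remaining entries of $\vc{x}$ are still independent $\bin(1,p)$.

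The key observation is that, for each non-defective $j$, the event $\{\mu_j \leq r\}$ coincides with $\{x_{t,j}=0 \text{ for all } t \in \mathcal{T}_{>r}\}$ (untested items are trivially included, since then $\mu_j = 1$ by convention). Conditioning on $\mu_j \le r$ therefore fixes only the $|\mathcal{T}_{>r}|$ entries of column $j$ sitting in rows $\mathcal{T}_{>r}$, and by independence the remaining entries --- in particular $\{x_{\tau,j} : \tau \in \mathcal{T}\}$ --- remain i.i.d.\ $\bin(1,p)$. This resolves the apparent tension with untested items: the conditioning is on the strictly weaker event $\mu_j \le r$, not on $j$ being untested, so the $\mathcal{T}$-rows stay fresh and the $H_0,H_1,\dots,H_r$ contributions to $G_r$ get swallowed uniformly into a single $(1-p)$ factor per item.

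To finish, observe that a non-defective $j$ intrudes on $\tau \in \mathcal{T}$ as a $\PD(r)$ item iff $x_{\tau,j}=1$ and $\mu_j = r$; since $j$ lying in $\tau$ already forces $\mu_j \ge r$, this reduces to $\mu_j \le r$, i.e.\ $j$ belongs to the random set $\mathcal{G}$ of non-defectives counted by $G_r$. Thus
\[ L_{r,s} = \sum_{\tau \in \mathcal{T}} \prod_{j \in \mathcal{G}}(1-x_{\tau,j}) . \]
After additionally conditioning on the identity of $\mathcal{G}$ (consistent with $G_r=g_r$), the $m_{r,s}\cdot g_r$ variables $\{x_{\tau,j}\}_{\tau \in \mathcal{T},\, j \in \mathcal{G}}$ are i.i.d.\ $\bin(1,p)$ by the previous paragraph; the factor for a single $\tau$ is therefore $\bin\!\big(1,(1-p)^{g_r}\big)$, and different $\tau$ use disjoint entries so these indicators are independent. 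This gives $L_{r,s} \sim \bin\!\big(m_{r,s},(1-p)^{g_r}\big)$ under this finer conditioning, and since the law depends only on $(m_{r,s}, g_r)$ the tower property delivers the statement.

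The main obstacle is exactly the bookkeeping in the second paragraph: one must condition on the coarser event $\mu_j \le r$ rather than on the untested/tested dichotomy, otherwise untested items would deterministically force $x_{\tau,j}=0$ in every $\tau \in \mathcal{T}$ and produce the wrong exponent in the binomial parameter.
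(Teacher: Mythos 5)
Your proof is correct and takes essentially the same route as the paper's: each of the $m_{r,s}$ tests independently avoids all $g_r$ potential intruders with probability $(1-p)^{g_r}$ because of the Bernoulli design. You are more explicit than the paper about why conditioning on $G_r$ (an event measurable with respect to the rows of outcome $>r$ only) leaves the entries $x_{\tau,j}$ for $\tau\in\mathcal{T}$ fresh and i.i.d.\ --- a subtlety the paper's one-line justification passes over --- but the underlying argument is the same.
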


\begin{proof}
There are $m_{r,s}$ tests which contain item $i(r,s)$ and no other defective item $i(t,u)$ with $t \leq r$. Each of these $m_{r,s}$ tests independently contributes to $L_{r,s}$ if and only if none of the $g_r$ potentially intruding non-defective items appear in the test. Because of the Bernoulli design, each of those $g_r$ non-defectives appears in the test with probability $p$, so the probability none of them appear is $(1-p)^{g_r}$, independent over the $m_{r,s}$ tests.
\end{proof}

We can now bound the probability that of undesirable event that $L_{r,s} = 0$.

\begin{lemma} \label{lem:PerrDDubfull}
Using a Bernoulli design with parameter $p$, for any $g_r^*$, we have the bound
\begin{equation} \label{eq:PerrDDubfull}
\mathbb P(L_{r,s} = 0 \mid G_r \leq g_r^*) \leq \exp \big( -q_{r,1}(1-pg_r^*)T\big) , 
\end{equation}
where as in Lemma \ref{lem:mdist} we write $q_{r,1} = p (1-p)^{\sum_{t \leq r} K_t - 1}$.
\end{lemma}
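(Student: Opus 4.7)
The plan is to combine the conditional distribution of $L_{r,s}$ from Lemma~\ref{lem:Lrscond} with the marginal distribution of $M_{r,s}$ from Lemma~\ref{lem:mdist}, followed by two elementary inequalities. First, by Lemma~\ref{lem:Lrscond},
\[ \mathbb{P}(L_{r,s} = 0 \mid M_{r,s}, G_r) = \bigl(1 - (1-p)^{G_r}\bigr)^{M_{r,s}} . \]
Since $g \mapsto 1 - (1-p)^{g}$ is increasing (as $1-p < 1$), on the event $\{G_r \leq g_r^*\}$ the right-hand side is dominated by $(1 - (1-p)^{g_r^*})^{M_{r,s}}$, a quantity which no longer depends on $G_r$.

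Next, I would take expectation and invoke the marginal distribution $M_{r,s} \sim \bin(T, q_{r,1})$ from Lemma~\ref{lem:mdist}. A standard probability generating function calculation for the binomial gives
\[ \mathbb{E}\bigl[(1 - (1-p)^{g_r^*})^{M_{r,s}}\bigr] = \bigl(1 - q_{r,1}(1-p)^{g_r^*}\bigr)^T . \]
The proof is then closed by two classical inequalities: Bernoulli's inequality $(1-p)^{g_r^*} \geq 1 - p g_r^*$, followed by $1 - x \leq \mathrm{e}^{-x}$, yielding the claimed bound $\exp(-q_{r,1}(1-pg_r^*)T)$.

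The main obstacle is the dependence between $M_{r,s}$ and $G_r$: both are driven in part by the number of positive outcome-$r$ tests, and so they are positively correlated. Consequently, $M_{r,s}$ conditional on $\{G_r \leq g_r^*\}$ is stochastically \emph{smaller} than its marginal, which moves the expectation of the decreasing function $m \mapsto (1-(1-p)^{g_r^*})^m$ in the wrong direction for a direct argument. To circumvent this, I plan to first establish the joint estimate
\[ \mathbb{P}(L_{r,s} = 0,\, G_r \leq g_r^*) \leq \mathbb{E}\bigl[\mathbb{I}_{\{G_r \leq g_r^*\}}\,(1-(1-p)^{g_r^*})^{M_{r,s}}\bigr] \leq \bigl(1 - q_{r,1}(1-p)^{g_r^*}\bigr)^T , \]
(where the indicator is dropped in the second step), and then recover the conditional statement by dividing through by $\mathbb{P}(G_r \leq g_r^*)$. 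The factor $1/\mathbb{P}(G_r \leq g_r^*)$ tends to $1$ in all regimes relevant to Theorem~\ref{thm:DDachievability}, by the forthcoming control of $\mathbb{P}(G_r > g_r^*)$ in Subsection~\ref{sec:intruding}, so it is asymptotically harmless when the lemma is plugged into the union-bound argument sketched in~\eqref{eq:firstpart}.
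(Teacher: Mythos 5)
Your computational core---the binomial generating function giving $(1-q_{r,1}(1-p)^{g_r^*})^T$, followed by Bernoulli's inequality and $1-x \le \ee^{-x}$---is exactly the paper's, as is the monotonicity observation that lets you replace $G_r$ by the worst case $g_r^*$. Where you genuinely diverge is in confronting the dependence between $M_{r,s}$ and $G_r$. The paper's proof conditions on $G_r = g_r$ and then decomposes via $\sum_m \pr(M_{r,s}=m)\,\pr(L_{r,s}=0 \mid G_r=g_r, M_{r,s}=m)$, i.e.\ it uses the unconditional law of $M_{r,s}$ inside a conditional decomposition; the dependence you identify is real and works in the unhelpful direction for precisely the reason you give ($M_{r,s}$ and $\ol{M}_r$ are negatively associated multinomial cells, $G_r$ is stochastically decreasing in $\ol{M}_r$, so conditioning on $G_r$ small pushes $M_{r,s}$ down and inflates $\ep[c^{M_{r,s}}]$ for $c<1$). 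Your workaround is sound: the joint estimate $\pr(L_{r,s}=0,\ G_r\le g_r^*) \le \ep\bigl[(1-(1-p)^{g_r^*})^{M_{r,s}}\bigr] = (1-q_{r,1}(1-p)^{g_r^*})^T$ never invokes a conditional law of $M_{r,s}$. The price is that after dividing by $\pr(G_r\le g_r^*)$ you obtain the stated inequality only up to the factor $1/\pr(G_r\le g_r^*)\ge 1$, so strictly speaking you prove a weaker lemma than the one stated. As you note, this is harmless---and in fact the division is unnecessary: the quantity that actually enters \eqref{eq:firstpart} is $\pr\bigl(\bigcup_s\{L_{r,s}=0\}\mid G_r\le g_r^*\bigr)\,\pr(G_r\le g_r^*)$, which is exactly the joint probability you bound, so your joint estimate can be substituted directly into the proof of Theorem~\ref{thm:DDachievability} with no loss whatsoever. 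In short: same calculation, but your version is the more careful one, and the extra factor you carry is the cost of repairing a step that the paper's own proof elides.
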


\begin{proof}
We start by conditioning on equality $G_r = g_r$.
Noting that by Lemma \ref{lem:mdist} $M_{r,s} \sim \bin(T, q_{r,1})$ and that $ \pr( \bin(m,q) = 0) = (1-q)^m$,
 we can write
\begin{align}
\pr( L_{r,s} = 0 \mid G_r = g_r)
&= \sum_{m=0}^T \pr(M_{r,s} = m) \, \pr( L_{r,s} = 0 \mid G_r = g_r, M_{r,s} = m) \nonumber \\
&=  \sum_{m=0}^T \binom{T}{m} q_{r,1}^m (1-q_{r,1})^{T-m} \,  (1 - (1-p)^{g_r})^m \nonumber \\
& = (1- q_{r,1} (1-p)^{g_r} )^T \nonumber \\
&\leq \exp \left( - q_{r,1} (1-p)^{g_r} T \right) \nonumber \\
&\leq \exp \left( -q_{r,1} (1-p g_r) T \right) \label{eq:expand2} .
\end{align}
From the second to the third line, we used the binomial theorem, and then we used Bernoulli's inequality in the form $(1-p)^g \geq 1-p g$.

Note that \eqref{eq:expand2} is increasing in $g_r$. Thus we can bound \eqref{eq:PerrDDubfull} by the worst-case conditioning, where $G_r = g_r^*$.
\end{proof}

\subsection{Intruding non-defectives} \label{sec:intruding}

Recall  that $G_r$ is the number of non-defectives that could intrude into tests with outcome $r$. Our goal is to bound that tail probability of $G_r$ in \eqref{eq:firstpart}.

\begin{lemma} \label{lem:gdist2}
Write
\[ \ol{M}_r := M_\infty + M_d + M_{d-1} + \cdots + M_{r+1} \]
for the number of tests with outcomes higher than $r$. Then $\ol{M}_r$ has distribution
\begin{equation} \label{eq:msumdist}
\ol{M}_r \sim \bin \left( T, \;\; \prod_{t \leq r} (1-p)^{K_t} \right)
\end{equation}

Further, the conditional distribution of $G_r$ given $\vc{M}$ is
\begin{equation} \label{eq:gconddist}
 G_r \mid \{\vc{M} = \vc{m}\} \sim \bin( N-K, (1-p)^{m_r^*}) , \end{equation}
where $m_r^* = m_{r+1} + \cdots + m_d + m_\infty$.
\end{lemma}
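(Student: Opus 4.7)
The plan is to prove the two assertions separately, both exploiting the fact that under a Bernoulli design the column inclusion patterns of distinct items are mutually independent, combined with the observation that every test outcome depends only on the defective items' inclusion pattern.

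For the first assertion, the cleanest route is to apply Lemma~\ref{lem:mdist} and use the elementary fact that merging any subset of categories of a multinomial yields a binomial whose success probability is the sum of the merged cell probabilities. Here the relevant sum is $q_\infty + \sum_{t=r+1}^d q_t$. Writing $P_t := \prod_{s < t}(1-p)^{K_s}$, one sees that $q_t = P_t - P_{t+1}$, so the sum telescopes to $P_{r+1} - P_{d+1} + q_\infty = P_{r+1} = \prod_{t \leq r}(1-p)^{K_t}$. (Alternatively, one can bypass Lemma~\ref{lem:mdist} entirely and simply observe that a given test has outcome strictly greater than $r$ iff none of the $\sum_{t \leq r} K_t$ items at levels $1, \dots, r$ is included in that test, an event of probability $\prod_{t \leq r}(1-p)^{K_t}$; the $T$ tests then contribute independent indicators.)

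For the second assertion, the key observation is that each outcome $Y_t = \min_i\{U_i : x_{ti} = 1\}$ is fully determined by the entries $\{x_{ti} : i \in \KK\}$, since non-defectives have $U_i = \infty$ and never affect the minimum. Hence $\vc{M}$ is measurable with respect to the defective columns of the design matrix, which under the Bernoulli design are independent of the non-defective columns $\{x_{tj} : j \notin \KK\}$. Let $S := \{t : Y_t > r\}$, a random set with $|S| = m_r^*$ whenever $\vc{M} = \vc{m}$. A non-defective item $j$ is counted in $G_r$ precisely when $x_{tj} = 0$ for every $t \in S$, and by the independence just noted this event has conditional probability $(1-p)^{|S|} = (1-p)^{m_r^*}$ given $\vc{M} = \vc{m}$. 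Because the events indexed by different non-defectives $j$ involve disjoint coordinates of the design matrix, they are conditionally independent given $\vc{M}$, so summing over the $N-K$ non-defective items gives the claimed $\bin(N-K, (1-p)^{m_r^*})$ distribution.

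There is no real analytic obstacle; the only care needed is in the conditioning step of part two, where one must cleanly separate the dependence of $\vc{M}$ on the defective columns from the independence of the non-defective columns. Once that separation is made explicit, the binomial formula falls out immediately from the product structure of the Bernoulli design.
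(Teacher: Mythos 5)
Your proof is correct and follows essentially the same route as the paper: the first part merges multinomial cells from Lemma~\ref{lem:mdist} and telescopes the sum to $\prod_{t\leq r}(1-p)^{K_t}$, and the second part counts each non-defective as avoiding the $m_r^*$ high-outcome tests independently with probability $(1-p)^{m_r^*}$. Your explicit separation of the defective columns (which determine $\vc{M}$) from the independent non-defective columns is a slightly more careful justification of the conditioning step that the paper states only implicitly, but it is the same argument.
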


\begin{proof}
By standard properties of the multinomial (see \cite[Lemma 30]{aldridge2014}),
\[ \ol{M}_r \sim \bin \left( T, q_{r+1} + \ldots + q_d + q_\infty \right) . \]
But
\begin{equation} \label{eq:qsum}
 q_{r+1} + \ldots + q_d + q_\infty = \prod_{t \leq r} (1-p)^{K_t}, 
\end{equation}
since it forms a collapsing sum. This proves the first distribution.

Given $\ol{M}_r = m_r^*$, each of the $N - K$ non-defectives will be independently counted in $G_r$ provided they don't appear in any of the $m_r^*$ tests with outcomes higher than $r$. By the Bernoulli design structure, each item is independently counted with probability $(1-p)^{m_r^*}$. This proves the second distribution.
\end{proof}

We can calculate the expectation of $G_r$ by conditioning on $\ol{M}_r$.

\begin{lemma} \label{lem:egr}
$\ep G_r \leq (N-K) \exp(- p \pc_r T)$.
\end{lemma}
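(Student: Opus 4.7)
The plan is to use the tower property of expectation, conditioning on $\ol{M}_r$, and exploit the two distributions stated in Lemma~\ref{lem:gdist2}.

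First, observe that $\sum_{t \leq r} K_t$ appears in the exponent of the success probability in \eqref{eq:msumdist}, and by the definition of $\pc_r$ in Theorem \ref{thm:DDachievability} this probability is precisely $\pc_r$. Hence \eqref{eq:msumdist} reads $\ol{M}_r \sim \bin(T, \pc_r)$. Combined with the conditional distribution \eqref{eq:gconddist}, we obtain
\[ \ep[G_r \mid \ol{M}_r] = (N-K)(1-p)^{\ol{M}_r}, \]
so that by the tower property
\[ \ep G_r = (N-K)\, \ep\!\left[(1-p)^{\ol{M}_r}\right]. \]

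The inner expectation is simply the probability generating function of the binomial distribution $\bin(T, \pc_r)$ evaluated at $z = 1-p$. A direct computation (or equivalently, writing $\ol{M}_r$ as a sum of $T$ i.i.d.\ Bernoulli$(\pc_r)$ variables) gives
\[ \ep\!\left[(1-p)^{\ol{M}_r}\right] = \bigl(1 - \pc_r + \pc_r (1-p)\bigr)^T = (1 - p\,\pc_r)^T. \]

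Finally, applying the elementary inequality $1 - x \leq \exp(-x)$ with $x = p \pc_r \in [0,1]$ yields
\[ (1 - p \pc_r)^T \leq \exp(-p \pc_r T), \]
and combining gives the desired bound $\ep G_r \leq (N-K) \exp(-p \pc_r T)$. No step here is a serious obstacle; the only mild subtlety is recognising the exponent in \eqref{eq:msumdist} as $\pc_r$ and recalling the binomial PGF identity, both of which are routine.
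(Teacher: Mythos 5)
Your proposal is correct and follows essentially the same route as the paper: condition on $\ol{M}_r \sim \bin(T,\pc_r)$, use the conditional binomial distribution of $G_r$ from Lemma \ref{lem:gdist2} to reduce to the binomial generating function $(1-\pc_r+\pc_r(1-p))^T = (1-p\pc_r)^T$, and finish with $1-x \leq \exp(-x)$. No issues.
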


\begin{proof}
We use the facts that \eqref{eq:msumdist} gives that $\ol{M}_r \sim \bin(T,\pc_r)$ and
Lemma \ref{lem:gdist2} gives that $G_r \mid \{\ol{M}_r = m^*_r\} \sim \bin \left( N-K, (1-p)^{m^*_r} \right)$. Hence we can use the binomial theorem to write
\begin{align*}
\ep G_r &= \sum_{m=0}^T \pr(\ol{M}_r = m)\, \ep[ G_r \mid \ol{M}_r = m] \\
&= \sum_{m=0}^T \binom{T}{m} \pc_r^m (1-\pc_r)^{T-m} \, (N-K) (1-p)^m \\
&= (N-K) (1- \pc_r + \pc_r(1-p))^T,
\end{align*}
and the result follows.
\end{proof}

We will choose the threshold $g_r^*$ to be just slightly bigger than this expectation; specifically, we take $g_r^* = (N-K) \exp(-p \pc_r T(1-\epsilon))$ for some $\epsilon >0$ to be determined later.

\begin{lemma} \label{lem:egr2}
With $g_r^* = (N-K) \exp(-p \pc_r T(1-\epsilon))$, we have
\[ \mathbb P(G_r > g_r^*) \leq \exp(-p\pc_r T\epsilon) . \] 
\end{lemma}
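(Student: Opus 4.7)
The plan is to apply Markov's inequality directly, using the expectation bound already established in Lemma \ref{lem:egr}. Specifically, since $G_r$ is a nonnegative integer-valued random variable, Markov's inequality gives
\[ \pr(G_r > g_r^*) \leq \frac{\ep G_r}{g_r^*}. \]
Substituting the bound $\ep G_r \leq (N-K)\exp(-p\pc_r T)$ from Lemma \ref{lem:egr} in the numerator and the chosen value $g_r^* = (N-K)\exp(-p\pc_r T(1-\epsilon))$ in the denominator, the factors of $(N-K)$ cancel and the two exponentials combine to give exactly $\exp(-p\pc_r T\epsilon)$, as claimed.

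There is no real obstacle here; the only thing to watch is a minor technicality that $g_r^*$ is not in general an integer, so one should interpret the event $\{G_r > g_r^*\}$ as $\{G_r \geq \lceil g_r^* \rceil\}$ (or note that Markov's inequality holds for any positive threshold regardless), but this does not change the bound. The choice of $g_r^*$ is deliberately engineered so that it is a factor $\exp(p\pc_r T\epsilon)$ above the mean bound, which is precisely the factor by which Markov's inequality discounts the probability of exceeding it. This tradeoff parameter $\epsilon$ will later be chosen small enough that the penalty $(1-pg_r^*)$ in Lemma \ref{lem:PerrDDubfull} remains close to $1$, while still being large enough that the tail bound $\exp(-p\pc_r T\epsilon)$ decays to zero under the hypothesis on $T$ in Theorem \ref{thm:DDachievability}; but those choices are deferred to the assembly in Subsection \ref{sec:complete}.
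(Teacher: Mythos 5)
Your proposal is correct and follows exactly the same route as the paper: Markov's inequality applied with the expectation bound from Lemma \ref{lem:egr}, after which the $(N-K)$ factors cancel and the exponentials combine to give $\exp(-p\pc_r T\epsilon)$. The additional remarks about the non-integrality of $g_r^*$ and the later role of $\epsilon$ are accurate but not needed for the proof itself.
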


\begin{proof}
This is a simple application of Markov's inequality. Using Lemma \ref{lem:egr}, we get
\begin{align*}
\pr(G_r > g_r^*) &\leq \frac{\ep G_r}{g_r^*} \\
&\leq \frac{(N-K) \exp(-p \pc_r T)}{(N-K) \exp(-p T \pc_r(1-\epsilon))} \\
& = \exp(-p \pc_r T \epsilon) . \qedhere
\end{align*}
\end{proof}

\subsection{Completing the proof} \label{sec:complete}

We are now ready to complete the proof of our main result.

\begin{proof}[Proof of Theorem \ref{thm:DDachievability}]
From \eqref{eq:DDPerrbd2} and \eqref{eq:firstpart}, we had got as far as the bound
\begin{equation} \label{eq:DDbound2}
\Perr \leq \sum_{r=1}^d \big[ K_r \, \pr \left(  L_{r,s} = 0 \mid G_r \leq g_r^* \right) + \pr(G_r > g_r^*) \big] ,
\end{equation}
and in Subsection \ref{sec:intruding} we had chosen $g_r^* = (N-K) \exp(-p \pc_r T(1-\epsilon))$, with $\epsilon$ still to be fixed.
We need to show that, in each summand of \eqref{eq:DDbound2}, both the first and second terms tend to $0$.

We begin with the first term. From Lemma \ref{lem:PerrDDubfull},
we have the bound
\begin{align*} 
K_r \, \mathbb P(L_{r,s} = 0 \mid G_r \leq g_r^*)
  &\leq K_r \exp \big( -q_{r,1}(1-pg_r^*)T\big) \\
  &= \exp \left( \ln K_r - T \pc_r p \frac{1-p g_r^*}{1-p} \right)  ,
\end{align*}
where we have used that $q_{r,1} = \pc_r p/(1-p)$. The condition $T \geq (1+\delta) T_r$ means that $T \pc_r p \geq (1 + \delta) \ln K_r$, so we get
\begin{equation*} 
K_r \, \mathbb P(L_{r,s} = 0 \mid G_r \leq g_r^*)
  \leq \exp \left( \ln K_r \left( 1 - (1+\delta) \frac{1-p g_r^*}{1-p} \right) \right) .
\end{equation*}
This tends to 0 so long as $p g_r^*$ tends to $0$, which we will now check.

Since $T \geq (1+\delta) \Tinf$ and $\pc_r \geq \pc_d$, we know that $T p \pc_r \geq (1+\delta) \ln (N/K) $. With  $g_r^* = (N-K) \exp(-p \pc_r T(1-\epsilon))$, we therefore have
$$p g_r^* \leq \left( \frac{N}{K} \right) \exp \left( - T p \pc_r (1-\epsilon) \right) \leq  \left( \frac{N}{K} \right)^{1-(1-\epsilon)(1+\delta)}.$$ 
This means that $p g_r^* \rightarrow 0$ is indeed guaranteed by choosing $\epsilon < \delta/(1+\delta)$.

Now the second term. From Lemma \ref{lem:egr2}, we have
\[ \mathbb P(G_r > g_r^*) \leq \exp(-p\pc_r T\epsilon) < \exp(-p\pc_r T\delta / (1+\delta)) , \]
since we have just chosen $\epsilon < \delta/(1+\delta)$.
The condition $T \geq (1+\delta) T_r$ gives us  $p \pc_r T/(1+\delta)  = \ln K_r$, so this term does indeed tend to $0$.

Since we have shown that all the terms in \eqref{eq:DDbound2} tend to zero, the proof is complete.
\end{proof}

\section{Converse results} \label{sec:DDconverse}

Our achievability result Theorem \ref{thm:DDachievability} shows that tropical DD can succeed with 
\begin{equation} \label{eq:totaltests}
T \geq (1+\delta) \max\{\Tinf(\new), T_d(\new), T_{d-1}(\new), \dots, T_1(\new) \} \qquad
\text{tests.} \end{equation}
We can use similar ideas to provide a converse result for the tropical DD algorithm.

\begin{theorem} \label{thm:DDconverse}
For a given $\new > 0$, in the limiting regime  where  
\[ T \leq (1-\delta) \max\{T_d(\new), T_{d-1}(\new), \dots, T_1(\new) \} \] 
then the error probability of the tropical DD algorithm for a Bernoulli design with $p=\new/K$ tends to $1$.
\end{theorem}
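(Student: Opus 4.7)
The plan is to run a second-moment argument on the number of defective items at a bottleneck level that no test manages to isolate. Fix $r^* \in \{1,\ldots,d\}$ achieving $\max_r T_r(\new)$, so by hypothesis $T \leq (1-\delta) T_{r^*}(\new)$. Recall from Subsection~\ref{sec:outline} that tropical DD succeeds only if $L_{r,s} \geq 1$ for every $(r,s)$, and from Definitions~\ref{def:Lrs} and~\ref{def:mrs} that $L_{r^*,s} \leq M_{r^*,s}$ (any test that exposes $i(r^*,s)$ must in particular isolate it among defectives of level $\leq r^*$). So it suffices to prove
\[ \pr\!\left( \exists\, s \in \{1, \ldots, K_{r^*}\} : M_{r^*, s} = 0 \right) \to 1 . \]

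The main step is a variance computation for $Z := |\{ s : M_{r^*, s} = 0 \}|$. By Lemma~\ref{lem:mdist}, the events ``test~$t$ is helpful for~$s$'' (that is, contains $i(r^*,s)$ but no other defective of level $\leq r^*$) are mutually exclusive across different $s$, each occurring with probability $q_{r^*, 1}$ per test. Because the $T$ tests are independent, this gives
\[ \ep Z = K_{r^*} (1 - q_{r^*, 1})^T, \qquad \ep Z(Z-1) = K_{r^*}(K_{r^*} - 1)(1 - 2 q_{r^*, 1})^T . \]
The elementary inequality $(1-q)^2 \geq 1 - 2q$ yields $(1 - q_{r^*,1})^{2T} \geq (1 - 2 q_{r^*, 1})^T$, whence $\mathrm{Var}(Z) \leq \ep Z$. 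In other words, the indicators $\mathbf{1}\{M_{r^*, s} = 0\}$ are negatively correlated across $s$, and this is the structural feature that makes the variance tractable.

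It remains to show $\ep Z \to \infty$. Using $q_{r^*,1} = p\pc_{r^*}/(1-p)$ with $p = \new/K$, together with $\ln(1-q) \geq -q/(1-q)$, one gets
\[ \ln \ep Z \geq \ln K_{r^*} - \frac{T p \pc_{r^*}}{(1-p)^2} . \]
The defining relation $T_{r^*}(\new) = K \ln K_{r^*}/(\new \pc_{r^*})$ gives $T p \pc_{r^*} \leq (1-\delta) \ln K_{r^*}$, and since $p \to 0$ in the limiting regime $K \to \infty$ we obtain $\ep Z \geq K_{r^*}^{\delta - o(1)} \to \infty$ (assuming, as is consistent with the regime $K_r \aequi \theta_r K$, that $K_{r^*} \to \infty$). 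Chebyshev's inequality then yields
\[ \pr(Z = 0) \leq \frac{\mathrm{Var}(Z)}{(\ep Z)^2} \leq \frac{1}{\ep Z} \to 0 , \]
so $\pr(Z \geq 1) \to 1$ and tropical DD fails with probability tending to~$1$.

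The chief obstacle is that the events $\{M_{r^*, s} = 0\}$ are not independent, so one cannot simply approximate $\pr(Z = 0)$ by the product $\bigl(1 - (1 - q_{r^*, 1})^T\bigr)^{K_{r^*}}$. What rescues the argument is that a single test can be helpful for at most one value of $s$; this disjointness, made explicit by the multinomial structure of Lemma~\ref{lem:mdist}, produces the negative correlation exploited above and is what makes the converse threshold match the achievability threshold of Theorem~\ref{thm:DDachievability}.
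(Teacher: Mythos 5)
Your proposal is correct, and it reaches the conclusion by a genuinely different route from the paper. Both arguments share the same starting reduction -- tropical DD fails whenever some $M_{r^*,s}=0$, since $L_{r^*,s}\leq M_{r^*,s}$ -- and both rely on the multinomial structure of Lemma \ref{lem:mdist}. But where you run a second-moment argument on $Z=|\{s:M_{r^*,s}=0\}|$, the paper computes $\pr\big(\bigcup_s\{M_{r^*,s}=0\}\big)$ exactly by inclusion--exclusion, expresses it as $1-\phi_{K_{r^*}}(q_{r^*,1},T)$ for the function $\phi_K$ of \eqref{eq:phidef}, and then invokes a separately proved upper bound on $\phi_K$ (Lemma \ref{lem:phiupperbd}, established in the appendix via monotonicity of $\phi_K$ in $K$ and a multiplicative recursion in $T$). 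Your key structural observation -- that a single test can be ``helpful'' for at most one $s$, so the joint probability $\pr(M_{r^*,s}=0,\,M_{r^*,s'}=0)=(1-2q_{r^*,1})^T$ and the elementary inequality $(1-q)^2\geq 1-2q$ give $\mathrm{Var}(Z)\leq \ep Z$ -- is sound, as is the estimate $\ep Z\geq K_{r^*}^{\delta-o(1)}$ (your use of $1-q_{r^*,1}\geq 1-p$ is justified because $\pc_{r^*}\leq 1-p$). What each approach buys: yours is shorter and avoids the $\phi_K$ machinery entirely, at the cost of a weaker quantitative conclusion ($\Psuc\leq \mathrm{Var}(Z)/(\ep Z)^2 = O(K_{r^*}^{-c})$, polynomial decay), whereas the paper's exact identity yields $\Psuc\leq\exp(-K_{r^*}^{c})$, a much faster decay; both suffice for the statement that $\Perr\to 1$. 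The only caveat worth flagging is the assumption $K_{r^*}\to\infty$, which you correctly note and which the paper's proof also requires implicitly.
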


Note that the difference between the achievablility and the converse results is the lack of the $\Tinf$ term in the converse.

We will prove Theorem \ref{thm:DDconverse} for tropical DD in Subsection \ref{sec:DDconpf}. In addition, we will show in Subsection \ref{sec:gencon} that this same bound acts as a more general `algorithm-independent' converse for Bernoulli designs.

\subsection{Proof for tropical DD} \label{sec:DDconpf}

The key to proving Theorem \ref{thm:DDconverse} is to observe that tropical DD will definitely fail if any $M_{r,s} = 0$, since that means that item $i(r,s)$ never appears without at least one other defective item with which it could be confused, so $L_{r,s}$ is certainly $0$ too.

Thus we start with the bound
\begin{equation}
 \Perr  \geq  \pr \left( \bigcup_{r,s} \{ M_{r,s} = 0 \} \right) .
\end{equation}
By picking out just the defective items at a given level $r = r^*$, we have
\begin{equation} \label{eq:conversefirst}
\Perr  \geq  \pr \left( \bigcup_{s=1}^{K_{r^*}} \{ M_{r^*,s} = 0 \} \right)
\end{equation}

As in \cite[Eq.~(13)]{aldridge2014}, we define the function
\begin{equation} \label{eq:phidef}
\phi_K(q, T) :=
\sum_{j=0}^{K} (-1)^{j} \binom{K}{j} (1 - j q)^T.
\end{equation}
We will bound the error probability in terms of $\phi_K$ as follows.

\begin{lemma}
For $1 \leq r^* \leq d$, the error probability of tropical DD is bounded below by
\begin{equation} 
 \Perr  \geq   1- \phi_{K_r^*}( q_{r^*,1}, T), \label{eq:fullbdphi} 
\end{equation}
\end{lemma}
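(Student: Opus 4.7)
The plan is to start from the bound \eqref{eq:conversefirst} and compute the probability of the union exactly via inclusion--exclusion, exploiting the multinomial structure of $\vc{M}$ described in Lemma \ref{lem:mdist}.

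First I would write
\[ \pr\!\left( \bigcup_{s=1}^{K_{r^*}} \{M_{r^*,s} = 0\} \right) = \sum_{j=1}^{K_{r^*}} (-1)^{j+1} \sum_{\substack{S \subseteq \{1,\ldots,K_{r^*}\} \\ |S| = j}} \pr\!\left( \bigcap_{s \in S} \{M_{r^*,s} = 0\}\right) . \]
The key step is evaluating the inner joint probability. By Lemma \ref{lem:mdist}, the random vector $\vc{M}$ is multinomial with parameters $T$ and $\vc{q}$, which means each of the $T$ tests independently falls into exactly one of the categories indexed by $\vc{q}$. The event $\bigcap_{s \in S} \{M_{r^*,s} = 0\}$ is precisely the event that none of the $T$ tests falls into any of the $j$ categories labelled $(r^*, s_1), \ldots, (r^*, s_j)$. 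Since by symmetry each of these categories has probability $q_{r^*,1}$ and they are mutually exclusive among the $T$ trials, a single test avoids all of them with probability $1 - j\, q_{r^*,1}$, and independence across tests gives
\[ \pr\!\left( \bigcap_{s \in S}\{M_{r^*,s} = 0\} \right) = (1 - j\, q_{r^*,1})^T . \]

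The expression depends on $S$ only through its size, so the inner sum collapses to a factor of $\binom{K_{r^*}}{j}$. Substituting back and using the definition \eqref{eq:phidef} of $\phi_K$, I get
\[ \pr\!\left(\bigcup_{s=1}^{K_{r^*}} \{M_{r^*,s} = 0\}\right) = -\sum_{j=1}^{K_{r^*}} (-1)^{j} \binom{K_{r^*}}{j}(1-j\, q_{r^*,1})^T = 1 - \phi_{K_{r^*}}(q_{r^*,1}, T) , \]
where I have added and subtracted the $j=0$ term (which contributes $1$). Combined with \eqref{eq:conversefirst}, this yields \eqref{eq:fullbdphi}.

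The only real subtlety is justifying the joint probability calculation: one must recognise that the $j$ events $\{M_{r^*,s}=0\}$ are correlated, but their intersection has a clean product form because the underlying multinomial assigns each test independently to exactly one category. Aside from this observation, the argument is a direct inclusion--exclusion computation and should fit in a handful of lines.
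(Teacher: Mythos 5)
Your proof is correct and follows essentially the same route as the paper's: inclusion--exclusion over the events $\{M_{r^*,s}=0\}$, evaluation of the joint probability $(1-j\,q_{r^*,1})^T$ via the multinomial structure of $\vc{M}$ from Lemma \ref{lem:mdist}, and collection of terms by subset size to recognise $\phi_{K_{r^*}}$. Your sign bookkeeping (summing from $j=1$ with $(-1)^{j+1}$ and then adding back the $j=0$ term) is in fact slightly more careful than the paper's displayed version of the inclusion--exclusion identity, and correctly lands on $1-\phi_{K_{r^*}}(q_{r^*,1},T)$.
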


\begin{proof}
We follow the general idea from \cite{aldridge2014}.

We can calculate \eqref{eq:conversefirst} using the inclusion--exclusion formula
\begin{equation} \label{eq:incexc}
\pr \left( \bigcup_{s=1}^{K_{r^*}} \{ M_{r^*,s} = 0 \} \right) 
= \sum_{|S| = 0}^{K_{r^*}} (-1)^{|S|} \,  \pr \left( \bigcap_{s \in S} \{ M_{r^*,s} = 0 \} \right) ,
\end{equation}
where the sum is over subsets $S$ of $\mathcal K_r$.
By the multinomial distribution form of Lemma \ref{lem:mdist}, we have
\begin{align}
\pr \left( \bigcap_{s \in S} \{ M_{r^*,s} = 0 \} \right) &= \binom{T}{0, 0, \ldots, 0, T} \left( \prod_{s \in S} q_{r^*,s}^0 \right) \left( 1 - \sum_{s \in S} q_{r^*,s} \right)^T \nonumber \\
&= \left( 1 - \sum_{s \in S} q_{r^*,s} \right)^T \notag \\
&= \left(1 - |S| q_{r^*,1}\right)^T .
\label{eq:MNintersect} 
\end{align}
Substituting \eqref{eq:MNintersect} into \eqref{eq:incexc} gives
\[ \pr \left( \bigcup_{s=1}^{K_{r^*}} \{ M_{r^*,s} = 0 \} \right) 
= \sum_{|S| = 0}^{K_{r^*}} (-1)^{|S|} \,  \left(1 - |S| q_{r^*,1}\right)^T . \]
Collecting together the summands according to the value of $|S| = j$ gives the result.
\end{proof}

We bound this quantity from below by deducing an upper bound on $\phi_K$. 

\begin{lemma} \label{lem:phiupperbd}
For all $K$, $q$ and $T$ we can bound
\begin{equation} \phi_K(q,T) \leq  \exp \left( - \frac{ K (1-q)^{T+1}}{1 + K q(1-q)^T} \right) \label{eq:newphiUB2}
\end{equation}
\end{lemma}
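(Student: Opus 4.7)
The plan is to interpret $\phi_K(q,T)$ probabilistically, reduce to a product of marginal probabilities via a negative-dependence inequality, and then match up exponents with elementary inequalities.

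First, I would observe that the inclusion--exclusion calculation used in the proof of Lemma~\ref{lem:phiupperbd}'s predecessor identifies $\phi_K(q,T)$ with $\pr(M_1 \geq 1, \ldots, M_K \geq 1)$ in the multinomial scheme where $T$ balls are thrown into $K$ ``special'' cells with probability $q$ each (plus an ``other'' cell with probability $1-Kq$); equivalently, $\phi_K(q,T)$ is the probability that every one of the $K$ special cells receives at least one ball. Since $\mathbf{1}\{M_s \geq 1\}$ is a non-decreasing function of $M_s$ and multinomial random vectors are negatively associated, we obtain the key intermediate bound
\[ \phi_K(q,T) \leq \prod_{s=1}^K \pr(M_s \geq 1) = \bigl(1 - (1-q)^T\bigr)^K . \]

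Second, I would apply the elementary inequality $(1-x)^K \leq \ee^{-Kx}$ with $x = (1-q)^T$ to obtain $\phi_K(q,T) \leq \ee^{-K(1-q)^T}$. The stated bound then follows from the trivial inequality $1 + Kq(1-q)^T \geq 1 \geq 1-q$, which gives $K(1-q)^T \geq K(1-q)^{T+1}/(1+Kq(1-q)^T)$, combined with monotonicity of $\exp(\cdot)$.

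The main obstacle is establishing the intermediate bound $\phi_K(q,T) \leq \bigl(1-(1-q)^T\bigr)^K$. Invoking negative association of the multinomial is clean but brings in external machinery; a self-contained derivation could instead induct on $K$ via the recursion obtained by conditioning on $M_K$ (giving $\phi_K(q,T) = \sum_{m\geq 1}\binom{T}{m}q^m(1-q)^{T-m}\phi_{K-1}(q/(1-q), T-m)$), or use a Chung--Erd\H{o}s second-moment bound on $Z = \sum_s \mathbf{1}\{M_s = 0\}$, but either alternative requires additional technical work. The remaining manipulations are routine calculus.
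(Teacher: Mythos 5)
Your proof is correct, and it in fact establishes something strictly stronger than \eqref{eq:newphiUB2}, but it takes a genuinely different route from the paper. The paper works entirely with the algebraic definition \eqref{eq:phidef}: it first proves two technical facts --- that $\phi_K(q,T)$ is decreasing in $K$ (Lemma \ref{lem:phidecK}) and that $\phi_K(q,T+1) \geq \phi_K(q,T)\left(1+Kq(1-q)^T\right)$ (Lemma \ref{lem:phidecT}) --- and then telescopes the latter over $t = T, T+1, \dots$, bounding the resulting sum $\sum_{\ell \geq T} \log\left(1+Kq(1-q)^\ell\right)$ from below by an integral to land exactly on \eqref{eq:newphiUB2}. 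You instead use the probabilistic identity $\phi_K(q,T) = \pr(M_1 \geq 1, \dots, M_K \geq 1)$ --- which is indeed precisely how $\phi_K$ arises via the inclusion--exclusion step \eqref{eq:incexc} --- together with negative association of the multinomial to get $\phi_K(q,T) \leq \bigl(1-(1-q)^T\bigr)^K \leq \ee^{-K(1-q)^T}$, and your final elementary step (that $1+Kq(1-q)^T \geq 1 \geq 1-q$) correctly shows this dominates the stated bound. What your approach buys is a shorter argument and a cleaner, strictly sharper estimate that would serve equally well in the proof of Theorem \ref{thm:DDconverse}; what it costs is the appeal to negative association (standard, e.g.\ Joag-Dev and Proschan, but external machinery, as you acknowledge), whereas the paper's argument is self-contained apart from the monotonicity-in-$q$ fact quoted from \cite{aldridge2014}. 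One minor caveat, shared by both arguments: the probabilistic interpretation (and hence the literal claim ``for all $K$, $q$ and $T$'') implicitly requires $0 \leq q$ and $Kq \leq 1$; this holds where the lemma is applied, since $K_{r^*} q_{r^*,1} \leq q_{r^*} \leq 1$.
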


\begin{proof}
See Appendix \ref{sec:DDlbpf}.
\end{proof}

We can now finish our proof of the converse for tropical DD.

\begin{proof}[Proof of Theorem \ref{thm:DDconverse}]
By hypothesis, $T \leq (1-\delta) \max_r T_r$. So pick some level $r^*$ such that $T \leq (1-\delta) T_{r^*}$.

We had already reached the bound \eqref{eq:fullbdphi}:
\begin{equation*} 
 \Perr  \geq   1- \phi_{K_r^*}( q_{r^*,1}, T) .
\end{equation*}
We now combine this with Lemma \ref{lem:phiupperbd}. We deduce that
\begin{equation} \label{eq:stepA0}
  \Perr \geq 1- \exp \left( - \frac{ K_{r^*} (1-q_{r^*,1})^{T+1}}{1 + K_{r^*} q_{r^*,1} (1-q_{r^*,1})^T} \right) .
\end{equation}
The exponential term here is of the form $\exp(-(1-q)u/(1+qu))$, with $u = K_{r^*}(1-q_{r^*,1})^T$. Since $\exp(-(1-q)u/(1+qu))$ increases as $u$ decreases (for fixed $q$), it suffices to bound $K_{r^*}(1-q_{r^*,1})^T$ from below, which we do now.

Since $q_{r^*,1} = p \pc_{r^*}/(1-p)$ we know that 
\[ \frac{q_{r^*,1}}{1-q_{r^*,1}} = \frac{p \pc_{r^*}}{1-p(1+\pc_{r^*})} . \]
Combining this with $T \leq (1-\delta)T_{r^*} = (1-\delta) K  \ln K_{r^*}/(\new \pc_{r^*})$ gives
\begin{align} 
\frac{T q_{r^*,1}}{1-q_{r^*,1}}
& \leq \frac{(1-\delta) P}{\new \pc_{r^*}} \frac{p \pc_{r^*}}{1-p(1+\pc_{r^*})} \ln K_{r^*} \notag \\
&= \frac{(1-\delta)}{1-p(1+\pc_r^*)} \ln K_r^* \notag \\
&\leq (1-c) \ln K_r^*, \label{eq:stepAhalf}
\end{align}
for some $c > 0$, for $K_{r^*}$ sufficiently large. This gives us the lower bound
\begin{align} 
K_{r^*} (1-q_{r^*,1})^T
&= K_{r^*} \exp( T \log(1-q_{r^*,1})) \notag \\
&\geq K_{r^*} \exp \left( - \frac{T q_{r^*,1}}{1-q_{r^*,1}} \right) \notag \\
&\geq K_{r^*}^c,\label{eq:Krbd}
\end{align}
where we used $\log(1-q) \geq -q/(1+q)$ for $q > -1$ and \eqref{eq:stepAhalf}.

Using the bound \eqref{eq:Krbd} in \eqref{eq:stepA0}, we get
\begin{align} 
  \Perr &\geq 1- \exp \left( - \frac{ K_{r^*} (1-q_{r^*,1})^{T+1}}{1 + q_{r^*,1} K (1-q_{r^*,1})^T} \right) \nonumber \\
        &\geq 1- \exp \left( - \frac{ K_{r^*}^c (1-q_{r^*,1})}{1 + q_{r^*,1} K_{r^*}^c} \right)  \notag \\ 
        &= 1 - \exp \left( - \frac{ (1- \pc_{r^*}/(K-1)) K_{r^*}^c}{1 + \pc_{r^*} K_{r^*}^c/(K-1)} \right). \label{eq:stepA2}
\end{align}
where \eqref{eq:stepA2} follows since 
$q_{r^*,1} = p \pc_{r^*}/(1-p) = \pc_{r^*}/(K-1)$.

Finally, that bound \eqref{eq:stepA2} is asymptotically equivalent to $1 - \exp(-K_{r^*}^c)$, which tends to 1 as $K \rightarrow \infty$. This completes the proof.
\end{proof}

\subsection{Algorithm-independent converse} \label{sec:gencon}

In fact, our DD-specific converse, Theorem \ref{thm:DDconverse}, helps give a converse bound for \emph{all} algorithms with a Bernoulli design.

We can write $\PERR{optimal}{T}$ for the minimum error probability that can be achieved by any algorithm. The key observation is that in Theorem \ref{thm:DDconverse} we find that with too few tests there is a good chance that some item $i(r^*,s)$ appears in no tests without other items of the same level, so a satisfying vector can be formed without it.

\begin{theorem} \label{thm:genconv}
For a given $\new > 0$ and any $T \leq (1-\delta) \Tfin(\new)$, the error probability of the optimal algorithm $\PERR{optimal}{T}$ for a Bernoulli design with parameter $\new/K$ is bounded away from zero, even for algorithms which are given access to the values of $K_i$.
\end{theorem}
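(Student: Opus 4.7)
The plan is a swap argument: I would exhibit, with probability tending to one, a second defectivity vector $\vc{U}'\neq\vc U$ sharing the profile $\vc K$ with the true $\vc U$ and producing the same outcomes $\vc Y$ under the matrix $\vc x$. The combinatorial prior conditioned on $\vc K$ is uniform over admissible vectors, and the likelihood $\pr(\vc Y\mid\vc U,\vc x)$ is a $0/1$ indicator of consistency, so the posterior $\pr(\vc U\mid\vc x,\vc Y,\vc K)$ is uniform on the consistent set. Whenever that set has cardinality at least two, the Bayes-optimal decoder errs with conditional probability at least $1/2$, so $\PERR{optimal}{T}\geq\tfrac12\,\pr(E)$, where $E$ denotes the event that such an alternative $\vc U'$ exists.

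To construct the swap, I would fix a level $r^*$ attaining the maximum in $\Tfin(\new)$, so that $T\leq(1-\delta)T_{r^*}(\new)$, and work with the sub-event
\[ E_0 \;=\; \Big(\bigcup_{s=1}^{K_{r^*}}\{M_{r^*,s}=0\}\Big)\cap\{G_{r^*}\geq 1\}. \]
On $E_0$, I pick any $s$ witnessing $M_{r^*,s}=0$ and any non-defective $j$ with $\mu_j\leq r^*$ (which exists since $G_{r^*}\geq 1$), and form $\vc U'$ by the two-item swap $U'_{i(r^*,s)}=\infty$, $U'_j=r^*$, leaving all other coordinates intact. The profile $\vc K$ is preserved. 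Every test outcome is preserved too: a test containing $i(r^*,s)$ must, by $M_{r^*,s}=0$, also contain another defective of level $\leq r^*$, so resetting $U_{i(r^*,s)}$ to $\infty$ cannot raise the minimum above $r^*$; a test containing $j$ has outcome $\leq\mu_j\leq r^*$, so raising $U_j$ from $\infty$ to $r^*$ cannot decrease the minimum; and on tests containing both, the multiset of levels within the test is unchanged. Thus $E_0\subseteq E$.

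It remains to show $\pr(E_0)\to 1$. A union bound gives
\[ \pr(E_0) \;\geq\; \pr\Big(\bigcup_{s}\{M_{r^*,s}=0\}\Big) \;-\; \pr(G_{r^*}=0). \]
The first term tends to $1$ by the same inclusion--exclusion identity and Lemma \ref{lem:phiupperbd} bound as in the proof of Theorem \ref{thm:DDconverse}, applied at level $r^*$ and needing only $T\leq(1-\delta)T_{r^*}$. For the second term, Lemma \ref{lem:gdist2} gives $\pr(G_{r^*}=0\mid\ol M_{r^*}=m) = (1-(1-p)^m)^{N-K}$, monotone increasing in $m$, with $\ol M_{r^*}\sim\bin(T,\pc_{r^*})$. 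I would split on whether $\ol M_{r^*}$ exceeds $m^*=(1+\epsilon)T\pc_{r^*}$, handling the upper tail by a Chernoff estimate when $T\pc_{r^*}\to\infty$ and by the crude bound $G_{r^*}\geq H_0\sim\bin(N-K,(1-p)^T)$ in the small-$T$ regime, to conclude $\pr(G_{r^*}=0)\to 0$ under $K\aequi N^\alpha$ with $\alpha\in(0,1)$; the key estimate is that $(1-p)^{m^*}\gtrsim K_{r^*}^{-(1+\epsilon)(1-\delta)}$ while $N-K$ grows polynomially faster in $N$, driving the exponent $(N-K)K_{r^*}^{-(1+\epsilon)(1-\delta)}\to\infty$ for sufficiently small $\epsilon$.

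The main obstacle will be the clean verification of $\pr(G_{r^*}=0)\to 0$: the argument needs the scaling $\alpha<1$ (so $N-K$ dominates any fixed power of $K_{r^*}$), a choice of $\epsilon$ small enough that $(1+\epsilon)(1-\delta)<1$, and a separate fallback for the regime where $T\pc_{r^*}$ fails to grow. The inclusion--exclusion analysis of the $M_{r^*,s}$ events transplants directly from the proof of Theorem \ref{thm:DDconverse}; the genuinely new content is the two-item swap and the reduction to posterior uniformity under a $\vc K$-conditioned prior, which replaces algorithm-specific reasoning by an information-theoretic lower bound.
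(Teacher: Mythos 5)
Your proposal is correct and follows essentially the same route as the paper: the same exchange argument on the event $\bigcup_{s}\{M_{r^*,s}=0\}$ intersected with the existence of an intruding non-defective, the same lower bound $\pr(A\cap B)\geq \pr(A)-\pr(B^c)$ giving a factor $\tfrac12$, and the same reuse of the inclusion--exclusion/$\phi_K$ analysis from Theorem \ref{thm:DDconverse} for the first term. The one (valid) difference is that you require only $G_{r^*}\geq 1$ where the paper requires $H_{r^*}\geq 1$ --- your swap still preserves all outcomes when $\mu_j<r^*$ or $j$ is untested, since every test containing such a $j$ has outcome at most $\mu_j<r^*$ --- and this mildly simplifies the endgame by letting you bound $\pr(G_{r^*}=0)$ directly (as in Lemma \ref{lem:gbd}) without the decomposition $\pr(H_{r^*}=0)\leq\pr(G_{r^*}=0)+\ep(1-p)^{M_{r^*}}$ of Lemmas \ref{lem:new1} and \ref{lem:new2}.
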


\begin{proof}
First note that for any $T' \geq 1$ the $\PERR{optimal}{T} \geq \PERR{optimal}{T+T'}$, since there exists a (possibly suboptimal) algorithm using $T+T'$ tests which simply ignores the last $T'$ tests and applies the optimal $T$-test algorithm to the remaining tests. Hence it will be sufficient to bound $\PERR{optimal}{T}$ away from zero for $T = (1-\delta) \max_r T_r$, as the same bound will hold for all $T \leq (1-\delta) \max_r T_r$.

We argue as in \cite{aldridge2017}. Recall that $M_{r,s}$ is the number of tests that have a chance of proving $i(r,s)$ is defective at level $r$, and $H_r$ is the number of non-defective items in $\PD(r)$. The key idea is this: Suppose for some $r^*$ that both $A  = \bigcup_{s=1}^{K_{r^*}} \{ M_{{r^*},s} = 0 \}$ and $B = \{ H_{r^*} \geq 1 \}$ occur. The event $A$ means that there is some item $i(r^*,t)$ which never appears without some other item $j$ of level $U_j \leq r$, and the event $B$ means that there is some non-defective item which is a possible defective at that level. So we could form an alternative satisfying vector from the true vector $\vc{U}$ by swapping the entries in $\vc{U}$ of these two items. Hence, if $A \cap B$ occurs, then there are at least two satisfying vectors with the correct number of items at each level, so we the success probability can only be at most $1/2$.

The probability of an intersection can always be bounded with $\pr(A \cap B) \geq \pr(A) - \pr(B^c)$, so the error probability for any algorithm satisfies
\begin{eqnarray*}
 \Perr \geq \frac{1}{2} \, \pr \left( 
 \bigcup_{s=1}^{K_{r^*}} \{ M_{r^*,s} =0 \} \right)
 -  \frac{1}{2}\, \pr(H_{r^*} =  0).
 \end{eqnarray*}
Now the first term involves exactly the term we have controlled in Theorem \ref{thm:DDconverse}, so we know it is larger than $1/4$ in the regime of interest for $K$ sufficiently large. Hence, to bound the error probability away from zero it will be sufficient to prove that $\pr(H_{r^*} = 0) \leq 1/4$.

We will prove this in a series of technical lemmas in Appendix \ref{sec:algindlem}:
\begin{enumerate}
\item In Lemma \ref{lem:new1}, we will show that
\[ \pr(H_{r^*} = 0) \leq \pr(G_{r^*} =0) + \ep (1-p)^{M_{r^*}} . \]
We deal with the two terms separately.
\item In Lemma \ref{lem:gbd}, we will show that the first term is bounded by
\begin{equation} \pr(G_{r^*} = 0) \leq \left( 1- (1-p)^{m_{r^*}^*} \right)^{N-K} + \exp \left( -\frac{ \delta^2 T \pc_{r^*}}{2} \right) , \label{eq:glprb}
\end{equation}
where $m_\ell^* = T \pc_\ell (1+\delta)$.
\item In Lemma \ref{lem:new2}, we show that the second term is bounded by 
\begin{equation}
\ep (1-p)^{M_{r^*}} \leq \exp( - p \pc_{r^*} d_{r^*} T) , \label{eq:mlglbd}
\end{equation}
where $d_\ell = (1-p)^{-K_\ell} - 1$.
\end{enumerate}

Recall that we consider $T = (1-\delta) T_{r^*} = (1-\delta) K \ln K_{r^*}/(\new \pc_{r^*})$, for the maximising $r^*$. Since $p \pc_{r^*} T = (1-\delta) \ln K_r^*$, we know that $(1-p)^{m_{r^*}} \aequi K^{-(1-\delta^2)}$, so that both terms in \eqref{eq:glprb} tend to zero. Similarly, \eqref{eq:mlglbd} also tends to zero for this choice of $\ell$ and $T$. This completes the proof.
\end{proof}

\section{Discussion}
In this paper, we have considered the tropical group testing model of Wang {\em et al.} \cite{wang} in a small-error setting. We have described small-error algorithms in Section \ref{sec:algo}. We demonstrated the empirical performance of these algorithms in Section \ref{sec:simulation}, showing that tropical DD and tropical SCOMP outperform their classical counterparts. We performed theoretical analysis of the tropical COMP algorithm in Section \ref{sec:analysis_COMP} and of the DD algorithm in Sections \ref{sec:analysis_DD} and \ref{sec:DDconverse}, proving that in certain parameter regimes the tropical DD algorithm is asymptotically optimal.

We briefly mention some open problems. Further work could explore test designs with near-constant column weights in the tropical setting, as these designs show a gain in performance in the classical case (see \cite{aldridge2019}), and Figure \ref{fig:diffdesigns} suggests the same may well be true here. The results could be made more practically valuable by developing bounds in a noisy setting, under a variety of noise models similar to those described in \cite[Chapter 4]{aldridge2019}. Also, there is potential to extend the results in this paper by considering models with random defectivity levels, as illustrated in Figure \ref{fig:diffd}. It may also be mathematically interesting to develop small-error algorithms and bounds using the delay matrix approach of \cite{wang}.

\section*{Acknowledgements}
This work was carried out while Vivekanand Paligadu was on a University of Bristol School of Mathematics undergraduate summer bursary placement, funded by Mark Williams Alumni Funds.


\appendix

\section{Proof of the counting bound} \label{sec:counting}

We now prove Theorem \ref{thm:counting}, using an argument closely adapted from \cite[Proof of Theorem 3.1]{baldassini2013}.

\begin{proof}[Proof of Theorem \ref{thm:counting}]
Given $N$, $\Dset$ and $\vc{K} = (K_1, \ldots, K_d, K_{\infty})$, we write 
$$ \Sigma_{N,\vc{K}} = 
\left\{ \vc{u} \in \Dset^N: | \{ j: u_j = r \}| = K_r \text{ for all } r \in \Dset \right\}
$$ 
for the collection of vectors with the correct number of each type of component. Note that $| \Sigma_{N,\vc{K}}| = \binom{N}{\vc{K}}$ is precisely the multinomial term arising in Theorem \ref{thm:counting}. Further, we write $\defvec$ for the true defectivity vector.

The testing procedure naturally 
defines a mapping $\theta: \Sigma_{N,\vc{K}} \rightarrow \Dset^T$. That is, given a putative defectivity vector $\vc{u} \in \Sigma_{N, \vc{K}}$, write $\theta(\vc{u})$ to be the vector of test outcomes. For each vector $\vc{y} \in \Dset^T$, write $\AAA{y} \subseteq
\Sigma_{N, \vc{K}}$ for the inverse image of $\vc{y}$ under $\theta$,
$$ \AAA{y} = \theta^{-1}(\vc{y}) = 
\left\{ \vc{u} \in \Sigma_{N,\vc{K}}: \theta( \vc{u}) = \vc{y} \right \}.$$
As described in \cite{baldassini2013}, a decoding algorithm aims to mimic the inverse image map $\theta^{-1}$. Exactly as in \cite{baldassini2013}, if size $\AAS{y} \geq 1$ we cannot do better than to pick uniformly among $\AAA{y}$, with success probability $1/\AAS{y}$. (We can ignore $\vc{y}$ with size $\AAS{y} = 0$ because these are outcomes which do not occur for any possible defectivity vector).

We can find the success probability by conditioning over all the equiprobable values of the defective set:
\begin{align*}
  \Psuc 
        &= \sum_{u \in \Sigma_{N, \vc{K}}} \frac{1}{\binom{N}{\vc{K}}}  \pr \left( \suc \mid  \defvec= \vc{u} \right)   \\
    &= \frac{1}{\binom{N}{\vc{K}}} \sum_{\vc{u} \in \Sigma_{N,\vc{K}}} \left( \sum_{\vc{y} \in \Dset^T} \II( \theta(\vc{u}) = \vc{y}) \right) \pr \left( \suc \mid  \defvec= \vc{u} \right)  \\
    & \leq \frac{1}{\binom{N}{\vc{K}}} \sum_{ \vc{u} \in \Sigma_{N,\vc{K}}}  \sum_{\vc{y} \in \Dset^T: \AAS{y} \geq 1} 
         \II( \theta(\vc{u}) = \vc{y}) \frac{1}{\AAS{y}} \\
    &= \frac{1}{\binom{N}{\vc{K}}} \sum_{\vc{y} \in \Dset^T: \AAS{y} \geq 1}   \frac{1}{\AAS{y}}
         \left( \sum_{\vc{u} \in \Sigma_{N,\vc{K}}} \II( \theta(\vc{u}) = \vc{y}) \right) \\
    &= \frac{1}{\binom{N}{\vc{K}}} \sum_{\vc{y} \in \Dset^T: \AAS{y} \geq 1}
         \frac{1}{\AAS{y}}  \AAS{y}  \\
    &= \frac{| \{ \vc{y} \in \Dset^T: \AAS{y} \geq 1 \} |}{\binom{N}{\vc{K}}} \\
    &\leq \frac{(d+1)^T}{\binom{N}{\vc{K}}},
\end{align*}
since $\Dset^T$ is a set of size $(d+1)^T$.
\end{proof}

\section{Lemmas for converse results} \label{sec:conv}

\subsection{Properties of \texorpdfstring{$\phi_K$}{phiK}} \label{sec:DDlbpf}

We seek to prove Lemma \ref{lem:phiupperbd}, which gives an upper bound on $\phi_K$ and hence a lower bound on $\Perr$. We first need two technical results.

First, recall from \cite[Lemma 32]{aldridge2014} that $\phi_K(q,T)$ is increasing in $q$ for fixed $K$ and $T$. We can provide a similar analysis to show that:

 \begin{lemma} \label{lem:phidecK}
For fixed $T$ and $q$, the function $\phi_K(q,T)$ is decreasing in $K$.
 \end{lemma}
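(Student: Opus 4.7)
The plan is to give $\phi_K(q,T)$ a probabilistic interpretation via the same multinomial inclusion--exclusion that appears in the proof of Theorem \ref{thm:DDconverse} (see the display \eqref{eq:MNintersect}), from which monotonicity in $K$ will follow by a short coupling argument.

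First I would set up the probabilistic model. For $K$ with $Kq \leq 1$, consider $T$ independent trials each falling into one of $K+1$ cells: cells $1, \ldots, K$ each with probability $q$, and a residual cell labelled $0$ with probability $1-Kq$. If $(M_1, \ldots, M_K, M_0)$ are the cell counts, then inclusion--exclusion (formally identical to the calculation at \eqref{eq:MNintersect}) gives
\[ \pr\!\left(\bigcap_{s=1}^K \{M_s \geq 1\}\right) = \sum_{j=0}^K (-1)^j \binom{K}{j}(1-jq)^T = \phi_K(q,T) . \]
Next I would couple the $K$-cell and $(K+1)$-cell models by observing that merging cell $K+1$ with the residual cell in the $(K+1)$-cell model produces a $K$-cell model with exactly the same cell probabilities as the original. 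Thus, if $(M'_1, \ldots, M'_{K+1}, M'_0)$ are the counts in the $(K+1)$-cell model, then $(M'_1, \ldots, M'_K)$ shares its joint law with $(M_1, \ldots, M_K)$, and since trivially $\bigcap_{s=1}^{K+1}\{M'_s \geq 1\} \subseteq \bigcap_{s=1}^K \{M'_s \geq 1\}$ we deduce
\[ \phi_{K+1}(q,T) = \pr\!\left(\bigcap_{s=1}^{K+1}\{M'_s \geq 1\}\right) \leq \pr\!\left(\bigcap_{s=1}^K \{M'_s \geq 1\}\right) = \phi_K(q,T), \]
which is the desired monotonicity.

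The main technical hurdle is that this probabilistic picture only makes sense when $(K+1)q \leq 1$. Fortunately, in the context of Lemma \ref{lem:phiupperbd} the relevant $q$ is $q_{r^*,1} = p\pc_{r^*}/(1-p)$, which is of order $\new/K$, so for reasonable $\new$ (in particular $\new \leq 1$) we have $Kq \leq 1$ and the coupling argument applies directly. For a fully general proof one could instead proceed algebraically via Pascal's identity, which after a short manipulation yields the recursion
\[ \phi_{K+1}(q,T) - \phi_K(q,T) = -(1-q)^T \, \phi_K\!\left(\frac{q}{1-q},\, T\right), \]
and then induct on $K$; but for the uses made of this monotonicity in the present paper the probabilistic route is cleaner and entirely adequate.
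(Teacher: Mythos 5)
Your coupling argument is correct where it applies, and it is a genuinely different route from the paper's proof, which is purely algebraic: the paper expands $\phi_{K+1}(q,T)-\phi_K(q,T)$ via Pascal's identity and collapses the sum to the recursion $\phi_{K+1}(q,T)-\phi_K(q,T) = -(1-q)^T\phi_K\big(q/(1-q),T\big)$, then concludes nonpositivity (implicitly relying on $\phi_K\geq 0$). Your probabilistic reading -- $\phi_K(q,T)$ as the probability that all $K$ designated cells are covered in $T$ multinomial trials, with the $(K+1)$st cell merged into the residual cell -- makes the monotonicity transparent, and it is precisely the interpretation that justifies the nonnegativity the algebraic proof silently invokes. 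What you give up is generality: the lemma is stated for arbitrary fixed $q$, and the paper actually applies it inside the proof of Lemma \ref{lem:phidecT} at the inflated argument $q/(1-q)$, so your coupling needs $(K+1)q/(1-q)\leq 1$ there as well; this holds asymptotically when $q$ is of order $\nu/K$ with $\nu\leq 1$, as you note, but it is not automatic. One caution on your fallback: the Pascal recursion yields monotonicity only once one knows $\phi_K\big(q/(1-q),T\big)\geq 0$, which is a nonnegativity claim that ``inducting on $K$'' does not supply by itself -- in the general case you still need either the probabilistic interpretation (where valid) or a separate argument that $\phi_K\geq 0$.
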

 \begin{proof} We use Pascal's identity $\binom{K+1}{j} = \binom{K}{j} + \binom{K}{j-1}$ to expand
\begin{eqnarray}
\phi_{K+1}(q,T) - \phi_K(q,T) 
& = & \sum_{j=0}^{K+1} (-1)^j \left( \binom{K}{j} + \binom{K}{j-1}  -\binom{K}{j} \right) (1-j q)^T  \nonumber \\
& = & \sum_{j=1}^{K+1} (-1)^j \binom{K}{j-1} (1-j q)^T \label{eq:firstder} \\
& = & \sum_{\ell=0}^{K} (-1)^j \binom{K}{\ell} (1- (\ell+1) q)^T \nonumber \\
& = & - (1-q)^T \sum_{\ell=0}^{K} (-1)^\ell \binom{K}{\ell}  \left(1- \ell \frac{q}{1-q} \right)^T \nonumber \\
& = & - (1-q)^T \phi_{K}(q/(1-q),T) \leq 0, \nonumber
\end{eqnarray}
using the fact (as in \cite{aldridge2014})
that $(1- (\ell+1) q) = (1 -q)( 1- \ell q/(1-q))$.
 \end{proof}

Note that an alternate proof of this is to compare \eqref{eq:firstder} with terms in the proof of \cite[Lemma 32]{aldridge2014} to deduce that
$$ \phi_{K+1}(q,T) - \phi_K(q,T) = - \frac{1}{(K+1)(T+1)}
\frac{\partial}{\partial q} \phi_{K+1}(q,T+1) \leq 0.$$

The second technical result is the following:

 \begin{lemma} \label{lem:phidecT} The function $\phi_K(q, T)$ satisfies
\begin{equation} \label{eq:phimultbd}
\phi_K(q, T+1) \geq \phi_K(q,T) \left( 1+ K q (1-q)^T \right).
\end{equation}
\end{lemma}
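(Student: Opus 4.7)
The plan is to reduce the desired multiplicative inequality to a comparison between two values of $\phi$ and then close that comparison via the two monotonicity properties already in hand (Lemma \ref{lem:phidecK} together with the fact, noted just before the statement, that $\phi_K(q,T)$ is increasing in $q$ for fixed $K,T$).

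First I would perform a simple algebraic expansion. Writing $(1-jq)^{T+1} = (1-jq)^T - jq(1-jq)^T$ in the definition \eqref{eq:phidef} gives
\[
\phi_K(q,T+1) = \phi_K(q,T) - q\sum_{j=1}^K (-1)^j j\binom{K}{j}(1-jq)^T.
\]
Using the identity $j\binom{K}{j} = K\binom{K-1}{j-1}$ and shifting the index $\ell = j-1$, the correction sum becomes $-K \sum_{\ell=0}^{K-1}(-1)^\ell \binom{K-1}{\ell}(1-(\ell+1)q)^T$. Now I would apply exactly the factorisation used inside the proof of Lemma \ref{lem:phidecK}, namely $1-(\ell+1)q = (1-q)(1-\ell q/(1-q))$, to pull out $(1-q)^T$ and recognise the residual sum as $\phi_{K-1}(q/(1-q),T)$. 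The upshot is the key identity
\[
\phi_K(q,T+1) = \phi_K(q,T) + qK(1-q)^T \phi_{K-1}\bigl(q/(1-q),\,T\bigr).
\]

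With this identity in hand, the target inequality \eqref{eq:phimultbd} is equivalent to
\[
\phi_{K-1}\bigl(q/(1-q),\,T\bigr) \geq \phi_K(q,T).
\]
This I would prove by chaining two monotonicity facts. Because $q/(1-q) \geq q$ on $q \in [0,1)$, monotonicity of $\phi$ in its first argument (\cite[Lemma 32]{aldridge2014}) gives $\phi_{K-1}(q/(1-q),T) \geq \phi_{K-1}(q,T)$. Then Lemma \ref{lem:phidecK}, stating that $\phi$ is decreasing in $K$, yields $\phi_{K-1}(q,T) \geq \phi_K(q,T)$. Combining the two completes the proof.

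The main obstacle is purely bookkeeping in the algebraic step: getting the signs right when converting the shifted sum into $\phi_{K-1}$, and making sure the factorisation $1-(\ell+1)q = (1-q)(1-\ell q/(1-q))$ is applied in a way consistent with the definition of $\phi_{K-1}(q/(1-q),T)$. This is exactly the same manoeuvre already performed in the derivation of \eqref{eq:firstder} inside the proof of Lemma \ref{lem:phidecK}, so it can be carried out by direct transcription. Everything else is immediate from results the excerpt already provides.
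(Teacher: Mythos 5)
Your proposal is correct and is essentially the paper's own proof: the same expansion of $(1-jq)^{T+1}$, the identity $j\binom{K}{j}=K\binom{K-1}{j-1}$, and the factorisation $1-(\ell+1)q=(1-q)\bigl(1-\ell q/(1-q)\bigr)$ yield the identity $\phi_K(q,T+1)=\phi_K(q,T)+Kq(1-q)^T\phi_{K-1}\bigl(q/(1-q),T\bigr)$, after which both arguments invoke monotonicity in $q$ and in $K$. The only (immaterial) difference is the order in which you chain the two monotonicity facts.
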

\begin{proof}
As in \cite[Proof of Lemma 32]{aldridge2014},
since $j \binom{K}{j} = K \binom{K-1}{j-1}$ we know that
\begin{eqnarray*}
\lefteqn{\phi_K(q, T+1) - \phi_K(q,T) } \\
& = &
- \sum_{j=0}^{K} (-1)^{j} \binom{K}{j} (1 - j q)^{T} j q \\
& = & - K q
\sum_{j=1}^{K} (-1)^{j} \binom{K-1}{j-1} (1 - j q)^T \\
& = & K q (1-q)^T
\sum_{j=1}^{K} (-1)^{j-1} \binom{K-1}{j-1} \left(1 - (j-1) \frac{q}{1-q} \right)^T \\
& = & K q (1-q)^T  \phi_{K-1} \left( \frac{q}{1-q}, T \right),  \end{eqnarray*}
since $1- j q = (1-q) (1 - (j-1) q/(1-q))$. Now, using the fact that $\phi_K(q,T)$ is increasing in $q$ and decreasing in $K$, we know that
$$ \phi_{K-1} \left( \frac{q}{1-q}, T \right)
\geq \phi_K \left( \frac{q}{1-q}, T \right) \geq \phi_K(q,T),$$
and the result follows.
\end{proof}

We can now prove the bound on $\phi_K$.

\begin{proof}[Proof of Lemma \ref{lem:phiupperbd}]
Using Lemma \ref{lem:phidecT} on the terms of a collapsing product and the fact that $\phi_K(q,S) \leq 1$ for all $S$, we can consider $S \rightarrow \infty$ to write
\begin{align}
\phi_K(q,T) & = \phi_K(q,S+1) \prod_{t=T}^S \frac{\phi_K(q,t)}{\phi_K(q,t+1)} \nonumber \\
&\leq  \phi_K(q,S+1) \exp \left( - \sum_{\ell=T}^S
 \log(1 + K q(1-q)^\ell) \right) \nonumber \\
 &\leq \exp \left( - \sum_{\ell=T}^\infty \frac{K q(1-q)^\ell}{1 + K q(1-q)^\ell} \right), \label{eq:step1} \\
 &\leq \exp \left( - \int_{T}^\infty \frac{K q(1-q)^\ell}{1 + K q(1-q)^\ell} d\ell \right), \label{eq:step2} \\
 & = \exp \left( - \frac{\log(1 + K q(1-q)^T)}{-\log(1-q)}
\right) \label{eq:newphiUB} 
 \end{align}
 where a) \eqref{eq:step1} follows using the fact that $-\log(1+x) \leq -x/(1+x)$, b) \eqref{eq:step2} follows since
$f(x) := K q(1-q)^x/(1+ K q (1-q)^x)$ is decreasing in $x$ so we can bound $\sum_{\ell=T}^\infty f(\ell) 
 \geq  \int_{T}^\infty f(\ell) d\ell$, c) 
 \eqref{eq:newphiUB2}
 follows using the fact that (in the numerator) $\log(1+x) \geq x/(1+x)$
 and (in the denominator) $-\log(1-q) \leq q/(1-q)$.
 \end{proof}

\subsection{Lemmas for the algorithm-independent converse} \label{sec:algindlem}

This section consists of various lemmas used in the proof of Theorem \ref{thm:genconv}.

\begin{lemma} \label{lem:hdist}
For any $\ell$, the conditional distribution of $H_\ell$ is given by
$$H_\ell \mid \left\{ M_\ell = m_\ell, G_\ell = g_\ell \right\} 
\sim \bin(g_\ell, 1-(1-p)^{m_\ell}). $$
\end{lemma}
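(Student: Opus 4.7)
The plan is to exploit the independence structure of the Bernoulli design, together with the fact that the test outcomes $Y_t$ depend only on the defective items' entries $\{x_{tj} : j \in \KK\}$ and on $\vc{U}$, never on the entries $x_{ti}$ for non-defective $i$. Hence conditioning on the full profile of outcomes (which in particular fixes $M_\ell$ and the identities of which tests have each outcome) leaves the entries $\{x_{ti}\}_{i \notin \KK,\, t}$ mutually independent Bernoulli$(p)$.

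Next I would decompose the event $\{\mu_i = \ell\}$ for a non-defective item $i$ into two pieces:
\[ \{\mu_i = \ell\} = \{\mu_i \leq \ell\} \cap \bigl\{\exists\, t \text{ with } Y_t = \ell \text{ and } x_{ti} = 1\bigr\} . \]
The first event is controlled entirely by the entries $\{x_{ti} : Y_t > \ell\}$, while the second is controlled entirely by the entries $\{x_{ti} : Y_t = \ell\}$. Since these two sets of tests are disjoint, under the Bernoulli design these two events are independent (for each fixed non-defective $i$), and moreover they are independent across different non-defective items.

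Condition now on the outcome vector $\vc{Y}$ (so that in particular $M_\ell = m_\ell$ is fixed, along with the identities of the $m_\ell$ tests with outcome $\ell$). The event $\{G_\ell = g_\ell\}$ then specifies exactly which $g_\ell$ non-defective items do not appear in any test with outcome $>\ell$. For each such item $i$, the conditional probability of $\mu_i = \ell$ equals the probability that it appears in at least one of the $m_\ell$ designated tests, which is $1 - (1-p)^{m_\ell}$, and by the independence noted above, these $g_\ell$ events are mutually independent. Summing the corresponding indicators gives $H_\ell \sim \bin\bigl(g_\ell, 1-(1-p)^{m_\ell}\bigr)$, which only depends on $(m_\ell, g_\ell)$ and so persists after averaging out the finer conditioning on $\vc{Y}$.

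The one subtlety to handle carefully is the justification that conditioning on $G_\ell = g_\ell$ does not bias the Bernoulli entries in the $m_\ell$ tests of outcome exactly $\ell$: this is precisely the disjoint-supports observation above, and is the only non-routine step. Everything else is a direct binomial count.
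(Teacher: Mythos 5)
Your proof is correct and follows essentially the same route as the paper's: both arguments peel off the tests of outcome greater than $\ell$, note that the $g_\ell$ surviving non-defectives each independently hit at least one of the $m_\ell$ outcome-$\ell$ tests with probability $1-(1-p)^{m_\ell}$, and conclude the binomial law. The only difference is that you spell out the disjoint-supports justification for why conditioning on $\vc{Y}$ and on $G_\ell = g_\ell$ leaves the relevant Bernoulli entries unbiased, a point the paper's proof leaves implicit.
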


\begin{proof}
Suppose one has already examined the tests with outcomes $\infty, d, \dots, \ell+1$. There remain $G_\ell = g_\ell$ non-defective items which have not appeared in any of these tests. They will each independently contribute to $H_\ell$ unless they avoid appearing in each of the $m_\ell$ tests with outcome $\ell$. Because of the Bernoulli design of the matrix, this will happen with probability $1-(1-p)^{m_\ell}$. 
\end{proof}

\begin{lemma} \label{lem:new1}
The probability we have no non-defectives in $\PD(\ell)$ is bounded above by
\[ \pr(H_{r^*} = 0) \leq \pr(G_{r^*} =0) + \ep (1-p)^{M_{r^*}} . \]
\end{lemma}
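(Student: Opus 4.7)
The plan is a short case split on whether $G_{r^*}$ vanishes. Since $G_{r^*} = \sum_{j=0}^{r^*} H_j$ has $H_{r^*}$ as one of its non-negative summands, we have $H_{r^*} \leq G_{r^*}$ deterministically, so $\{G_{r^*} = 0\} \subseteq \{H_{r^*} = 0\}$. Decomposing
\[
\pr(H_{r^*} = 0) = \pr(H_{r^*} = 0,\, G_{r^*} = 0) + \pr(H_{r^*} = 0,\, G_{r^*} \geq 1),
\]
the first term collapses immediately to $\pr(G_{r^*} = 0)$, which gives the first piece of the bound.

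For the second term, I would condition on the joint value $(M_{r^*}, G_{r^*}) = (m,g)$ and invoke Lemma \ref{lem:hdist}, which says that $H_{r^*}$ is then distributed as $\bin(g, 1-(1-p)^m)$, so
\[
\pr(H_{r^*} = 0 \mid M_{r^*} = m,\, G_{r^*} = g) = (1-p)^{mg}.
\]
On the event $\{G_{r^*} \geq 1\}$ we have $g \geq 1$ and hence $(1-p)^{mg} \leq (1-p)^m$. Summing against the joint law of $(M_{r^*}, G_{r^*})$ and dropping the indicator $\mathbf{1}\{G_{r^*} \geq 1\}$ (which only enlarges a sum of non-negative terms) bounds this contribution by $\ep(1-p)^{M_{r^*}}$. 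Adding the two pieces yields the stated inequality.

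The entire argument is bookkeeping and I do not expect any real obstacle. The only small subtlety is to confirm that Lemma \ref{lem:hdist} remains applicable under the joint conditioning on $M_{r^*}$ and $G_{r^*}$, but this is immediate from the Bernoulli design: conditional on the identities of the outcome-$r^*$ tests and on the set of non-defectives that avoided every test of outcome strictly greater than $r^*$, those non-defectives are still independently included in each outcome-$r^*$ test with probability $p$, so the binomial structure survives.
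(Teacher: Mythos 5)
Your proposal is correct and follows essentially the same route as the paper: both apply Lemma \ref{lem:hdist} to get $\pr(H_{r^*}=0 \mid M_{r^*}=m, G_{r^*}=g) = (1-p)^{mg}$, split on $\{G_{r^*}=0\}$ versus $\{G_{r^*}\geq 1\}$, and use $(1-p)^{mg} \leq (1-p)^m$ for $g \geq 1$. The only cosmetic difference is that you justify the first piece via the deterministic containment $\{G_{r^*}=0\} \subseteq \{H_{r^*}=0\}$, whereas the paper simply notes that $(1-p)^{M_{r^*}\cdot 0}=1$ on that event; both yield exactly $\pr(G_{r^*}=0)$.
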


\begin{proof}
We know from Lemma \ref{lem:hdist} that
$H_\ell \mid \{M_\ell=m_\ell, G_\ell= g_\ell\} \sim \bin(g_\ell, 1- (1-p)^{m_\ell})$. Since $\pr( \bin(g,1-Q) = 0) = (1-Q)^g$ we know that for any $g_\ell$
\begin{align}
\pr(H_\ell = 0) &= \sum_{m_\ell, g_\ell} \pr(H_\ell = 0 \mid M_\ell=m_\ell, G_\ell= g_\ell) \, \pr(M_\ell=m_\ell, G_\ell= g_\ell) \nonumber \\
&= \sum_{m_\ell, g_\ell} (1-p)^{m_\ell g_\ell} \, \pr(M_\ell = \ell, G_\ell = g_\ell) \notag \\
&= \ep (1-p)^{M_\ell G_\ell} \label{eq:equiv} \\
&\leq \pr(G_\ell =0) + \ep (1-p)^{M_\ell} . \label{eq:hbd} 
\end{align}
Here, we got \eqref{eq:hbd} by considering separately the case where $G_\ell =0$, and by noting that if $G_\ell \geq 1$ then
$(1-p)^{M_\ell G_\ell} \leq (1-p)^{M_\ell}$.
\end{proof}

\begin{lemma} \label{lem:gbd} Writing $m_\ell^* = T \pc_\ell (1+\delta)$ we can bound
\begin{equation} \pr(G_\ell = 0) \leq \left( 1- (1-p)^{m_\ell^*} \right)^{N-K} + \exp \left( -\frac{ \delta^2 T \pc_\ell}{2} \right). \notag
\end{equation}
\end{lemma}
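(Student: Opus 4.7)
\textbf{Proof plan for Lemma \ref{lem:gbd}.}

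The natural approach is to condition on the random number $\ol{M}_\ell$ of tests whose outcome exceeds $\ell$, and then split on whether this quantity is typical or atypically large. By Lemma \ref{lem:gdist2} we know that $\ol{M}_\ell \sim \bin(T,\pc_\ell)$, with mean $T\pc_\ell$, so the threshold $m_\ell^* = T\pc_\ell(1+\delta)$ is a natural $(1+\delta)$-multiple of the mean. This is the reason the threshold has the particular form stated.

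First I would write
\begin{equation*}
\pr(G_\ell = 0) \leq \pr(G_\ell = 0,\; \ol{M}_\ell \leq m_\ell^*) + \pr(\ol{M}_\ell > m_\ell^*).
\end{equation*}
For the first term, use the conditional distribution $G_\ell \mid \{\ol{M}_\ell = m\} \sim \bin(N-K,(1-p)^m)$ from Lemma \ref{lem:gdist2}, so that
\begin{equation*}
\pr(G_\ell = 0 \mid \ol{M}_\ell = m) = (1 - (1-p)^m)^{N-K}.
\end{equation*}
This expression is increasing in $m$ (since $(1-p)^m$ decreases in $m$), and hence whenever $\ol{M}_\ell \leq m_\ell^*$ it is bounded above by $(1-(1-p)^{m_\ell^*})^{N-K}$. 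Combining with $\pr(\ol{M}_\ell \leq m_\ell^*) \leq 1$ gives the first summand in the target bound.

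For the second term, I would apply a standard multiplicative Chernoff bound to the binomial random variable $\ol{M}_\ell \sim \bin(T,\pc_\ell)$ with mean $\mu = T\pc_\ell$, namely
\begin{equation*}
\pr\!\big(\ol{M}_\ell \geq (1+\delta)\mu\big) \leq \exp\!\left(-\frac{\delta^2 \mu}{2}\right),
\end{equation*}
which yields exactly the second summand $\exp(-\delta^2 T\pc_\ell/2)$. There is no substantive obstacle here: the whole argument is a conditional decomposition plus one invocation of a textbook Chernoff estimate. The only thing worth being careful about is the direction of monotonicity in the first step (since large $\ol{M}_\ell$ \emph{helps} eliminate non-defectives and so \emph{increases} $\pr(G_\ell = 0)$), which is why it is the lower tail of $\ol{M}_\ell$ that contributes to the first summand and the upper tail that is separated off into the Chernoff term.
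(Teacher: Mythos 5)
Your proposal is correct and follows essentially the same route as the paper: condition on $\ol{M}_\ell$, use the monotonicity of $\pr(G_\ell = 0 \mid \ol{M}_\ell = m) = (1-(1-p)^m)^{N-K}$ in $m$ to handle the event $\{\ol{M}_\ell \leq m_\ell^*\}$, and control $\pr(\ol{M}_\ell > m_\ell^*)$ by a Chernoff bound on $\bin(T,\pc_\ell)$. Your remark about the direction of monotonicity is exactly the point the paper relies on, so there is nothing further to add.
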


\begin{proof}
Using Lemma \ref{lem:gdist2} we know $\pr(G_\ell =0 \mid \ol{M}_\ell = m) = \left( 1- (1-p)^m \right)^{N-K}$, which is increasing in $m$. Therefore we can write
\begin{align}
 \pr(G_\ell =0) 
&= \pr( G_\ell =0 \mid \ol{M}_\ell \leq m_\ell^*) \, \pr(\ol{M}_\ell = m_\ell^*) \nonumber \\ 
& \qquad {}+ \pr( G_\ell =0 \mid \ol{M}_\ell > m_\ell^*) \, \pr(\ol{M}_\ell > m_\ell^*)  \nonumber \\
&\leq \pr( G_\ell =0 \mid \ol{M}_\ell \leq m_\ell^*) +\pr(\ol{M}_\ell > m_\ell^*)  \nonumber \\
&\leq \pr( G_\ell =0 \mid \ol{M}_\ell = m_\ell^*) +\pr(\ol{M}_\ell > m_\ell^*)  \nonumber \\
 &\leq \left( 1- (1-p)^{m_\ell^*} \right)^{N-K}
 + \pr(\ol{M}_\ell > m_\ell^*). 
\end{align}
Recall that $\ol{M}_\ell \sim \bin(T, \pc_\ell)$, so $m_{\ell}^* = T \pc_\ell(1+\delta)$ is slightly bigger than the expectation, and we can bound the second term by Chernoff's inequality.
\end{proof}

\begin{lemma} \label{lem:new2}
We have the bound
\begin{equation}
\ep (1-p)^{M_{r^*}} \leq \exp( - p \pc_{r^*} d_{r^*} T) , \notag
\end{equation}
where $d_\ell = (1-p)^{-K_\ell} - 1$.
\end{lemma}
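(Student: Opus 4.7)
The plan is to compute $\ep (1-p)^{M_{r^*}}$ exactly using the fact that $M_{r^*}$ is binomial, and then to recognise that the resulting parameter coincides with $p \pc_{r^*} d_{r^*}$. By Lemma \ref{lem:mdist}, summing over the $M_{r^*,s}$ and $M_{r^*,+}$ components shows that $M_{r^*} \sim \bin(T, q_{r^*})$, where
\[ q_{r^*} = \prod_{t < r^*}(1-p)^{K_t}\bigl(1 - (1-p)^{K_{r^*}}\bigr). \]

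Since $(1-p) \in (0,1)$, the binomial moment generating function gives
\[ \ep (1-p)^{M_{r^*}} = \bigl(1 - q_{r^*} + q_{r^*}(1-p)\bigr)^T = (1 - p \, q_{r^*})^T , \]
and by $1-x \le e^{-x}$ this is at most $\exp(-p \, q_{r^*} T)$.

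It remains to verify the algebraic identity $q_{r^*} = \pc_{r^*} d_{r^*}$. Expanding the two factors of $\pc_{r^*} d_{r^*}$, one has
\[ \pc_{r^*} d_{r^*} = (1-p)^{\sum_{t \le r^*} K_t}\bigl((1-p)^{-K_{r^*}} - 1\bigr) = (1-p)^{\sum_{t < r^*} K_t} - (1-p)^{\sum_{t \le r^*} K_t}, \]
which after factoring $(1-p)^{\sum_{t < r^*} K_t}$ out of the right-hand side equals $q_{r^*}$. Substituting this identity back into the exponential bound yields the claim.

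There is no real obstacle here: the only thing to be careful about is the bookkeeping between $\pc_r$ (which involves the sum $\sum_{t \le r} K_t$, inclusive of $r$) and the exponent $\sum_{t < r^*} K_t$ appearing in the definition of $q_{r^*}$, and seeing how the factor $(1-p)^{-K_{r^*}} - 1$ accounts for the difference.
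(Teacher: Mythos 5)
Your proposal is correct and follows essentially the same route as the paper: identify $M_{r^*}$ as binomial, compute $\ep(1-p)^{M_{r^*}}$ exactly via the binomial theorem (equivalently the probability generating function), and apply $1-x\le \ee^{-x}$. The only difference is that you explicitly verify the identity $q_{r^*}=\pc_{r^*}d_{r^*}$, which the paper takes for granted when it states $M_\ell\sim\bin(T,\pc_\ell d_\ell)$; that bookkeeping is a welcome addition but not a different argument.
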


\begin{proof}
From Lemma \ref{lem:mdist} we know that $M_\ell \sim \bin(T, \pc_\ell d_\ell)$. Hence, using the binomial theorem:
\begin{align}
 \ep (1-p)^{M_\ell}  &= \sum_{m=0}^T \binom{T}{m} (\pc_\ell d_\ell)^m (1- \pc_\ell d_\ell)^{T-m} (1-p)^{m} \nonumber \\
&= \left(1- p \pc_\ell d_\ell \right)^T \notag \\
&\leq \exp( - p \pc_\ell d_\ell T) . \notag \qedhere
\end{align}
\end{proof}

\end{document}